\documentclass[a4paper, 12pt]{article}
\usepackage{amsfonts, amsthm, amsmath}
\usepackage{bm}
\usepackage[utf8]{inputenc}
\usepackage[top=2cm, left=2cm,right=2cm, bottom=3cm]{geometry}
\usepackage{amssymb}
\usepackage{graphicx}
\usepackage{xcolor}
\usepackage{cite}
\usepackage{forest}
\usepackage{float}
\usepackage{lscape}
\numberwithin{equation}{section}
\usepackage[colorlinks=true,
            linkcolor=blue,
            citecolor=blue,
            urlcolor=blue]{hyperref}

\newtheorem{theorem}{Theorem}
\newtheorem{proposition}[theorem]{Proposition}
\newtheorem{remark}{Remark}

\allowdisplaybreaks

\title{ {\bf
Superintegrability in the interaction of two particles with spin: First-order pseudo-scalar integrals of motion}}

\author{
        { \bf Fatih T\"{u}rkkan$^{1}$\,}\thanks{E-mail address:
       fatihturkkan@hacettepe.edu.tr}\,, \,
        {\bf O. O\u{g}ulcan Tuncer$^{2}$}\thanks{E-mail address:
        otuncer@hacettepe.edu.tr}\:\:\footnote{Corresponding Author}\,, \,
        { \bf \.{I}smet Yurdu\c{s}en$^{2}$}\thanks{E-mail address:
       yurdusen@hacettepe.edu.tr}
 \\
 \\ $^{1}$ Graduate School of Science and Engineering, Hacettepe University,
                    \\ 06800 Beytepe, Ankara, Turkey
  \\$^{2}$ Department of Mathematics, Hacettepe University,
                    \\ 06800 Beytepe, Ankara, Turkey}

\date{\today}
\newpage
\begin{document}
\setlength{\baselineskip}{24pt} 
\maketitle
\setlength{\baselineskip}{7mm}
\begin{abstract}

In recent work, we initiated a research program aimed at the systematic investigation of quantum superintegrable systems describing the interaction of two non-relativistic spin-$1/2$ particles in three-dimensional Euclidean space. In that study, we classified all such superintegrable systems admitting additional first-order scalar integrals of motion. In the present paper, we continue this program by focusing on systems that admit additional pseudo-scalar integrals of motion. Starting from the most general rotationally invariant Hamiltonian for two interacting spin-$1/2$ particles, we construct the most general first-order pseudo-scalar operator in the form of a matrix polynomial in the momenta. Imposing the commutativity of this operator with the Hamiltonian leads to a system of determining equations. By solving these equations, we obtain a complete classification of such superintegrable systems and determine the corresponding pseudo-scalar integrals of motion. The resulting classification provides new families of superintegrable systems with spin-dependent interactions. These systems enrich the class of integrable models relevant to nucleon--nucleon interactions and contribute to the broader program of classifying superintegrable quantum systems with spin. For selected cases, we further construct the associated polynomial symmetry algebras generated by the integrals of motion, providing additional insight into the algebraic structure of the systems.
\end{abstract}
Keywords: superintegrable systems, spin, pseudo-scalar integrals of motion, nucleon-nucleon interaction, gauge transformation, symmetry algebra \medskip \\
PACS numbers: 02.30.Ik, 03.65.-w, 11.30.-j, 13.75.Cs

\section{Introduction}

This paper is part of a research program devoted to the systematic study of integrability and superintegrability in quantum systems describing the interaction of two nonrelativistic particles with spin. More specifically, we consider two particles with spin $s=\frac12$ moving in the three-dimensional Euclidean space $E^3$. Such systems provide a natural framework for modeling interactions between particles such as nucleons and exhibit a rich structure due to the coupling between spin and orbital degrees of freedom.

Recall that, in classical mechanics, a Hamiltonian system with $n$ degrees of freedom is called \textit{integrable} if it admits $n$ functionally independent integrals of motion
in involution, including the Hamiltonian. An integrable system is called \textit{superintegrable} if it possesses additional functionally independent integrals of motion. More precisely, a superintegrable system may admit up to $n-1$ further integrals, so that the total number of functionally independent integrals is at most $2n-1$. If there is only one additional integral, the system is called \textit{minimally superintegrable}, if the total number of independent integrals is $2n-1$, it is called \textit{maximally
superintegrable}. The additional integrals are not required to be in involution,
either with each other or with the original commuting set, except of course that
they commute with the Hamiltonian. In the quantum case, the definitions are
analogous, but now integrals of motion are linear operators commuting with the
Hamiltonian, and functional independence is replaced by algebraic independence.

In a recent work \cite{TuncerYurdusen2025}, we initiated this program by classifying all superintegrable systems of two spin-$\frac12$ particles admitting additional first-order scalar integrals of motion. In that work, the most general rotationally invariant Hamiltonian for the interaction of two spin-$\frac12$ particles was constructed, and the determining equations for scalar integrals were derived and solved. The results revealed a wide class of admissible spin-dependent interaction potentials and demonstrated the richness 
of the symmetry structure of such systems.

The present paper continues this analysis. Here we classify superintegrable systems admitting additional first-order \emph{pseudo-scalar} integrals of motion. Pseudo-scalar operators arise naturally from combinations of spin, position and momentum variables that are invariant under rotations but change sign under spatial reflections. Their inclusion is therefore essential for a more complete understanding of the symmetry structure of rotationally invariant spin-dependent Hamiltonians.

The Hamiltonian considered throughout the paper is
\begin{align}
\label{eq:intro.H}
H&=
-\frac{\hbar^2}{2}\Delta
+V_0(r)
+\frac12 V_1(r)(\vec{\sigma}_1+\vec{\sigma}_2,\vec L)
+V_2(r)(\vec{\sigma}_1,\vec{\sigma}_2) 
+V_3(r)(\vec x,\vec{\sigma}_1)(\vec x,\vec{\sigma}_2)
 \notag\\
&\quad
+V_4(r)(\vec{\sigma}_1,\vec p)(\vec{\sigma}_2,\vec p)+\frac12 V_5(r)
\left[
(\vec{\sigma}_1,\vec L)(\vec{\sigma}_2,\vec L)
+
(\vec{\sigma}_2,\vec L)(\vec{\sigma}_1,\vec L)
\right],
\end{align}
where $r=|\vec x|$. Here $\vec x=(x_1,x_2,x_3)$ is the relative position,
\[
\vec p=(p_1,p_2,p_3),\qquad 
p_k=-i\hbar\frac{\partial}{\partial x_k},
\]
is the relative momentum, and
\[
\vec L=\vec x\times\vec p
\]
is the angular momentum. We use $(\cdot,\cdot)$ for the standard 
Euclidean inner product. The Pauli vectors corresponding to the two particles 
are
\[
\vec{\sigma}_1=\vec{\sigma}\otimes I_2,\qquad
\vec{\sigma}_2=I_2\otimes\vec{\sigma},
\]
so that the Hamiltonian acts on four-component spinors. The coefficient functions $V_i(r)$, $i=0,\ldots,5$, are real functions of $r$. The form \eqref{eq:intro.H} is obtained by imposing translational, Galilean, permutation, rotational, reflection, time-reversal and Hermiticity conditions on the two-spin interaction \cite{OM, TuncerYurdusen2025}. 

The total angular momentum is
\[
\vec J=\vec L+\frac{\hbar}{2}(\vec{\sigma}_1+\vec{\sigma}_2).
\]
Since the Hamiltonian \eqref{eq:intro.H} is rotationally invariant, it commutes 
with $\vec{J}^2$ and $J_3$. Thus the system is integrable by construction, with 
$H,\vec{J}^2,J_3$ forming a commuting set. The existence of any further algebraically 
independent integral of motion makes the system superintegrable.

For natural Hamiltonians of the form
\begin{equation}
\label{eq:intro.spinless}
H=-\frac{\hbar^2}{2}\Delta+V_0(\vec{x}),
\end{equation}
the systematic study of classical and quantum superintegrable systems with
integrals that are polynomial in the momenta goes back to the 1960s
\cite{Fris,Makarov}. First-order integrals are associated with geometrical
symmetries of the potential, while second-order integrals are closely related
to separation of variables in the Schr\"odinger equation, or in the
Hamilton--Jacobi equation in the classical case
\cite{Makarov,Fris,Evans.a,Evans.b}. In the spherically symmetric
case, the most familiar examples are the Kepler--Coulomb potential and the
isotropic harmonic oscillator. These systems are also distinguished by
Bertrand's theorem, since they are the only spherically symmetric potentials
for which all bounded classical trajectories are closed \cite{Bertrand}.
The study of superintegrability has since been extended in many directions,
including systems in spaces of constant and nonconstant curvature
\cite{Grosche2,Kalnins.d,Kalnins.f}, higher-dimensional systems
\cite{Rodriguez,Kalnins.h,Kalnins.i}, and systems with higher-order integrals
of motion \cite{Gravel.a,Tremblay.b,Marquette.b,Tremblay.c,
PostWinternitz:2015,MSW,EWY}. We refer to \cite{MillerPostWinternitz:2013}
for a comprehensive review.

Another important generalization of \eqref{eq:intro.spinless} is obtained by
allowing velocity-dependent or momentum-dependent interaction terms. This
includes, for instance, integrable and superintegrable systems describing a
particle moving in a magnetic field \cite{Dorizzi,Berube,Libor2,
Libor4,Libor5,Libor6,EM2026}. Spin-dependent Hamiltonians provide another
natural extension of the spinless theory. The systematic classification of
superintegrable systems involving two nonrelativistic particles, one with
spin $s=\frac12$ and the other with spin $s=0$, was initiated in
\cite{Winternitz.c} for motion in the Euclidean plane $E^2$. The same
problem was later studied in $E^3$, first for first-order integrals of
motion \cite{wy3}, and then for second-order integrals \cite{DWY,YTW}. In
those works the Hamiltonian
\begin{equation}
\label{eq:intro.one_spin}
H=-\frac{\hbar^2}{2}\Delta+V_0(\vec{x})
+\frac12\left\{V_1(\vec{x}),(\vec{\sigma},\vec L)\right\}
\end{equation}
was considered, where the second term represents a spin-orbit interaction.
For spherically symmetric potentials, the first- and second-order scalar,
pseudo-scalar, vector, axial-vector, tensor and pseudo-tensor integrals of
motion were classified. Related studies of superintegrability with spin also
include systems in which a spin particle interacts with an external field,
such as an electromagnetic field \cite{Pronko.b,Nikitin.e,Nikitin.f},
as well as systems involving a spin-$\frac12$ particle interacting with a
dyon \cite{DHoker} or with self-dual monopoles \cite{Feher}.

In this paper, our goal is to determine all choices of the potentials $V_i(r)$ in 
\eqref{eq:intro.H} for which the Hamiltonian admits an additional first-order 
pseudo-scalar integral of motion. To this end, we construct the most general 
Hermitian pseudo-scalar operator $X$, at most linear in the momenta, and impose
\[
[H,X]=0.
\]
This commutation condition gives an overdetermined system of differential 
equations for the radial functions appearing both in the Hamiltonian and in the 
operator $X$. Solving this system yields the admissible potentials and the 
corresponding pseudo-scalar integrals of motion.

As in the scalar case \cite{TuncerYurdusen2025}, one must also take into account 
potentials that are induced by gauge transformations of a scalar Hamiltonian. 
Such potentials do not represent genuinely new spin-dependent interactions, 
because they are obtained from spin-independent systems by unitary 
transformations. They should therefore be excluded from the classification analysis. Nevertheless, the gauge-induced Hamiltonians have their 
own symmetry structures, and these structures are useful for understanding the 
role of the gauge transformation.

For the two-spin Hamiltonian \eqref{eq:intro.H}, the gauge-induced potentials are
\[
V_1=\frac{2\hbar}{r^2},\qquad
V_2=\frac{\hbar^2}{r^2},\qquad
V_3=-\frac{\hbar^2}{r^4},\qquad
V_4=V_5=0,
\]
with $V_0=V_0(r)$ arbitrary. These potentials are obtained from a scalar 
Hamiltonian by a unitary gauge transformation. Consequently, the corresponding 
integrals of motion are obtained by applying the same transformation to the 
integrals of the scalar Hamiltonian. In this way one obtains
\[
J_i=L_i+\hbar S_i,
\qquad
\mathcal S_i=-\hbar S_i+\frac{2\hbar}{r^2}x_i(\vec S,\vec x),
\]
where
\[
\vec S=\frac12(\vec{\sigma}_1+\vec{\sigma}_2).
\]
They satisfy
\[
[J_i,J_j]=i\hbar\varepsilon_{ijk}J_k,\qquad
[\mathcal S_i,\mathcal S_j]=i\hbar\varepsilon_{ijk}\mathcal S_k,\qquad
[J_i,\mathcal S_j]=i\hbar\varepsilon_{ijk}\mathcal S_k.
\]
Equivalently, the resulting six-dimensional Lie algebra is isomorphic to
\[
\mathfrak{o}(3)\oplus\mathfrak{o}(3).
\]

In the special gauge-induced case corresponding to the free scalar Hamiltonian, 
one obtains in addition the transformed momentum integrals
\[
\mathcal P_i
=
p_i-\frac{2\hbar}{r^2}\varepsilon_{ikl}x_kS_l.
\]
The nontrivial commutation relations are then
\[
[J_i-\mathcal S_i,J_j-\mathcal S_j]
=
i\hbar\varepsilon_{ijk}(J_k-\mathcal S_k),
\]
\[
[J_i-\mathcal S_i,\mathcal P_j]
=
i\hbar\varepsilon_{ijk}\mathcal P_k,
\qquad
[\mathcal P_i,\mathcal P_j]=0,
\]
while
\[
[J_i-\mathcal S_i,\mathcal S_j]=0,
\qquad
[\mathcal P_i,\mathcal S_j]=0.
\]
Thus the corresponding nine-dimensional Lie algebra is
\[
\mathfrak e(3)\oplus\mathfrak o(3).
\]

In addition to the classification itself, we also investigate the symmetry 
algebras generated by the pseudo-scalar integrals for selected systems. These 
algebras provide a finer description of the superintegrable structure. In some 
cases the pseudo-scalar integrals generate finite polynomial algebras together 
with the spin-exchange operator
\[
K=(\vec{\sigma}_1,\vec{\sigma}_2),
\]
which is a trivial scalar integral of motion, 
and the rotational invariants $\vec{J}^2,J_3$. In other cases the algebra is 
essentially abelian with a quadratic constraint, or reduces to a Lie-type 
algebra after fixing the central elements. 

The paper is organized as follows. In Section~2 we construct the most general 
first-order pseudo-scalar operator and derive the determining equations coming 
from the commutation condition $[H,X]=0$. Solving these equations gives the 
complete classification of superintegrable systems admitting pseudo-scalar 
integrals of motion. In Section~3 we study the symmetry algebras generated by 
the pseudo-scalar integrals for several representative cases. Section~4 contains 
concluding remarks and possible directions for future work.

\section{The pseudo-scalar integrals of motion}\label{section2}
We begin by constructing pseudo-scalar operators in the direct product space generated by the vector $(\vec{x},\ \vec{p},\ \vec{\sigma}_1,\ \vec{\sigma}_2)$. 
From these basic quantities, one can form ten linearly independent vectorial directions, namely
	$$
	(\vec{x}, \quad \vec{p}, \quad \vec{L}, \quad \vec{\sigma}_1, \quad \vec{\sigma}_2, \quad \vec{\sigma}_1\times \vec{x}, \quad \vec{\sigma}_2\times \vec{x}, \quad \vec{\sigma}_1\times \vec{p}, \quad \vec{\sigma}_2\times \vec{p}, \quad \vec{\sigma}_1\times \vec{\sigma}_2) .
	$$
We note that first-order pseudo-scalar integrals of motion may involve arbitrary functions of $\vec{x}$, while remaining at most linear in $\vec{p}$ and in the spin operators $\vec{\sigma}_1$, $\vec{\sigma}_2$. Under these constraints, the complete set of independent pseudo-scalar structures can be written as
		\begin{align*}
	P_1=&(\vec{x},\vec{\sigma}_1), \quad P_2=(\vec{x},\vec{\sigma}_2), \quad P_3=(\vec{\sigma}_1,\vec{p}), \quad P_4=(\vec{\sigma}_2,\vec{p}), \quad P_5=(\vec{x}, \vec{p})(\vec{x},\vec{\sigma}_1), \\
    P_6=&(\vec{x},\vec{p})(\vec{x},\vec{\sigma}_2),\quad P_7=(\vec{x},\vec{\sigma}_1\times\vec{\sigma}_2), \quad 
	P_8=(\vec{\sigma}_1\times\vec{\sigma}_2,\vec{p}), \quad P_9=(\vec{x},\vec{p})(\vec{x},\vec{\sigma}_1\times\vec{\sigma}_2),  \\
    P_{10}=&(\vec{x},\vec{\sigma}_1)(\vec{x},\vec{\sigma}_2\times\vec{p}),\quad P_{11}=(\vec{x},\vec{\sigma}_2)(\vec{x},\vec{\sigma}_1\times\vec{p}) .   
	\end{align*}
The most general pseudo-scalar operator is then obtained as a linear combination of these basic structures with scalar coefficient functions depending on $r = \|\vec{x}\|$:
	$$
	Y_{p}=\sum_{j=1}^{11}f_{j}(r)P_j,
	$$	
where each $f_j(r)$ is assumed to be a real-valued function.

In order to analyze the commutation relations, it is necessary to express $Y_p$ in its fully symmetrized form. This is achieved by symmetrizing each individual term $f_j(r) P_j$ separately and subsequently summing the results. The symmetrization procedure is implemented using a Mathematica routine introduced in \cite{YT}, which generates the completely symmetrized expressions in index notation by systematically accounting for all operator permutations.
\begin{align*}
    Y_{p} &= \Bigg( f_{1}(r) - i\hbar \bigg[ \frac{f'_{3}(r)}{2r} + \Big( 2f_{5}(r) + \frac{r f'_{5}(r)}{2} \Big) \bigg] \Bigg) (\vec{\sigma}_1, \vec{x}) + f_{3}(r) (\vec{\sigma}_1, \vec{p}) + f_{5}(r) ( \vec{\sigma}_1, \vec{x}) (\vec{x}, \vec{p}) \\
    &+ \Bigg( f_{2}(r) - i\hbar \bigg[ \frac{f'_{4}(r)}{2r} + \Big( 2f_{6}(r) + \frac{r f'_{6}(r)}{2} \Big) \bigg] \Bigg) (\vec{\sigma}_2, \vec{x}) + f_{4}(r) (\vec{\sigma}_2, \vec{p}) + f_{6}(r) (\vec{\sigma}_2, \vec{x}) (\vec{x}, \vec{p}) \\
    &+ \Bigg( f_{7}(r) - i\hbar \bigg[ \frac{f'_{8}(r)}{2r} + \Big( 2f_{9}(r) + \frac{r f'_{9}(r)}{2} \Big) + \frac{f_{10}(r)}{2} - \frac{f_{11}(r)}{2} \bigg] \Bigg) (\vec{x}, \vec{\sigma}_1 \times \vec{\sigma}_2) \\
    &+ f_{8}(r) (\vec{\sigma}_1 \times \vec{\sigma}_2, \vec{p}) + f_{9}(r) (\vec{x}, \vec{\sigma}_1 \times \vec{\sigma}_2) (\vec{x}, \vec{p}) \\
    &- f_{10}(r) (\vec{\sigma}_1, \vec{x})(\vec{\sigma}_2, \vec{L}) - f_{11}(r) (\vec{\sigma}_2, \vec{x})(\vec{\sigma}_1, \vec{L}) .
\end{align*}
We now proceed to determine the conditions under which the most general symmetrized pseudo-scalar operator commutes with the Hamiltonian. More precisely, we seek those choices of the coefficient functions $f_j(r)$, $j=1,\dots,11$, and the potentials $V_i(r)$, $i=0,\dots,5$, for which $Y_p$ becomes an integral of motion.

To achieve this, we impose the commutation relation $[H, Y_p] = 0$, which leads to a system of determining equations. Solving this system enables us to identify both the admissible forms of the potentials $V_i$ and the corresponding functions $f_j$.

We recall that \(H\) and \(Y_p\) are \(4\times 4\) matrix differential operators acting on a four-component spinor space. Hence \([H,Y_p]\) is also a \(4\times 4\) matrix differential operator, and the condition \([H,Y_p]=0\) is equivalent to the vanishing of its sixteen entries. This gives a large overdetermined system of determining equations, many of which are dependent or redundant. We organize these equations according to the order of derivatives appearing in the commutator. The highest-order equations are displayed in the main text, since they determine the main branches of the analysis, while the reduced lower-order systems are collected in Appendix~\ref{appendix-determining-equations}.

It can be verified that the commutator \([H,Y_p]\) contains derivatives up to third order. Setting the coefficients of all third-order derivative terms equal to zero gives the first subset of determining equations:
\begin{align}
    &(f_{10}+f_{11})V_{4}=0, \quad (f_{9}+f_{11})V_{4}=0, \quad (f_{10}-f_{9})V_{4}=0, \quad (f_{5}+f_{6})V_{4}=0, \quad (f_{5}-f_{6})V_{4}=0, \nonumber \\
    &(f_{3}+f_{4})V_{5}=0, \quad (f_{3}-f_{4})V_{5}=0, \quad (f_{5}+f_{6})V_{5}=0, \quad (f_{5}-f_{6})V_{5}=0, \quad  (f_{10}+f_{11})V_{5}=0, \nonumber \\
    &(f_{10}-f_{9})V_{5}=0,\quad(f_{11}+f_{9})V_{5}=0,\quad(f_{8}+r^{2}f_{9})V_{5}=0,\quad(f_{8}+r^{2}f_{10})V_{5}=0,   \nonumber \\
    &(f_{3}+r^{2}f_{6})V_{5}=0, \quad (f_{8}-r^{2}f_{11})V_{5}=0,\quad(r^{2}f_{5}-f_{3})V_{5}=0, \nonumber \\
    & (f_{4}+r^{2}f_{5})V_{5}=0, \quad (f_{4}-r^{2}f_{6})V_{5}=0. 
\end{align}
These equations suggest the following four cases:
\begin{align*}
    &\textbf{Case 1.}\:V_{4}=0,\, V_{5}=0, \quad
\textbf{Case 2.}\:V_{4}=0,\, V_{5}\neq 0,\\
& \textbf{Case 3.}\:V_{4}\neq 0,\, V_{5}=0, \quad
\textbf{Case 4.}\:V_{4}\neq 0,\, V_{5}\neq 0.
\end{align*}
We analyze these cases separately. In each case, the main text gives the solution of the determining equations and the resulting admissible potentials and pseudo-scalar integrals. The reduced lower-order determining equations used in the analysis are listed in Appendix~\ref{appendix-determining-equations}.

We first consider Case~1.

\subsubsection*{Case 1. $V_{4}=0,\, V_{5}=0$.}

For $V_4=V_5=0$, all determining equations obtained from the third-order
derivative terms are identically satisfied. We therefore proceed to the
determining equations coming from the second-order derivative terms. After
removing redundant equations, the reduced second-order system consists of
$24$ independent equations, listed in Appendix~\ref{appendix-determining-equations}.
These equations impose further restrictions on the potentials $V_i$ and on
the coefficient functions $f_j$. Once the second-order system is solved, the
resulting expressions are substituted into the remaining first- and zeroth-order
determining equations, which are then solved in each subcase.

From the second-order determining equation \eqref{eq:2.17}, we obtain
\[
V_1=\frac{\hbar}{r^2}
\qquad\text{or}\qquad
f_{10}=-f_{11}.
\]
We first examine the branch $V_1=\frac{\hbar}{r^2}$.

\paragraph{Subcase 1.} $V_1=\frac{\hbar}{r^2}$.

We shall now start to solve these determining equations from (\ref{eq:2.2}) to (\ref{eq:2.25}) by introducing this condition. First, from equations (\ref{eq:2.3}), (\ref{eq:2.10}) and (\ref{eq:2.15}) we get, respectively,
\begin{equation}
    \label{eq:2.26}
    f_{10}=\frac{c_1}{r}-f_{11}, \quad f_6=\frac{c_2}{r}, \quad  f_{9}=-\frac{f_8}{r^2},
\end{equation}
where $c_1$ and $c_2$ are real constants. In the following steps, $c_i(i=1,2,3,...)$ will be considered as real constants. 

Equation~(\ref{eq:2.6}) reduces to
\begin{equation}
	r f_6 + f_4' = 0,
\end{equation}
which implies
\begin{equation}
	f_4' = -r f_6.
\end{equation}
Using this together with~(\ref{eq:2.26}), we obtain
\begin{equation}
	f_4 = -r c_2 + c_3.
\end{equation}
Equation~(\ref{eq:2.24}) gives
\begin{equation}
	f_3' - \frac{f_3}{r} = 0 \;\;\Rightarrow\;\; f_3 = r c_4,
\end{equation}
and from~(\ref{eq:2.14}) it follows that
\begin{equation}
	f_5 = -\frac{f_3}{r^2} = -\frac{c_4}{r}.
\end{equation}

Substituting the expressions for $f_3, f_4, f_5, f_6, f_9$ and $f_{10}$ into~(\ref{eq:2.2})--(\ref{eq:2.25}) and removing dependent equations yields $c_3=0$ and
\begin{equation}
	r^2 f_{11} + f_8 + r^3 f_{11}' - r f_8' = 0,
\end{equation}
from which
\begin{equation}
	f_{11} = \frac{c_5}{r} + \frac{f_8}{r^2}.
\end{equation}

Thus, all second-order determining equations are satisfied. The remaining first- and zeroth-order determining  equations are treated by substitution and elimination of redundant relations. From equation \eqref{eq:feq_93}, it follows that
\begin{equation}
	f_7 + r f_7' = 0,  \;\;\Rightarrow\;\; f_7 = \frac{c_6}{r}.
\end{equation}
Equation \eqref{eq:feq_29} reads
\begin{equation}
    \label{eq:2.27}
	f_1 + f_2 + r(f_1' + f_2') = 0,
\end{equation}
which gives
\begin{equation}
    \label{eq:2.28}
	f_1 = \frac{c_7}{r} - f_2.
\end{equation}
Substituting \eqref{eq:2.28}  into \eqref{eq:2.27}, we obtain
\begin{equation}
	f_2 + r f_2' = 0,
\end{equation}
it follows that
\begin{equation}
	f_2 = \frac{c_8}{r}.
\end{equation}
Equation \eqref{eq:feq_4} reduces to
\begin{equation}
	(c_2 - c_4)V_3 = 0.
\end{equation}
This relation implies either $V_3 = 0$ or $c_2 = c_4$. We first examine the case $V_3 = 0$.

\paragraph{SI.} $V_3 = 0$.

\noindent From equations \eqref{eq:feq_8}, \eqref{eq:feq_65}, and \eqref{eq:feq_95}, we obtain respectively
\begin{equation}
	\label{eq:2.40}
	(c_2 + c_4)(\hbar^2 - 4r^2 V_2) = 0, \quad
	(c_1 - 2c_5)(\hbar^2 - 4r^2 V_2) = 0, \quad
	c_6(\hbar^2 - 4r^2 V_2) = 0.
\end{equation}
This implies that either $V_2=\frac{\hbar^2}{4r^2}$ or $c_4=-c_2$,\,\,\,$c_1=2c_5$,\,\,\,$c_6=0$. First we will examine case $V_2=\frac{\hbar^2}{4r^2}$.

\paragraph{I.} $V_2 = \dfrac{\hbar^2}{4r^2}$

\noindent In this case, all determining equations arising from the coefficients of the first- and zeroth-order terms are automatically satisfied. Hence, the full set of determining equations is satisfied, and the scalar potential $V_0$ remains arbitrary. We thus obtain seven constants of motion corresponding to the parameters $c_1, c_2, c_4, c_5, c_6, c_7,$ and $c_8$, given by
\begin{align}
	\label{eq:2.41}
	&Y_{P}^{1} = -\frac{1}{r} (\vec{\sigma}_1 , \vec{x})(\vec{\sigma}_2 , \vec{L}) + \frac{i\hbar}{2r} (\vec{x} , \vec{\sigma}_1 \times \vec{\sigma}_2),\\
	\label{eq:2.42}
	&Y_{P}^{2} = \frac{1}{r} (\vec{\sigma}_2 , \vec{x})(\vec{x} , \vec{p}) - r (\vec{\sigma}_2 , \vec{p}) - \frac{i\hbar}{r} (\vec{\sigma}_2 , \vec{x}),\\
	\label{eq:2.43}
	&Y_{P}^{3} = r (\vec{\sigma}_1 , \vec{p}) - \frac{1}{r} (\vec{\sigma}_1 , \vec{x})(\vec{x} , \vec{p}) + \frac{i\hbar}{r} (\vec{\sigma}_1 , \vec{x}),\\
	\label{eq:2.44}
	&Y_{P}^{4} = -\frac{1}{r} \left[ (\vec{\sigma}_2 , \vec{x})(\vec{\sigma}_1 , \vec{L}) - (\vec{\sigma}_1 , \vec{x})(\vec{\sigma}_2 , \vec{L}) + i\hbar (\vec{x} , \vec{\sigma}_1 \times \vec{\sigma}_2) \right],\\
	\label{eq:2.45}
	&Y_{P}^{5} = \frac{(\vec{x}, \vec{\sigma}_1 \times \vec{\sigma}_2)}{r},\\
	\label{eq:2.46}
	&Y_{P}^{6} = \frac{(\vec{x}, \vec{\sigma}_1)}{r},\\
	\label{eq:2.47}
	&Y_{P}^{7} = \frac{(\vec{x}, \vec{\sigma}_2 - \vec{\sigma}_1)}{r}.
\end{align}

\paragraph{II.} $c_4 = -c_2$, \quad $c_1 = 2c_5$, \quad $c_6 = 0$, \quad $\left( V_2 \neq \frac{\hbar^2}{4r^2} \right)\,.$

\noindent Equation \eqref{eq:feq_11} reduces to
\begin{equation}
	\label{eq:2.48}
	(c_7 - 2c_8)(\hbar^2 - 4r^2 V_2) = 0.
\end{equation}
Since $V_2 \neq \frac{\hbar^2}{4r^2}$, it follows from \eqref{eq:2.48} that $c_7 = 2c_8$. In this case, all determining equations arising from the coefficients of the first- and zeroth-order terms are satisfied. Both potentials $V_0$ and $V_2$ remain arbitrary. We are thus left with three constants of motion corresponding to the parameters $c_2$, $c_5$, and $c_8$, given by
\begin{align}
	\label{eq:2.49}
	&Y_{P}^{8} = \frac{1}{r} (\vec{x}, \vec{\sigma}_1 + \vec{\sigma}_2), \\
	\label{eq:2.50}
	&Y_{P}^{9} = -\frac{1}{r} \left[ (\vec{\sigma}_1 , \vec{x})(\vec{\sigma}_2 , \vec{L}) + (\vec{\sigma}_2 , \vec{x})(\vec{\sigma}_1 , \vec{L}) \right], \\
	\label{eq:2.51}
	&Y_{P}^{10} = \frac{1}{r} (\vec{x}, \vec{\sigma}_1 + \vec{\sigma}_2)(\vec{x}, \vec{p}) - r (\vec{\sigma}_1 + \vec{\sigma}_2, \vec{p}) - \frac{i\hbar}{r} (\vec{x}, \vec{\sigma}_1 + \vec{\sigma}_2).
\end{align}

\paragraph{S2.} $c_4 = c_2$, \quad $(V_3 \neq 0)\,.$

\noindent From \eqref{eq:feq_8} we obtain $c_2(\hbar^2 - 4r^2 V_2 - 2r^4 V_3) = 0$. The relation implies either $c_2 = 0$, or
\begin{equation}
	\label{eq:2.52}
	V_3 = \frac{\hbar^2 - 4r^2 V_2}{2r^4}.
\end{equation}
We first consider the case $c_2 = 0$. Since $c_4 = c_2$, it follows that $c_4 = 0$.

\paragraph{I.} $c_2 = 0$, \quad $c_4 = 0$, \quad $\left( V_3 \neq \frac{\hbar^2 - 4r^2 V_2}{2r^4} \right)\,.$

\noindent Substituting these conditions into the determining equations yields $c_1 = 0$. Equation \eqref{eq:feq_12} reads
\begin{equation}
	\label{eq:2.53}
	c_6\left( \frac{\hbar^2}{r^2} - 4V_2 \right) = 0.
\end{equation}
From \eqref{eq:2.53}, we obtain either $c_6 = 0$ or $V_2 = \frac{\hbar^2}{4r^2}$. We first consider the case $V_2 = \frac{\hbar^2}{4r^2}$.

\paragraph{A.} $V_2 = \dfrac{\hbar^2}{4r^2}\,.$

\noindent Substituting this condition into the determining equations yields $c_5 = 0$. Then all determining equations arising from the coefficients of the first- and zeroth-order terms are satisfied. The potentials $V_0$ and $V_3$ remain arbitrary, and we are left with three constants $c_6$, $c_7$, and $c_8$. The corresponding integrals of motion coincide with previously obtained ones: the integral associated with $c_6$ is given by \eqref{eq:2.45}, that associated with $c_7$ by \eqref{eq:2.46}, and that associated with $c_8$ by \eqref{eq:2.47}. Hence, no new integrals of motion arise in this case.

\paragraph{B.} $c_6 = 0$, \quad $\left( V_2 \neq \frac{\hbar^2}{4r^2} \right)\,.$

\noindent Substituting these conditions into the determining equations yields $c_5 = 0$. Equation \eqref{eq:feq_11} reduces to
\begin{equation}
	\label{eq:2.54}
	(c_7 - 2c_8)(\hbar^2 - 4r^2 V_2) = 0.
\end{equation}
Since $V_2 \neq \frac{\hbar^2}{4r^2}$, it follows from \eqref{eq:2.54} that $c_7 = 2c_8.$ Thus, all determining equations are satisfied. The potentials $V_0$, $V_2$, and $V_3$ remain arbitrary, and only one constant $c_8$ remains. The corresponding integral of motion coincides with \eqref{eq:2.49}. Therefore, no new integral of motion is obtained in this case.

\paragraph{II.} $V_3 = \dfrac{\hbar^2 - 4r^2 V_2}{2r^4}\,.$

\noindent From \eqref{eq:feq_98} and \eqref{eq:feq_12}, we obtain 
\begin{equation}
	\label{eq:2.55}
	c_1(\hbar^2 - 4r^2 V_2) = 0,
\end{equation}
\begin{equation}
	\label{eq:2.56}
	(c_6 - c_2 \hbar)(\hbar^2 - 4r^2 V_2) = 0.
\end{equation}
If $V_2 = \frac{\hbar^2}{4r^2}$, then $V_3 = 0$, which contradicts the present assumption $V_3 \neq 0$. Therefore, from \eqref{eq:2.55} and \eqref{eq:2.56} we obtain $c_1 = 0, c_6 = c_2 \hbar$. Equation \eqref{eq:feq_11} reads
\begin{equation}
	\label{eq:2.57}
	\Big(c_7 - 2\big(c_8 + c_5 \hbar\big)\Big)(\hbar^2 - 4r^2 V_2) = 0.
\end{equation}
Since $V_2 \neq \frac{\hbar^2}{4r^2}$, it follows from \eqref{eq:2.57} that $c_7 = 2\big(c_8 + c_5 \hbar\big)$. In this case, all determining equations arising from the coefficients of the first- and zeroth-order terms are satisfied. The potentials $V_0$ and $V_2$ remain arbitrary, and we are left with three constants. The corresponding integrals of motion for $c_2$ and $c_5$ are given, respectively, by
\begin{align}
	\label{eq:2.58}
	&Y_{P}^{11} = \frac{1}{r} \Big[
	r^2 (\vec{\sigma}_1 - \vec{\sigma}_2, \vec{p}) - (\vec{x}, \vec{\sigma}_1 - \vec{\sigma}_2)(\vec{x}, \vec{p}) + i\hbar (\vec{x}, \vec{\sigma}_1 - \vec{\sigma}_2) + \hbar (\vec{x}, \vec{\sigma}_1 \times \vec{\sigma}_2)
	\Big],\\
	\label{eq:2.59}
	&Y_{P}^{12} = \frac{1}{r} \Big[
	2\hbar (\vec{\sigma}_1, \vec{x}) - i\hbar (\vec{x}, \vec{\sigma}_1 \times \vec{\sigma}_2)
	+ (\vec{\sigma}_1,\vec{x})(\vec{\sigma}_2, \vec{L})
	- (\vec{\sigma}_2,\vec{x})(\vec{\sigma}_1, \vec{L})
	\Big].
\end{align}
The integral of motion corresponding to $c_8$ coincides with \eqref{eq:2.49}.

\paragraph{Subcase 2.} $f_{11} = -f_{10}\,.$

\noindent From \eqref{eq:2.7} and \eqref{eq:2.14}, we obtain $f_6$ and $f_5$ as
\begin{equation}
	\label{eq:2.60}
	f_6 = -\frac{f_4 V_1}{\hbar}, \qquad
	f_5 = -\frac{f_3 V_1}{\hbar}.
\end{equation}

Equation \eqref{eq:2.24} reduces to
\begin{equation}
	\label{eq:2.61}
	r f_3 V_1 \big(2\hbar - r^2 V_1\big) - \hbar^2 f_3' = 0,
\end{equation}
from which we obtain
\begin{equation}
	\label{eq:2.62}
	f_3' = \frac{r f_3 V_1 \big(2\hbar - r^2 V_1\big)}{\hbar^2}.
\end{equation}

Substituting \eqref{eq:2.60} and \eqref{eq:2.62} into \eqref{eq:2.16}, we obtain
\begin{equation}
	\label{eq:2.63}
	f_3 \left( 3\hbar r V_1^2 - r^3 V_1^3 + \hbar^2 V_1' \right) = 0.
\end{equation}

From equation~\eqref{eq:2.63} we obtain either $f_3 = 0$ or, $V_1(r) = \frac{\hbar}{r^2} \left( 1 + \frac{\epsilon}{\sqrt{1 + \beta r^2}}\right)$, where $\epsilon^2=1.$ Note that the special cases
$
V_1(r) = \frac{2 \hbar}{r^2}, 
\,\, V_1(r) = 0, \,\,
V_1(r) = \frac{\hbar}{r^2},$
correspond to the parameter choices $(\epsilon,\beta) = (1,0)$, $(-1,0)$ and $(\pm 1 ,\infty)$, respectively. First we will consider case $V_1=\frac{\hbar}{r^2}\Bigg(1+\frac{\epsilon}{\sqrt{1+\beta r^2}}\Bigg)$.

\paragraph{SI.} $V_1=\frac{\hbar}{r^2}\Bigg(1+\frac{\epsilon}{\sqrt{1+\beta r^2}}\Bigg)$

Equation~\eqref{eq:2.14} reduces to
\begin{equation}
	f'_3-\frac{\beta r}{1+\beta r^2}f_3=0.
\end{equation}
Hence
\begin{equation}
	f_3=k_1\sqrt{1+\beta r^2},
\end{equation}
where $k_1$ is a real constant. In what follows, all constants $k_i$ $(i=1,2,3,\ldots)$ are assumed to be real.

Equation~\eqref{eq:2.7} reduces to
\begin{equation}
	f'_4-\frac{\beta r}{1+\beta r^2}\,f_4=0.
\end{equation}
Therefore,
\begin{equation}
	f_4=k_2\sqrt{1+\beta r^2}.
\end{equation}

Equation~\eqref{eq:2.25} becomes
\begin{equation}
	\frac{\epsilon\left(r^2 f_{10}+f_8\right)}{\sqrt{1+\beta r^2}}
	+\frac{f_8}{r^2}+f_9=0.
\end{equation}
Thus,
\begin{equation}
	f_9=-\frac{\epsilon\left(r^2 f_{10}+f_8\right)}{r^2\sqrt{1+\beta r^2}}
	-\frac{f_8}{r^2}.
\end{equation}

Similarly, equation~\eqref{eq:2.21} reduces to
\begin{equation}
	r (2 + \beta r^2) f_{10} - \beta r f_8 
	+ (1 + \beta r^2) \big( r^2 f'_{10} + f'_8 \big) = 0.
\end{equation}
Solving this equation gives
\begin{equation}
	f_{10}=\frac{f_8-k_3\sqrt{1+\beta r^2}}{r^2}.
\end{equation}

Thus, all determining equations coming from the second-order terms are satisfied. We now consider the first- and zeroth-order equations, after eliminating redundant relations. From \eqref{eq:feq_53} and \eqref{eq:feq_51}, we obtain
\begin{equation}
	\label{eq:2.72}
	f_1+f_2+r(f'_{1}+f'_{2})=0,
\end{equation}
and
\begin{equation}
	\label{eq:2.73}
	f_1 + f_2 + r (f'_1 + f'_2) + \frac{\epsilon (f_1 + f_2)}{\sqrt{1 + \beta r^2}} = 0.
\end{equation}
These two equations imply $f_2=-f_1.$

Equation \eqref{eq:feq_29} becomes
\begin{equation}
	\label{eq:2.74}
	\frac{2 k_3 \hbar^2 (\epsilon \sqrt{1 + \beta r^2} + 1)}{r \sqrt{1 + \beta r^2}}
	+ 2 \hbar r (f_1 + r f'_1)
	- 8 k_3 \epsilon r V_2 = 0.
\end{equation}
Solving \eqref{eq:2.74} for $f'_1$, we get
\begin{equation}
	\label{eq:2.75}
	f'_{1}=\frac{k_3 \epsilon\big(4 r^2 V_2-\hbar^2\big)}{\hbar r^3}
	-\frac{k_3\hbar}{r^3 \sqrt{1+\beta r^2}}-\frac{f_1}{r}.
\end{equation}

From \eqref{eq:feq_2}, we obtain
\begin{equation}
	\label{eq:2.76}
	(k_1 + k_2) \Big[ \epsilon \hbar^2 \big( 2 (1 + \beta r^2)^{3/2} + \epsilon (2 + 3 \beta r^2) \big) 
	+ 4 r^4 (1 + \beta r^2)^2 V_3 \Big] = 0.
\end{equation}
Therefore, we distinguish two cases.

\paragraph{I.} $V_3 = 
- \frac{\hbar^2 \Big[ 2 \epsilon (1+\beta r^2)^{3/2} + 2 + 3 \beta r^2 \Big]}{4 r^4 (1+\beta r^2)^2}$

\noindent Equation \eqref{eq:feq_98} reduces to
\begin{align}
	\label{eq:2.77}
	&8k_3 r^2(1+\beta r^2)^2 V_2
	-6k_3\hbar^2-10\beta k_3 \hbar^2 r^2-2\beta^2 k_3\hbar^2 r^4 \nonumber \\ 
	&\quad -6k_3\epsilon\hbar^2\sqrt{1+\beta r^2}
	-6\beta k_3\epsilon\hbar^2 r^2\sqrt{1+\beta r^2}
	-2\epsilon\hbar r^2(1+\beta r^2) f_1=0.
\end{align}
Solving equation~\eqref{eq:2.77} for $V_2$, we obtain
\begin{align}
	\label{eq:2.78}
	V_2 &= \frac{1}{4k_3 r^2(1+\beta r^2)^2}
	\Big[ 3k_3 \hbar^2 + 5\beta k_3\hbar^2 r^2 + \beta^2 k_3 \hbar^2 r^4 \nonumber \\
	&\quad + 3k_3\epsilon \hbar^2\sqrt{1+\beta r^2}
	+ 3\beta k_3 \epsilon\hbar^2 r^2\sqrt{1+\beta r^2}
	+ \epsilon\hbar r^2(1+\beta r^2) f_1 \Big].
\end{align}
Substituting~\eqref{eq:2.78} into~\eqref{eq:2.75} gives
\begin{equation}
	\label{eq:2.79}
	f_1=-\frac{k_3\hbar}{r^2}\left(\frac{1}{\sqrt{1+\beta r^2}}
	+\frac{\epsilon}{1+\beta r^2}\right)
	+\frac{k_4}{\sqrt{1+\beta r^2}}.
\end{equation}
Finally, substituting~\eqref{eq:2.79} into~\eqref{eq:2.78} yields

\paragraph{A.} $k_3\neq 0$.  

\begin{equation}
	V_2 = 
	\frac{
		\hbar \Big[
		k_4 \epsilon r^2 \sqrt{1+\beta r^2} 
		+ k_3 \hbar \Big(
		2 + 5 \beta r^2 + \beta^2 r^4 
		+ \epsilon \sqrt{1+\beta r^2} (2 + 3 \beta r^2)
		\Big)
		\Big]
	}{
		4 k_3 r^2 (1+\beta r^2)^2
	}.
\end{equation}
Equation \eqref{eq:feq_8} reads
\begin{equation}
	k_3(k_1-k_2)\,\Big(\hbar + \epsilon\hbar\sqrt{1+\beta r^2} - k_4 \epsilon r^2 \sqrt{1+\beta r^2}\Big)
	- 2 k_3 \epsilon r^2 (1+\beta r^2) f_7 = 0,
\end{equation}
which leads to 
\begin{equation}
	f_7 = \frac{(k_1-k_2)\,k_3 \Big(\hbar + \epsilon\hbar\sqrt{1+\beta r^2} - k_4 \epsilon r^2 \sqrt{1+\beta r^2}\Big)}{2 k_3 \epsilon r^2 (1+\beta r^2)}.
\end{equation}
Equation \eqref{eq:feq_4} reduces to 
\begin{align}
	&4 k_3^2 \epsilon (r + \beta r^3)^3 V'_0 
	+ k_3^2 \hbar^2 \Big[
	\sqrt{1+\beta r^2} (8 + 20 \beta r^2 + 3 \beta^2 r^4) 
	+ 2 \epsilon (4 + 12 \beta r^2 + 7 \beta^2 r^4 + \beta^4 r^6)
	\Big] \notag\\
	&\quad - 8 k_4^2 \epsilon r^4 (1 + \beta r^2)
	- \beta k_3 k_4 \hbar r^4 \Big( 9 \sqrt{1+\beta r^2} + 8 (1 + \beta r^2) \Big) = 0,
\end{align}
which yields
\begin{equation}
	V_0 = 
	-\frac{
		\hbar \Big[ 
		k_4 r^2 \sqrt{1+\beta r^2} 
		- k_3 \hbar \Big( 
		\sqrt{1+\beta r^2} (4 + 3 \beta r^2) 
		+ \epsilon (4 + 6 \beta r^2 + 3 \beta^2 r^4) 
		\Big)
		\Big]
	}{
		4 k_3 \epsilon r^2 (1+\beta r^2)^2
	} 
	+ k_5.
\end{equation}
Substituting this condition into the determining equation \eqref{eq:feq_11}, we obtain
\begin{equation}
	k_3 = -\frac{k_4}{\beta \hbar}.
\end{equation}

In this case, all determining equations arising from the coefficients of the first- and zeroth-order terms are satisfied. Hence, the full set of determining equations is satisfied.  We thus obtain three arbitrary constants. The corresponding integrals of motion for $k_1$, $k_2$ and $k_4$ are given, respectively, by
\begin{align}
	\label{eq:2.86}
	Y_P^{13} &= \sqrt{1+\beta r^2} (\vec{\sigma}_1, \vec{p}) - \frac{\epsilon+\sqrt{1+\beta r^2}}{r^2} (\vec{x},\vec{\sigma}_1)(\vec{x}, \vec{p}) + \frac{i\hbar(\epsilon+\sqrt{1+\beta r^2})}{r^2} (\vec{x},\vec{\sigma}_1) \notag\\
	&\quad + \frac{\hbar (\epsilon + (1+\beta r^2)^{3/2})}{2 r^2 (1+\beta r^2)} (\vec{x}, \vec{\sigma}_1 \times \vec{\sigma}_2), \\
	\label{eq:2.87}
	Y_P^{14} &= \sqrt{1+\beta r^2} (\vec{\sigma}_2, \vec{p}) - \frac{\epsilon+\sqrt{1+\beta r^2}}{r^2} (\vec{x},\vec{\sigma}_2)(\vec{x}, \vec{p}) + \frac{i\hbar(\epsilon+\sqrt{1+\beta r^2})}{r^2} (\vec{x},\vec{\sigma}_2) \notag\\
	&\quad - \frac{\hbar (\epsilon + (1+\beta r^2)^{3/2})}{2 r^2 (1+\beta r^2)} (\vec{x}, \vec{\sigma}_1 \times \vec{\sigma}_2), \\
	\label{eq:2.88}
	Y_P^{15} &= \frac{\epsilon + (1+\beta r^2)^{3/2}}{\beta r^2 (1+\beta r^2)} (\vec{x}, \vec{\sigma}_1 - \vec{\sigma}_2) - \frac{i(\sqrt{1+\beta r^2}+\epsilon)}{\beta r^2} (\vec{x}, \vec{\sigma}_1 \times \vec{\sigma}_2) \notag\\
	&\quad + \frac{\epsilon}{\beta \hbar r^2} (\vec{x}, \vec{\sigma}_1 \times \vec{\sigma}_2)(\vec{x}, \vec{p}) + \frac{\sqrt{1+\beta r^2}}{\beta \hbar r^2} \Big[ (\vec{\sigma}_1 , \vec{x})(\vec{\sigma}_2 , \vec{L}) - (\vec{\sigma}_2 , \vec{x})(\vec{\sigma}_1 , \vec{L}) \Big].
\end{align}

\paragraph{B.} $k_3=0$

Substituting this condition into the determining equation \eqref{eq:feq_98} yields $f_1 = 0$. Equation \eqref{eq:feq_8} reduces to
\begin{align}
	\label{eq:2.89}
	&(k_1 - k_2)\Big[\hbar^2\Big(3 + \beta^2 r^4 + 3\epsilon\sqrt{1 + \beta r^2} + \beta r^2 (5 + 3\epsilon\sqrt{1 + \beta r^2})\Big) - 4(r + \beta r^3)^2 V_2 \Big] \notag \\
	&\qquad\quad - 2 \epsilon \hbar r^2 (1 + \beta r^2) f_7 = 0.
\end{align}
Next, we solve this equation for the potential $V_2$. This requires the factor \(k_1-k_2\) to be nonzero. We therefore split the analysis into two cases, depending on whether \(k_1-k_2\) vanishes.

\paragraph{a.} $k_1-k_2\neq 0$
\begin{align}
	\label{eq:2.90}
	V_2 = \frac{\hbar^2}{4 r^2 (k_1 - k_2) (1 + \beta r^2)^2} 
	\Big[ 3 + \beta^2 r^4 + 3 \epsilon \sqrt{1 + \beta r^2} 
	+ \beta r^2 \big( 5 + 3 \epsilon \sqrt{1 + \beta r^2} \big) \Big] - \frac{\epsilon f_7}{2 (1 + \beta r^2)}.
\end{align}
By \eqref{eq:feq_10}, we obtain
\begin{equation}
	\label{eq:2.91}
	(k_1 - k_2)\, \epsilon \, \hbar^2 (\epsilon + \sqrt{1 + \beta^2}) 
	- 2 r^2 \sqrt{1 + \beta r^2} \Big[ \hbar f_7 + 2 (k_1 - k_2) \epsilon V_2 + \hbar r f'_7 \Big] = 0.
\end{equation}
Substituting equation \eqref{eq:2.90} into \eqref{eq:2.91}, we obtain
\begin{equation}
	\label{eq:2.92}
	f_7 = \frac{(k_1 - k_2)\, \hbar \big( 1 + \beta r^2 + \epsilon \sqrt{1 + \beta^2} \big)}{2 r^2 (1 + \beta r^2)^{3/2}} 
	+ \frac{k_6}{\sqrt{1 + \beta r^2}}.
\end{equation}
Substituting \eqref{eq:2.92} into \eqref{eq:2.90} results in
\begin{equation}
	V_2 = 
	\frac{
		\hbar^2 \Big[ 
		\epsilon (1 + \beta^2) (2 + 3 \beta^2) 
		+ \sqrt{1 + \beta r^2}\, \big( 2 + \beta r^2 (5 + \beta r^2) \big) 
		\Big]
	}{
		4 r^2 (1 + \beta r^2)^{5/2} 
	} 
	- \frac{
		k_6 \epsilon \hbar
	}{
		4 (k_1 - k_2) (1 + \beta r^2)^{3/2} 
	}.
\end{equation}
Using \eqref{eq:feq_4}, we find that
\begin{align}
	&(k_1 + k_2) \Bigg[ 
	\hbar \Big( 6 \beta k_6 r^4 (1 + \beta r^2) 
	+ k_1 \hbar \Big( (2 + 3 \beta r^2)(\epsilon + \beta \epsilon r^2)^2 
	+ 3 (2 + 7 \beta r^2 + 7 \beta^2 r^4 + 2 \beta^3 r^6) \Big) \Big) \notag\\
	&\quad + \hbar^2 k_1 \Big( 2 \epsilon \sqrt{1 + \beta r^2} 
	(4 + 12 \beta r^2 + 9 \beta^2 r^4 + 3 \beta^3 r^6) \Big) 
	- k_2 \hbar \Big( (2 + 3 \beta r^2)(\epsilon + \beta \epsilon r^2)^2 \Big) \notag\\
	&\quad - \hbar^2 k_2 \Big( 3 (2 + 7 \beta r^2 + 7 \beta^2 r^4 + 2 \beta^3 r^6) 
	- 2 \epsilon \sqrt{1 + \beta r^2} (4 + 12 \beta r^2 + 9 \beta^2 r^4 + 3 \beta^3 r^6) \Big) \notag\\
	&\quad + 4 \hbar (k_1 - k_2) \epsilon r^3 (1 + \beta r^2)^{7/2} V'_0
	\Bigg] = 0.
\end{align}
Hence, the analysis splits into the following two cases.

\paragraph{a1.} $k_1 + k_2 \neq 0$
\begin{align}
	V_0 &= \frac{\hbar^2}{4 (k_1 - k_2) \epsilon r^2 (1 + \beta r^2)^{5/2}} 
	\Big[ 2 k_6 r^2 (1 + \beta r^2) 
	+ k_1 \big( 3 + 5 \beta r^2 + 2 \beta^2 r^4 \big) \notag\\
	&\quad + k_1 \epsilon \sqrt{1 + \beta r^2} (4 + 6 \beta r^2 + 3 \beta^2 r^4) 
	- k_2 \big( 3 + 5 \beta r^2 + 2 \beta^2 r^4 \big) \Big] \notag\\
	&\quad - \frac{\hbar^2}{4 (k_1 - k_2) \epsilon r^2 (1 + \beta r^2)^{5/2}} 
	\Big[ k_2 (\epsilon + \beta \epsilon r^2)^2 + k_2 \epsilon \sqrt{1 + \beta r^2} (4 + 6 \beta r^2 + 3 \beta^2 r^4) \notag\\
	&\quad- k_1 (\epsilon + \beta \epsilon r^2)^2 \Big] + k_7.
\end{align}
Finally, from equation \eqref{eq:feq_12}
\begin{equation}
	2 k_6 - \beta \hbar (k_1 - k_2) = 0,
\end{equation}
which gives
\begin{equation}
	k_6 = \frac{\beta \hbar (k_1 - k_2)}{2}.
\end{equation}

In this case, all determining equations arising from the coefficients of the first- and zeroth-order terms are satisfied. Hence, the full set of determining equations is satisfied.  We thus obtain two arbitrary constants. The corresponding integrals of motion for $k_1$ and $k_2$ are given, respectively, by
\begin{align}
	\label{eq:2.98}
	Y_{P}^{16} &= \sqrt{1 + \beta r^2} (\vec{\sigma}_1, \vec{p}) - \frac{\epsilon + \sqrt{1 + \beta r^2}}{r^2} (\vec{x}, \vec{\sigma}_1)(\vec{x}, \vec{p}) + \frac{i \hbar (\epsilon + \sqrt{1 + \beta r^2})}{r^2} (\vec{x}, \vec{\sigma}_1) \notag \\
	&\quad + \frac{\hbar (\epsilon + (1 + \beta r^2)^{3/2})}{2 r^2 (1 + \beta r^2)} (\vec{x}, \vec{\sigma}_1 \times \vec{\sigma}_2), \\
	\label{eq:2.99}
	Y_{P}^{17} &= \sqrt{1 + \beta r^2} (\vec{\sigma}_2, \vec{p}) - \frac{\epsilon + \sqrt{1 + \beta r^2}}{r^2} (\vec{x}, \vec{\sigma}_2)(\vec{x}, \vec{p}) + \frac{i \hbar (\epsilon + \sqrt{1 + \beta r^2})}{r^2} (\vec{x}, \vec{\sigma}_2) \notag \\
	&\quad - \frac{\hbar (1 + \epsilon \sqrt{1 + \beta r^2} + \beta r^2 (2 + \beta r^2))}{2 r^2 (1 + \beta r^2)^{3/2}} (\vec{x}, \vec{\sigma}_1 \times \vec{\sigma}_2).
\end{align}

\paragraph{a2.} $k_1 + k_2 = 0$

From equation \eqref{eq:feq_12}, we get
\begin{align}
	& k_1^2 \hbar^3 \Big[ 8 + 8 \epsilon \sqrt{1+\beta r^2} + 4 \beta r^2 (7 + 6 \epsilon \sqrt{1+\beta r^2})
	+ \beta^2 r^4 (23 + 14 \epsilon \sqrt{1+\beta r^2}) \notag
	\\
    &+ \beta^3 r^6 (3 + 2 \epsilon \sqrt{1+\beta r^2}) \Big]  + 4 k_1^2 \epsilon \hbar r^3 (1 + \beta r^2)^{7/2} V'_0
	\notag \\
	&+ \beta k_1 k_6 \hbar^2 r^4 (1+\beta r^2) (9 + 8 \epsilon \sqrt{1+\beta r^2})
	- 4 k_6^2 \epsilon \hbar r^4 (1+\beta r^2)^{3/2} = 0.
\end{align}
Solving this for the potential $V_0$ 
\begin{align}
	\label{eq:2.101}
	V_0 &= \frac{\hbar^2 \sqrt{1+\beta r^2} (4+\beta r^2) + \epsilon \hbar^2 (2+\beta r^2)^2 + 4 (r+\beta r^3)^2 k_8}{4 \epsilon (r+\beta r^3)^2} \notag \\
	&\quad + \frac{k_6 \hbar (3 + 4 \epsilon \sqrt{1+\beta r^2})}{4 k_1 \epsilon (1+\beta r^2)^{3/2}}
	- \frac{k_6^2}{2 \beta k_1^2 (1+\beta r^2)}.
\end{align}

In this case, all determining equations arising from the coefficients of the first- and zeroth-order terms are satisfied. Hence, the full set of determining equations is satisfied. We define 
\begin{equation}
	\frac{k_6}{k_1}=\kappa_1, 
\end{equation}
where $\kappa_1$ is a real constant. In this case we have only one arbitrary constant. The corresponding integral of motion for $k_1$ is given by
\begin{align}
	\label{eq:2.103}
	Y_P^{18} &= \sqrt{1+\beta r^2} (\vec{\sigma}_1 - \vec{\sigma}_2, \vec{p}) - \frac{\epsilon + \sqrt{1+\beta r^2}}{r^2} (\vec{x}, \vec{\sigma}_1 - \vec{\sigma}_2)(\vec{x}, \vec{p}) \notag \\
	&\quad + \frac{i\hbar(\epsilon + \sqrt{1+\beta r^2})}{r^2} (\vec{x}, \vec{\sigma}_1 - \vec{\sigma}_2) \notag \\
	&\quad + \frac{1}{\sqrt{1+\beta r^2}} \left[ \kappa_1 + \frac{\epsilon \hbar}{r^2} \left( \epsilon + \frac{1}{\sqrt{1+\beta r^2}} \right) \right] (\vec{x}, \vec{\sigma}_1 \times \vec{\sigma}_2).
\end{align}

\paragraph{b.} $k_1-k_2=0$

Substituting this condition into the determining equation \eqref{eq:feq_8} yields $f_7 = 0$. From \eqref{eq:feq_12}, we find that
\begin{align}
	\label{eq:2.104}
	k_1 &\Bigg[ 
	2\epsilon \Big( 3 + \beta r^2 (3 + \beta r^2)(3 + 2\beta r^2) \Big) 
	- 3 \hbar^2 (1+\beta r^2)^{3/2} (2 + 3\beta r^2) \notag \\
	&\qquad - 2 \epsilon (r + \beta r^3)^3 (V'_0 + V'_2) 
	\Bigg] = 0,
\end{align}
which directly yields two cases depending on whether $k_1$ vanishes.

\paragraph{b1.} $k_1\neq0$

\begin{equation}
	\label{eq:2.105}
	V_2 = \frac{\hbar^2 \Big[ 6 (1 + \beta r^2)^{3/2} + \epsilon (2 + \beta r^2)(3 + 4 \beta r^2) \Big]}{4 \epsilon r^2 (1 + \beta r^2)^2} + k_9 - V_0.
\end{equation} 

All the determining equations derived from the coefficients of the first- and zeroth-order terms are satisfied. The potential $V_0$ remains arbitrary. Hence, we have only one arbitrary constant $k_1$, and the corresponding integral of motion for this constant reads
\begin{align}
	\label{eq:2.106}
	Y_{P}^{19} &= \sqrt{1 + \beta r^2} \, (\vec{\sigma}_1 + \vec{\sigma}_2, \vec{p}) - \frac{\epsilon + \sqrt{1 + \beta r^2}}{r^2} (\vec{x}, \vec{\sigma}_1 + \vec{\sigma}_2)(\vec{x}, \vec{p}) \notag \\
	&\quad + \frac{i \hbar (\epsilon + \sqrt{1 + \beta r^2})}{r^2} (\vec{x}, \vec{\sigma}_1 + \vec{\sigma}_2).
\end{align}

\paragraph{b2.} $k_1=0$

Thus, all determining equations are satisfied. The potentials $V_0$ and $V_2$ remain arbitrary. Therefore, no integral of motion is obtained in this case.

\paragraph{II.} $k_1+k_2=0$

Equation \eqref{eq:feq_65} reduces to
\begin{align}
	\label{eq:2.107}
	k_3 \Bigg[ 
	&\hbar^2 \Big( 4 + 7 \beta r^2 + 2 \beta^2 r^4 
	+ 4 \epsilon (1+\beta r^2)^{3/2} \Big) \notag \\
	&- 8 (r + \beta r^3)^2 V_2 
	- 4 r^4 (1+\beta r^2)^2 V_3 
	\Bigg] 
	+ 2 \epsilon \hbar r^2 (1+\beta r^2) f_1 = 0.
\end{align}
To solve this equation for $V_3$, we need to consider two cases based on $k_3$.

\paragraph{A.} $k_3\neq0$

In this case
\begin{align}
	\label{eq:2.108}
	V_3 =\;& 
	\frac{\hbar^2}{4r^4(1+\beta r^2)^2}
	\Big[
	4 + 2\beta^2 r^4 
	+ \beta r^2 \big(7 + 4\epsilon\sqrt{1+\beta r^2}\big)
	+ 4\epsilon\sqrt{1+\beta r^2}
	\Big] \notag\\ &+ \frac{\epsilon \hbar f_1}{2k_3 r^2(1+\beta r^2)}
	- \frac{2V_2}{r^2}.
\end{align}
Equation \eqref{eq:feq_8} reads
\begin{align}
	\label{eq:2.109}
	k_1 \Bigg[
	&\hbar^2\Big(4 + 2\beta^2 r^4 + 4\epsilon\sqrt{1+\beta r^2}
	+ \beta r^2\big(7 + 4\epsilon\sqrt{1+\beta r^2}\big)\Big) \notag \\
	&- 4r^2(1+\beta r^2)^2\big(2V_2 + r^2 V_3\big)
	\Bigg]
	- 2\epsilon \hbar r^2(1+\beta r^2) f_7 = 0.
\end{align}
Substituting \eqref{eq:2.108} into \eqref{eq:2.109} yields
\begin{equation}
	k_1 f_1 + k_3 f_7 = 0 \;\;\Rightarrow\;\; f_7 = -\frac{k_1 f_1}{k_3}.
\end{equation}
From  \eqref{eq:feq_98} we have
\begin{equation}
	\label{eq:2.111}
	k_3 \hbar^2\left(\epsilon + \frac{1}{\sqrt{1+\beta r^2}}\right)
	+ \hbar r^2\left(f_1 + r f'_1\right)
	- 4k_3 \epsilon r^2 V_2 = 0,
\end{equation}
which can be solved for $V_2$ as
\begin{equation}
	V_2 =
	\frac{k_3\hbar^2 \left(\epsilon + \frac{1}{\sqrt{1+\beta r^2}}\right)
		+ \hbar r^2\left(f_1 + r f'_1\right)}
	{4k_3 \epsilon r^2}.
\end{equation}
Equation \eqref{eq:feq_12} gives
\begin{align}
	k_3^2 \hbar^2 \Big[
	&7\epsilon(1+\beta r^2)^2(2+3\beta r^2)
	+ 2\sqrt{1+\beta r^2}(7+21\beta r^2+21\beta^2 r^4+5\beta^3 r^6) \notag \\
	&+ \epsilon r^4(1+\beta r^2)^{7/2} f''_1
	\Big]
	- 4r^4(1+\beta r^2)^{7/2} f_1^2 \notag \\
	&- 4r^4(1+\beta r^2) f_1 \Big[
	\beta k_3 \hbar(1+2\beta r^2+\beta^2 r^4 - \epsilon\sqrt{1+\beta r^2})
	+ r(1+\beta r^2)^{5/2} f'_1
	\Big] \notag \\
	&- k_3 \Big[
	2\hbar r^3(1+\beta r^2)^2
	(2+2\beta^2 r^4+\beta r^2(4-\epsilon\sqrt{1+\beta r^2})) f'_1 \notag \\
	&\qquad - 4k_3 r^3(1+\beta r^2)^{7/2} V'_0
	\Big] = 0.
\end{align}
Solving this equation for $V_0$, we get
\begin{align}
	V_0 =\;& 
	\frac{1}{4 k_3 r^2 (1+\beta r^2)^2} \Bigg[
	k_3 \hbar^2 \Big( 7 + 5 \beta^2 r^4 + 7 \epsilon \sqrt{1+\beta r^2} 
	+ \beta r^2 (13 + 7 \epsilon \sqrt{1+\beta r^2}) \Big) \notag\\
	&- \epsilon \hbar r^3 (1+\beta r^2)^2 f'_1
	+ \hbar r^2 f_1 \Big( \epsilon (1-\beta r^2) + 4 (1+\beta r^2)^{3/2} \Big)
	\Bigg]+ \frac{r^2 f_1^2}{2 k_3^2} + k_{10}.
\end{align}

All determining equations corresponding to the first- and zeroth-order terms are satisfied. We have only one arbitrary constant $k_1$ that does not appear in the Hamiltonian. The integral of motion corresponding to $k_1$ is given by $Y_{P}^{20}$ below. Besides the function $f_1$ also remains arbitrary, which yields an infinite family of integrals of motion given by $Y_{P}^{21}$ below.
\begin{align}
	\label{eq:2.115}
	Y_{P}^{20} =\;&
	\sqrt{1+\beta r^2} (\vec{\sigma}_1-\vec{\sigma}_2,\vec{p}) - \frac{\epsilon+\sqrt{1+\beta r^2}}{r^2} (\vec{x},\vec{\sigma}_1-\vec{\sigma}_2)(\vec{x},\vec{p}) \notag \\
	&+ \frac{i\hbar(\epsilon+\sqrt{1+\beta r^2})}{r^2} (\vec{x},\vec{\sigma}_1-\vec{\sigma}_2)
	- \frac{f_1}{k_3} (\vec{x},\vec{\sigma}_1 \times \vec{\sigma}_2),
    \\
    \label{eq:2.116}
    Y_{P}^{21} =& f_1(r)  (\vec{\sigma}_1-\vec{\sigma}_2 , \vec{x}) \notag\\
    & - \frac{k_3}{r^2} \Bigg[ \epsilon (\vec{x} , \vec{\sigma}_1 \times \vec{\sigma}_2)(\vec{x} , \vec{p}) - \sqrt{1+\beta r^2} \Big( (\vec{\sigma}_1 , \vec{x})(\vec{\sigma}_2 , \vec{L}) - (\vec{\sigma}_2 , \vec{x})(\vec{\sigma}_1 , \vec{L}) \Big) \Bigg] \notag \\
    & + \frac{i\hbar k_3 (\epsilon + \sqrt{1+\beta r^2})}{r^2} (\vec{x}, \vec{\sigma}_1 \times \vec{\sigma}_2).
\end{align}
\paragraph{B.} $k_3=0$

Substituting this condition into the determining equation \eqref{eq:feq_65} yields $f_1 = 0$. Then from \eqref{eq:feq_8}
\begin{align}
	\label{eq:2.117}
	\;& k_1 \Bigg[ 
	\hbar^2 \Big[ 4 + 2 \beta^2 r^4 + 4 \epsilon \sqrt{1+\beta r^2} 
	+ \beta r^2 \big( 7 + 4 \epsilon \sqrt{1+\beta r^2} \big) \Big] \notag \\
	&\quad - 8 (r + \beta r^3)^2 V_2 - 4 r^4 (1+\beta r^2)^2 V_3 
	\Bigg] 
	- 2 \epsilon \hbar r^2 (1+\beta r^2) f_7=0.
\end{align}
To solve this equation for the potential $V_2$, the factor $k_1$ is required to be nonzero. Thus we split the analysis into two cases, depending on whether $k_1$ vanishes.

\paragraph{a.} $k_1\neq0$

Now we have
\begin{equation}
	\label{eq:2.118}
	V_2 = \frac{\hbar^2 \Big[ 4 + 2\beta^2 r^4 + 4\epsilon \sqrt{1+\beta r^2} 
		+ \beta r^2 (7 + 4\epsilon \sqrt{1+\beta r^2}) \Big]}{8 r^2 (1+\beta r^2)^2} 
	- \frac{\epsilon \hbar f_7}{4 k_1 (1+\beta r^2)} - \frac{r^2 V_3}{2}.
\end{equation}
Equation \eqref{eq:feq_93} gives 
\begin{equation}
	\label{eq:2.119}
	2 k_1 \epsilon \big(4 r^2 V_2 - \hbar^2\big) 
	+ 2 \hbar r^2 \big(r f'_7 + 2 \hbar r^2 f_7\big) 
	- \frac{2 k_1 \hbar^2}{\sqrt{1+\beta r^2}} = 0.
\end{equation}
Substituting \eqref{eq:2.118} into \eqref{eq:2.119} yields
\begin{align}
	\label{eq:2.120}
	& k_1 \epsilon \hbar^2 \Big[ 2 \epsilon (1+\beta r^2)^2 + \sqrt{1+\beta r^2} (2+3\beta r^2) \Big] 
	+ 2 \hbar r^3 (1+\beta r^2)^{5/2} f'_7 
	+ 2 \beta \hbar r^4 (1+\beta r^2)^{3/2} f_7 \notag \\
	&\quad - 4 k_1 \epsilon r^4 (1+\beta r^2)^{5/2} V_3 = 0.
\end{align}
It follows that
\begin{equation}
	\label{eq:2.121}
	V_3 = 
	\frac{k_1 \epsilon \hbar^2 \Big[ 2\epsilon(1+\beta r^2)^2 + \sqrt{1+\beta r^2}(2+3\beta r^2) \Big] 
		+ 2 \beta \hbar r^4 (1+\beta r^2)^{3/2} f_7}{4 k_1 \epsilon r^4 (1+\beta r^2)^{5/2}} 
	+ \frac{\hbar f'_7}{2 k_1 \epsilon r}.
\end{equation}
Substituting \eqref{eq:2.121} into \eqref{eq:2.118}, we obtain
\begin{equation}
	\label{eq:2.122}
	V_2 = - \frac{
		\hbar r^2 (1+\beta r^2)^{3/2} f_7 + \hbar (1+\beta r^2) \Big[ r^3 \sqrt{1+\beta r^2} f'_7 - k_1 \epsilon \hbar (\epsilon + \sqrt{1+\beta r^2}) \Big]
	}{4 k_1 \epsilon r^2 (1+\beta r^2)^{3/2}}.
\end{equation}
Therefore, we have the following relation
\begin{equation}
	\label{eq:2.123}
	f'_7 = \frac{k_1 \hbar \Big( \epsilon + \frac{1}{\sqrt{1+\beta r^2}} \Big)}{r^3} 
	- \frac{4 k_1 \epsilon V_2}{\hbar r} - \frac{f_7}{r}.
\end{equation}
Introducing equation \eqref{eq:2.123} into \eqref{eq:2.121} yields
\begin{align}
	\label{eq:2.124}
	f_7 &= \frac{k_1 \hbar \Big[ 2(1+\beta r^2)^2 + 2(\epsilon + \beta \epsilon r^2)^2 
		+ \epsilon \sqrt{1+\beta r^2} (4 + 7\beta r^2 + 2 \beta^2 r^4) \Big]}{r^2 (1+\beta r^2)^{3/2}} \notag \\
	&\quad - \frac{2 k_1 \epsilon (1+\beta r^2) (2 V_2 + r^3 V_3)}{\hbar}.
\end{align}
which can be substituted into \eqref{eq:2.118} to give
\begin{equation}
	\label{eq:2.125}
	V_2 = \frac{\hbar^2 \Big[ 4 + 2 \beta^2 r^4 + 2 \epsilon \sqrt{1+\beta r^2} 
		+ \beta r^2 (7 + 2 \epsilon \sqrt{1+\beta r^2}) \Big]}{8 (r + \beta r^3)^2} - \frac{r^2 V_3}{2}.
\end{equation}
From the determining equation \eqref{eq:feq_93}, we have
\begin{equation}
	2(1+\beta r^2)^2 - \epsilon \sqrt{1+\beta r^2}(2 + 3\beta r^2) - 4 \epsilon r^4 (1+\beta r^2)^{5/2} V_3 = 0.
\end{equation}
So we find that
\begin{equation}
	V_3 = \frac{2(1+\beta r^2)^2 - \epsilon \sqrt{1+\beta r^2}(2 + 3\beta r^2)}{4 \epsilon r^4 (1+\beta r^2)^{5/2}}.
\end{equation}

Next, considering the determining equation \eqref{eq:feq_12} for $V_0$, we have
\begin{align}
	\label{eq:2.128}
	&2 \hbar^2 \Big[ 4 + 5 \epsilon \sqrt{1+\beta r^2} + 3 \beta^3 r^6 (2 + \epsilon \sqrt{1+\beta r^2}) 
	+ \beta^2 r^4 (16 + 15 \epsilon \sqrt{1+\beta r^2}) \Big] 
	\notag \\
    & \qquad+ 4 r^3 \epsilon (1+\beta r^2)^{7/2} V_0' = 0.
\end{align}
By solving this equation for $V_0$, we find that
\begin{equation}
	V_0 = \frac{\hbar^2 \Big[ 4 (1+\beta r^2)^{3/2} + \epsilon (5 + 9 \beta r^2 + 3 \beta^2 r^4) \Big]}{4 \epsilon r^2 (1+\beta r^2)^2} + k_{11}.
\end{equation}

All determining equations arising from the coefficients of the first- and zeroth-order terms are satisfied with this choice of potentials. We have only one constant. The corresponding integral of motion associated with $k_1$ is then
\begin{align}
	\label{eq:2.130}
	Y_P^{22} &= \sqrt{1+\beta r^2} (\vec{\sigma}_1 - \vec{\sigma}_2, \vec{p}) - \frac{\epsilon + \sqrt{1+\beta r^2}}{r^2} (\vec{x}, \vec{\sigma}_1 - \vec{\sigma}_2)(\vec{x}, \vec{p}) \notag \\
	&\quad + \frac{i\hbar(\epsilon + \sqrt{1+\beta r^2})}{r^2} (\vec{x}, \vec{\sigma}_1 - \vec{\sigma}_2)
	+ \frac{\hbar \sqrt{1+\beta r^2}}{r^2} (\vec{x}, \vec{\sigma}_1 \times \vec{\sigma}_2).
\end{align}

\paragraph{b.} $k_1=0$

Substituting these conditions into the determining equation \eqref{eq:feq_8} leads to $f_7 = 0$. Hence, the full set of determining equations is satisfied. No integral of motion arises in this case.

\paragraph{S2.} $f_3=0$ 

Letting $f_3=0$ in \eqref{eq:2.6} and \eqref{eq:2.24}, we get $f_4=0$. Equation \eqref{eq:2.8} reads
\begin{equation}
	\hbar f_9 + f_8 V_1 - f_{10} \big( r^2 V_1 - \hbar \big) = 0,
\end{equation}
from which we obtain
\begin{equation}
	f_9 = \frac{\hbar f_{10} - r^2 f_{10} V_1 - f_8 V_1}{\hbar}.
\end{equation}
Equation \eqref{eq:2.9} becomes
\begin{equation}
	\hbar V_1 \big( r^2 f'_{10} + f'_8 \big)
	+ f_8 \big( r V_1^2 + \hbar V'_1 \big)
	+ r f_{10} \big( 2 \hbar V_1 + r^2 V_1^2 + \hbar r V'_1 \big) = 0.
\end{equation}
Now we consider the following two cases.

\paragraph{I.} $V_1\neq0$

\begin{equation}
	\label{eq:2.134}
	f'_{10}=\frac{f_8\big(r {V_1}^2+\hbar V'_{1}\big)-\hbar V_1f'_8+rf_{10}\big(2\hbar V_1+r^2 {V_1}^2+\hbar r V'_{1}\big)}{\hbar r^2 V_1}.
\end{equation}

Equation \eqref{eq:2.12} reads
\begin{equation}
	\label{eq:2.135}
	r \big( 2 \hbar - r^2 V_1 \big) \Big( f_{10} (\hbar - r^2 V_1) - f_8 V_1 \Big)
	+ \hbar^2 \big( r^2 f'_{10} + f'_8 \big)
	+ \hbar r^3 f_{10} V_1 = 0.
\end{equation}
Substituting equation \eqref{eq:2.134} into \eqref{eq:2.12} yields
\begin{equation}
	r^2 f_{10} + f_8 = 0 \;\;\Rightarrow\;\; f_{10} = - \frac{f_8}{r^2}.
\end{equation}
Thus, all determining equations arising from the coefficients of the quadratic terms are satisfied. Next, we consider the determining equations obtained from the coefficients of the first- and zeroth-order terms. Substituting these conditions into the determining equations \eqref{eq:feq_65}, \eqref{eq:feq_73} and \eqref{eq:feq_8} leads to $f_1=0$, $f_2=0$ and $f_7=0$. Hence, all the first- and zeroth-order determining equations are satisfied. No integral of motion arises in this case.

\paragraph{II.} $V_1=0$

From the determining equation \eqref{eq:2.5}
\begin{equation}
	2 r f_{10} + r^2 f'_{10} + f'_{8} = 0,
\end{equation}
we obtain
\begin{equation}
	f_{10} = \frac{k_{13}}{r^2} - \frac{f_8}{r^2}.
\end{equation}

With this choice, all determining equations arising from the coefficients of the quadratic terms are satisfied. We now consider the determining equations obtained from the coefficients of the first- and zeroth-order terms. Substituting these conditions into the determining equations \eqref{eq:feq_65}, \eqref{eq:feq_73}, \eqref{eq:feq_8} and \eqref{eq:feq_29} leads to $f_1=0$, $f_2=0$, $f_7=0$, $V_2=0$ and $V_3=0$.

Finally, from \eqref{eq:feq_11}, we obtain
\begin{equation}
	V'_0 = 0 \;\; \Rightarrow \;\;  V_0 = \kappa_2\,(\text{const.}).
\end{equation}

Hence, the full set of determining equations is satisfied, and the pseudo-scalar potential $V_0$ remains arbitrary. We thus obtain only one constant of motion corresponding to the parameter $k_{13}$ given by
\begin{align}
	\label{eq:2.140}
	Y_P^{23} &= \frac{1}{r^2} \Big[ (\vec{x},\vec{p})(\vec{x},\vec{\sigma}_1 \times \vec{\sigma}_2) + (\vec{x},\vec{\sigma}_1) (\vec{\sigma}_2, \vec{L}) - (\vec{x},\vec{\sigma}_2) (\vec{\sigma}_1, \vec{L})  \Big]\notag \\
    &=(\vec{p},\vec{\sigma}_1 \times \vec{\sigma}_2).
\end{align}

\subsubsection*{Case 2. {$V_{4}=0, \,\, V_{5} \neq 0$}}

From the determining equations arising from the third-order terms, we obtain
\begin{equation}
	f_3 = f_4 = f_5 = f_6 = 0, \quad f_8 = -r^2 f_{10}, \quad f_9 = f_{10}, \quad f_{11} = -f_{10}.
\end{equation}
We next examine the quadratic-level equations. Substituting the above relations into  \eqref{eq:2.5}, \eqref{eq:2.6} and \eqref{eq:2.8} yields $f_1 = f_2 = f_7 = 0.$ Hence all determining equations are fulfilled, and this case does not admit any nontrivial integrals of motion.

\subsubsection*{Case 3. {$V_{4}\neq0,\,\, V_{5}=0$}}

From the determining equations corresponding to the third-order terms, we obtain
\begin{equation}
	f_5 = f_6 = 0, \quad f_9 = f_{10}, \quad f_{11} = -f_{10}.
\end{equation}
We now consider the determining equations at the quadratic level. Substituting these relations into 
\eqref{eq:2.2}, \eqref{eq:2.6}, \eqref{eq:2.7}, \eqref{eq:2.13}, 
\eqref{eq:2.14}, \eqref{eq:2.18}, \eqref{eq:2.22}, and \eqref{eq:2.24}, we obtain
\begin{equation}
	\label{eq:2.143}
	(f_3 + f_4) V'_4 = 0, \quad (f_3 - f_4) V'_4 = 0. 
\end{equation}
It follows that
\begin{equation}
	\label{eq:2.144}
	f_3 V'_4 = 0.
\end{equation}
Thus, either $f_3 = 0$ or $V_4 = \alpha_1$, where $\alpha_1$ is a constant. 
In what follows, all $\alpha_i$ are constants. We first consider the case $f_3 = 0$.

\paragraph{Subcase 1.} $f_3=0,\,\, f_4=0$,\quad($V_4\neq\alpha_1$)

Substitution gives
\begin{equation}
	f_1 = f_2 = f_7 = 0, \quad f_8 = -r^2 f_{10}.
\end{equation}
Therefore, all determining equations are satisfied. Hence this case does not admit any nontrivial integrals of motion.

\paragraph{Subcase 2.} $V_4=\alpha_1$

From the determining equation \eqref{eq:2.13}, we obtain
\begin{equation}
	f'_3 + 2 \alpha_1 f'_4 = 0.
\end{equation}
Hence
\begin{equation}
	\label{eq:2.147}
	f_3 = d_1 - 2 \alpha_1 f_4,
\end{equation}
where $d_1$ is a constant. In what follows, all $d_i$ are constants.

Equation \eqref{eq:2.6} gives
\begin{equation}
	\label{eq:2.148}
	\big(1 - 4 \alpha_1^2 \big) f'_4 = 0.
\end{equation}
Therefore,
\begin{equation}
	f_4 = d_2 \quad \text{or} \quad \alpha_1 = \pm \frac{1}{2}.
\end{equation}

\paragraph{S1.} $f_4=d_2$ 

Equations \eqref{eq:2.7} and \eqref{eq:2.17} reduce to
\begin{equation}
	\label{eq:2.150}
	2 \alpha_1 f_7 - d_2 V_1 = 0, \quad f_1 + f_2 = 0.
\end{equation}
Hence
\begin{equation}
	V_1 = \frac{2 \alpha_1 f_7}{d_2}, \quad f_1 = -f_2.
\end{equation}
Here $d_2 \neq 0$, since the case $f_4=d_2=0$ has already been considered in connection with \eqref{eq:2.144}.

Equation \eqref{eq:2.2} becomes
\begin{equation}
	\label{eq:2.152}
	(d_1 + d_2 - 2 \alpha_1 d_2) f_7 = 0.
\end{equation}
Therefore, either
\begin{equation}
	f_7 = 0 \quad \text{or} \quad d_2 = \frac{d_1}{2 \alpha_1 - 1}.
\end{equation}
Note that here for the second case we have assumed that $\alpha_1\neq 1/2$ since this case will be investigated separately.

We first consider the case $f_7 = 0$.

\paragraph{I.} $f_7=0$

From the determining equation \eqref{eq:2.5}, we have
\begin{equation}
	2 r f_{10} + r^2 f'_{10} + f'_8 = 0.
\end{equation}
Hence
\begin{equation}
	f_{10} = \frac{d_3}{r^2} - \frac{f_8}{r^2},
\end{equation}
where $d_3$ is a constant. Substituting the above relations into \eqref{eq:2.8} gives $f_2 = 0$. Thus, all determining equations coming from the second-order terms are satisfied.

We now consider the determining equations corresponding to the first- and zeroth-order terms, using the relations obtained above. From \eqref{eq:feq_29} and \eqref{eq:feq_98}, we obtain
\begin{equation}
	\label{eq:2.156}
	d_3 V_2 = 0, \quad d_3 V_3 = 0.
\end{equation}
Therefore, either
\begin{equation}
	d_3 = 0 \quad \text{or} \quad V_2 = V_3 = 0.
\end{equation}

\paragraph{A.} $V_2=V_3=0$, \quad ($d_3\neq0$)

Substituting this condition into \eqref{eq:feq_11}, we obtain $V_0=\alpha_2$. Hence all determining equations corresponding to the first- and zeroth-order terms are satisfied. This leaves three arbitrary constants. The integrals of motion corresponding to $d_1$ and $d_2$ are given, respectively, by
\begin{equation}
	\label{eq:2.158}
	Y_{P}^{24} = -2 \alpha_1 \big(\vec{\sigma}_1, \vec{p} \big) + \big(\vec{\sigma}_2, \vec{p} \big),
\end{equation}
\begin{equation}
	\label{eq:2.159}
	Y_{P}^{25} = \big(\vec{\sigma}_1, \vec{p} \big).
\end{equation}
The integral of motion corresponding to $d_3$ coincides with that given in~\eqref{eq:2.140}.

\paragraph{B.} $d_3=0$ ,\quad ($V_2\neq0, \,\, V_3\neq0$)

Substitution into \eqref{eq:feq_8} and \eqref{eq:feq_10} gives $d_1=d_2=0$. Therefore, all determining equations are satisfied, and this case does not admit any nontrivial integrals of motion.

\paragraph{II.} $d_2=\frac{d_1}{2\alpha_1-1},\,\,\, \big(\alpha_1\neq\frac{1}{2}\big)$

From \eqref{eq:2.5}, we have
\begin{equation}
	2 r f_{10} + r^2 f'_{10} + f'_8 = 0.
\end{equation}
Hence
\begin{equation}
	f_{10} = \frac{d_4}{r^2} - \frac{f_8}{r^2},
\end{equation}
where $d_4$ is a constant.

Equation \eqref{eq:2.15} becomes
\begin{equation}
	\label{eq:2.162}
	d_1 f_2 + (2 \alpha_1 - 1)d_4 f_7 = 0.
\end{equation}
We now distinguish two cases according to whether $d_4$ vanishes.

\paragraph{A.} $d_4\neq0$

In this case, \eqref{eq:2.162} gives
\begin{equation}
	f_7 = - \frac{d_1 f_2}{(2 \alpha_1 - 1)d_4}.
\end{equation}
Then all determining equations coming from the second-order terms are satisfied.

We now consider the equations corresponding to the first- and zeroth-order terms, using the relations obtained above. Equation \eqref{eq:feq_10} becomes
\begin{equation}
	2 \alpha_1 r f_2^2 - d_4 \big( 2 d_4 r V_3 + \hbar f'_2 \big) = 0.
\end{equation}
Hence
\begin{equation}
	V_3 = \frac{2 \alpha_1 r f_2^2 - d_4 \hbar f'_2}{2 d_4^2 r}.
\end{equation}
Equation \eqref{eq:feq_94} reads
\begin{equation}
	4 d_4 V'_2 + \hbar \Big[ 2 (4 \alpha_1 - 1) f'_2 + (2 \alpha_1 - 1) r f''_2 \Big] = 0.
\end{equation}
Therefore,
\begin{equation}
	V_2 = \frac{\hbar \big[ (1 - 6 \alpha_1) f_2 + (1 - 2 \alpha_1) r f'_2 \big]}{4 d_4} + d_5.
\end{equation}
Substitution into \eqref{eq:feq_65} gives $d_5 = 0$.

From \eqref{eq:feq_11}, we obtain
\begin{equation}
	4 r f_2^2 + 4 r^2 f_2 f'_2 - d_4 \Big[ 4 (1 + 2 \alpha_1) \hbar f'_2 + 4 d_4 V'_0 + (1 + 2 \alpha_1) \hbar r f''_2 \Big] = 0.
\end{equation}
This gives
\begin{equation}
	V_0 = - \frac{\hbar (2 \alpha_1 + 1) (3 f_2 + r f'_2)}{4 d_4} + \frac{r^2 f_2^2}{2 d_4^2} + d_6.
\end{equation}
Hence all first- and zeroth-order determining equations are satisfied. There are two integrals of motion in this case: one is associated with the arbitrary constant $d_1$, while the other represents an infinite family of integrals of motion.
\begin{align}
    \label{eq:2.170}
    Y_{P}^{26} &= \frac{1}{1-2 \alpha_1} \left[ (\vec{\sigma}_1-\vec{\sigma}_2, \vec{p}) + \frac{f_2}{d_4} (\vec{x}, \vec{\sigma}_1 \times \vec{\sigma}_2) \right], \\
    \label{eq:2.171}
    Y_{P}^{27} &= f_2 (\vec{\sigma}_2-\vec{\sigma}_1, \vec{x}) \notag \\
    &\quad + \frac{d_4}{r^2} \Big[ (\vec{x}, \vec{\sigma}_1 \times \vec{\sigma}_2)(\vec{x}, \vec{p}) + (\vec{\sigma}_1, \vec{x})(\vec{\sigma}_2, \vec{L}) - (\vec{\sigma}_2, \vec{x})(\vec{\sigma}_1, \vec{L}) \Big]\notag\\
    &=f_2 (\vec{\sigma}_2-\vec{\sigma}_1, \vec{x}) +d_4 (\vec{p}, \vec{\sigma}_1 \times \vec{\sigma}_2)
\end{align}

\paragraph{B.} $d_4=0$

Substituting this condition into \eqref{eq:2.8}, we obtain $f_2=0$. Thus, all determining equations coming from the second-order terms are satisfied. The remaining first- and zeroth-order equations are handled by substitution, after eliminating redundant relations.

From \eqref{eq:feq_93}, we obtain
\begin{equation}
	1 + 4 \alpha_1 (3 \alpha_1 - 2) \hbar f_7 - 4 d_1 V_2 + (1 - 2 \alpha_1)^2 \hbar r f'_7 = 0.
\end{equation}
Hence
\begin{equation}
	V_2 = \frac{\hbar (2 \alpha_1 - 1) \big[ (6 \alpha_1 - 1) f_7 + (2 \alpha_1 - 1) r f'_7 \big]}{4 d_1}, \quad d_1 \neq 0.
\end{equation}
If $d_1=0$, then $f_7=0$, which corresponds to the case already considered in connection with \eqref{eq:2.152}.

Equation \eqref{eq:feq_10} becomes
\begin{equation}
	2 (1 - 2 \alpha_1)^2 \alpha_1 \, r f_7^2 - d_1 \big[ 2 d_1 r V_3 + (2 \alpha_1 - 1) \hbar f'_7 \big] = 0.
\end{equation}
Therefore,
\begin{equation}
	V_3 = \frac{(2 \alpha_1 - 1) \big[ r f_7^2 (4 \alpha_1^2 - 2 \alpha_1) + d_1 \hbar f'_7 \big]}{2 d_1^2 r}.
\end{equation}
Finally, \eqref{eq:feq_95} gives
\begin{equation}
	4 (1 - 2 \alpha_1)^2 r f_7 (f_7 + r f'_7) - d_1 \Big[ 4 (1 - 4 \alpha_1^2) \hbar f'_7 + 4 d_1 V'_0 + (1 - 4 \alpha_1^2) \hbar r f''_7 \Big] = 0.
\end{equation}
Thus,
\begin{equation}
	V_0 = \frac{(2 \alpha_1 - 1) \big[ 3 (2 \alpha_1 + 1) d_2 \hbar f_7 + 2 (2 \alpha_1 - 1) r^2 f_7^2 + (2 \alpha_1 + 1) d_2 \hbar r f'_7 \big]}{4 d_1^2} + d_7.
\end{equation}

Hence all determining equations are satisfied. No arbitrary constants remain, but the arbitrary real function $f_7$ is still present. We therefore obtain an infinite family of integrals of motion given by
\begin{align}
    \label{eq:2.178}
    Y_{P}^{28} &= \frac{d_1}{1 - 2\alpha_1} (\vec{\sigma}_1-\vec{\sigma}_2, \vec{p}) + f_7 (\vec{x}, \vec{\sigma}_1 \times \vec{\sigma}_2).
\end{align}

\paragraph{S2.} $\alpha_1=\frac{1}{2}$

From the determining equation \eqref{eq:2.17}, we obtain
\begin{equation}
	f_1+f_2=0,
\end{equation}
and hence $f_1=-f_2$. Equation \eqref{eq:2.2} gives
\begin{equation}
	d_1 V_1 = 0.
\end{equation}
Therefore,
\begin{equation}
	d_1 = 0 \quad \text{or} \quad V_1 = 0.
\end{equation}

\paragraph{I.} $d_1=0$

Equation \eqref{eq:2.24} gives
\begin{equation}
	2 r f_7 - 2 r f_4 V_1 + \hbar f'_4 = 0.
\end{equation}
Hence
\begin{equation}
	V_1 = \frac{2 r f_7 + \hbar f'_4}{2 r f_4}, \quad f_4 \neq 0.
\end{equation}
If $f_4=0$, then $f_3=0$, which was already considered in connection with \eqref{eq:2.144}.

The determining equation \eqref{eq:2.20} becomes
\begin{equation}
	\hbar f'_8 - 2 r f_2 + \hbar r^2 f'_{10} 
	- \frac{f_8 \,(2 r f_7 + \hbar f'_4)}{f_4} 
	+ 2 r f_{10} \left[ \hbar - \frac{r (2 r f_7 + \hbar f'_4)}{2 f_4} \right] = 0.
\end{equation}
Therefore,
\begin{equation}
	f_2 =
	\frac{
		\hbar \big[ f_4 ( r^2 f'_{10} + f'_8 ) - f_8 f'_4 \big]
		- r f_{10} \big[ r ( 2 r f_7 + \hbar f'_4 ) - 2 \hbar f_4 \big]
		- 2 r f_7 f_8
	}{2 r f_4}.
\end{equation}
Thus, all determining equations coming from the second-order terms are satisfied.

We now consider the first- and zeroth-order equations. From \eqref{eq:feq_93}, we obtain
\begin{equation}
	2 \hbar r f_7 - 4 r f_4 V_2 + \hbar^2 f'_4 = 0.
\end{equation}
Hence
\begin{equation}
	V_2 = \frac{\hbar (2 r f_7 + \hbar f'_4)}{4 r f_4}.
\end{equation}

Equation \eqref{eq:feq_8} reduces to
\begin{equation}
	4 r^3 f_7^2 - 8 r^3 f_4^2 V_3 - \hbar^2 r {f'_4}^2 
	+ 2 \hbar f_4 \big(-\hbar f'_4 + 2 r^2 f'_7 + \hbar r f''_4\big) = 0.
\end{equation}
Hence
\begin{equation}
	V_3 = 
	\frac{\hbar \big( 2 r^2 f'_7 - \hbar f'_4 + \hbar r f''_4 \big)}{4 r^3 f_4} 
	- \frac{\hbar^2 {f'_4}^2}{8 r^2 f_4^2} 
	+ \frac{f_7^2}{2 f_4^2}.
\end{equation}

Finally, the determining equation \eqref{eq:feq_95}, which contains $V_0$, becomes
\begin{align}
	& \hbar^2 f_4^2 \big(r(8 r f'_7 + 2 \hbar f''_4 + 2 r^2 f''_7 + \hbar r f'''_4) - 2 \hbar f'_4 \big)
	- 2 \hbar^2 r f_4 f'_4 (\hbar f'_4 + r^2 f'_7 + \hbar r f''_4) \notag \\
	&\quad + \hbar \big[ -4 r^4 f_7^2 f'_4 + \hbar^2 r^2 {f'_4}^3 - 4 r^2 {f'_4}^3 V'_0 
	- 2 r f_4 \big(-2 r^2 f_7^2 + f_7 (3 \hbar r f'_4 - 2 r^3 f'_7)\big) \big] = 0.
\end{align}
Solving this equation for $V_0$, we obtain
\begin{equation}
	V_0 = \frac{4 r^3 f_7^2 - \hbar^2 r f'_4 
	+ 2 \hbar f_4 (6 r f_7 + 2 \hbar f'_4 + 2 r^2 f'_7 + \hbar r f''_4)}{8 r f_4^2} + d_8.
\end{equation}

Thus, all determining equations derived from the coefficients of the first- and zeroth-order terms are satisfied. In this case,
there are no arbitrary constants, however, some real functions still remain arbitrary. Thus, we find an infinite family of integrals of motion given by
\begin{align}\label{eq:2.191}
Y_P^{29}
&=
\Bigg[
\frac{f_8(2r f_7+\hbar f_4')}{2r f_4}
+\frac{r f_{10}(2r f_7+\hbar f_4')}{2 f_4}
-\hbar f_{10}
-\frac{\hbar}{2r}\big(r^2 f_{10}'-i f_4'+f_8'\big)
\Bigg](\vec{\sigma}_1-\vec{\sigma}_2,\vec{x}) \notag\\
&\quad
-f_4(\vec{\sigma}_1-\vec{\sigma}_2,\vec p)
+\Bigg[
f_7-i\hbar f_{10}
-\frac{i\hbar}{2r}\big(r^2f_{10}'+f_8'\big)
\Bigg](\vec{x},\vec{\sigma}_1\times\vec{\sigma}_2) \notag\\
&\quad
+(f_8+r^2f_{10})(\vec{\sigma}_1\times\vec{\sigma}_2,\vec p).
\end{align}
\paragraph{II.} $V_1=0$,\quad $(d_1\neq0)$

From the determining equation \eqref{eq:2.22}, we obtain
\begin{equation}
	2 r f_7 + \hbar f'_4 = 0.
\end{equation}
Hence
\begin{equation}
	f_7 = - \frac{\hbar f'_4}{2 r}.
\end{equation}
Equation \eqref{eq:2.25} gives
\begin{equation}
	2 \hbar r f_{10} - 2 r f_2 + \hbar \big( r^2 f'_{10} + f'_8 \big) = 0.
\end{equation}
Therefore,
\begin{equation}
	f_2 = \frac{\hbar \big( 2 r f_{10} + r^2 f'_{10} + f'_8 \big)}{2 r}.
\end{equation}
Thus, all determining equations coming from the second-order terms are satisfied. We now consider the first- and zeroth-order equations, after eliminating redundant relations.

Equation \eqref{eq:feq_93} reduces to
\begin{equation}
	d_1 - 2 f_4 = 0.
\end{equation}
Hence
\begin{equation}
	f_4 = \frac{d_1}{2}.
\end{equation}
Substituting this condition into \eqref{eq:feq_2} gives $V_3=0$.

From \eqref{eq:feq_29}, we obtain
\begin{equation}
	r^2 f_{10} + f_8 = 0.
\end{equation}
Thus,
\begin{equation}
	f_{10} = -\frac{f_8}{r^2}.
\end{equation}
Finally, \eqref{eq:feq_4} gives
\begin{equation}
	V_2' + V_0' = 0.
\end{equation}
Therefore,
\begin{equation}
	V_2 = d_9 - V_0,
\end{equation}
where $V_0$ remains arbitrary. All first- and zeroth-order determining equations are satisfied. The integral of motion corresponding to $d_1$ is
\begin{equation}
	\label{eq:2.199}
	Y_P^{30} = \frac{1}{2} \big( \vec{\sigma}_1 + \vec{\sigma}_2, \vec{p} \big).
\end{equation}

\paragraph{S3.} $\alpha_1=-\frac{1}{2}$

From the determining equation \eqref{eq:2.17}, we obtain
\begin{equation}
	f_1 + f_2 = 0,
\end{equation}
and hence $f_1=-f_2$. Equation \eqref{eq:2.12} gives
\begin{equation}
	2r f_{10} + r^2 f_{10}' + f_8' = 0.
\end{equation}
Therefore,
\begin{equation}
	f_{10} = \frac{d_{10}}{r^2} - \frac{f_8}{r^2}.
\end{equation}
From \eqref{eq:2.25}, we obtain
\begin{equation}
	f_2 - d_{10} V_1 = 0.
\end{equation}
Thus, for $d_{10}\neq0$,
\begin{equation}
	V_1 = \frac{f_2}{d_{10}}.
\end{equation}
Using \eqref{eq:2.2} and \eqref{eq:2.24}, we obtain
\begin{equation}
	\frac{d_1 f_2}{d_{10}} - 2 f_7 = 0.
\end{equation}
Hence
\begin{equation}
	f_7 = \frac{d_1 f_2}{2 d_{10}}.
\end{equation}

Assume first that $f_4 \neq -\frac{d_1}{2}$. Then \eqref{eq:2.14} gives
\begin{equation}
	r f_2(d_1 + 2f_4) - d_{10}\hbar f_4' = 0.
\end{equation}
Therefore,
\begin{equation}
	f_2 = \frac{d_{10}\hbar f_4'}{r(d_1 + 2f_4)}.
\end{equation}

Thus, all determining equations coming from the second-order terms are satisfied. We now consider the first- and zeroth-order equations. From \eqref{eq:feq_4}, \eqref{eq:feq_10}, and \eqref{eq:feq_93}, we obtain
\begin{equation}
	V_2 = 
	\frac{
		\hbar^2 \Big[ (d_1 + f_4) f_4' - 2 r {f_4'}^2 + r (d_1 + 2 f_4) f_4'' \Big]
	}{
		2 r (d_1 + 2 f_4)^2
	},
\end{equation}

\begin{equation}
	V_3 = 
	\frac{
		\hbar^2 \Big[ (d_1 + f_4) f_4' + r {f_4'}^2 - r (d_1 + 2 f_4) f_4'' \Big]
	}{
		2 r^3 (d_1 + 2 f_4)^2
	},
\end{equation}

\begin{equation}
	V_0 = 
	\frac{
		\hbar^2 f_4'^2
	}{
		2 (d_1 + 2 f_4)^2
	} + d_{11}.
\end{equation}

All the first- and zeroth-order determining equations are thus satisfied. There are two integrals of motion in this case: one is associated with the arbitrary constant $d_{10}$, while the other represents an infinite family of integrals of motion depending on the arbitrary function $f_4(r)$. They are given by
\begin{align}
	\label{eq:2.210}
	Y_{P}^{31} &= \frac{\hbar f_4'}{r(d_1+2f_4)}(\vec{x},\vec{\sigma}_2-\vec{\sigma}_1) \notag \\
	&\quad + \frac{1}{r^2}\Big[(\vec{x},\vec{p})(\vec{x},\vec{\sigma}_1\times\vec{\sigma}_2)
	-(\vec{\sigma}_1, \vec{x})(\vec{\sigma}_2, \vec{L})
	+(\vec{\sigma}_2, \vec{x})(\vec{\sigma}_1, \vec{L}) \Big] \notag \\
    &=\frac{\hbar f_4'}{r(d_1+2f_4)}(\vec{x},\vec{\sigma}_2-\vec{\sigma}_1)+
    (\vec{p},\vec{\sigma}_1\times\vec{\sigma}_2) \\
    \label{eq:2.211} 
    Y_{P}^{32} &= d_1 (\vec{\sigma}_1, \vec{p})+f_4 (\vec{\sigma}_1+\vec{\sigma}_2,  \vec{p}) - \frac{i\hbar f_4'}{2r}(\vec{\sigma}_1+\vec{\sigma}_2, \vec{x}) + \frac{d_1 \hbar f_4'}{2 r (d_1 + 2 f_4)} (\vec{x}, \vec{\sigma}_1 \times \vec{\sigma}_2).
	\end{align}

Assume now that $f_4=-\frac{d_1}{2}$. Then all determining equations coming from the second-order terms are satisfied. We next consider the first- and zeroth-order equations. Equation \eqref{eq:feq_93} becomes
\begin{equation}
	\hbar(2f_2 + r f_2') - 2 d_{10} V_2 = 0.
\end{equation}
Hence
\begin{equation}
	V_2 = \frac{\hbar(2f_2 + r f_2')}{2 d_{10}}.
\end{equation}
Equation \eqref{eq:feq_98} gives
\begin{equation}
	r f_2^2 + 2 d_{10}^2 r V_3 + d_{10}\hbar f_2' = 0.
\end{equation}
Therefore,
\begin{equation}
	V_3 = -\frac{r f_2^2 + d_{10}\hbar f_2'}{2 d_{10}^2 r}.
\end{equation}
Finally, \eqref{eq:feq_11} gives
\begin{equation}
	r f_2(f_2 + r f_2') - d_{10}^2 V_0' = 0.
\end{equation}
Thus,
\begin{equation}
	V_0 = \frac{r^2 f_2^2}{2 d_{10}^2} + d_{12}.
\end{equation}

Hence, all first- and zeroth-order determining equations are satisfied. In this case one obtains two integrals of motion: $Y_P^{32}$, arising from the free constant $d_{2}$, and $Y_P^{33}$, which depends on the arbitrary function $f_2(r)$.
\begin{align}
	\label{eq:2.218}
	Y_{P}^{33} &= \frac{1}{2} \left[ (\vec{\sigma}_1 - \vec{\sigma}_2, \vec{p}) + \frac{f_2}{d_{10}} (\vec{x}, \vec{\sigma}_1 \times \vec{\sigma}_2) \right], \\
    Y_{P}^{34} &= f_2 (\vec{\sigma}_2 - \vec{\sigma}_1, \vec{x}) \notag \\
    &\quad + \frac{d_{10}}{r^2} \Big[ (\vec{x}, \vec{\sigma}_1 \times \vec{\sigma}_2)(\vec{x}, \vec{p}) - (\vec{\sigma}_1, \vec{x})(\vec{\sigma}_2, \vec{L}) + (\vec{\sigma}_2, \vec{x})(\vec{\sigma}_1, \vec{L})  \Big] \notag \\    \label{eq:2.219}
    &= f_2 (\vec{\sigma}_2 - \vec{\sigma}_1, \vec{x}) +d_{10} (\vec{p}, \vec{\sigma}_1 \times \vec{\sigma}_2).
\end{align}

Assuming $d_{10}=0 $, all determining equations derived from the coefficients of the first- and zeroth-order terms are satisfied. 
However, this case is a special case of a previously investigated one and yields only the integral \eqref{eq:2.211}.

\subsubsection*{Case 4. {$V_{4}\neq0, \,\,\,  V_{5} \neq 0$}}

From the determining equations corresponding to the third-order terms, we obtain
\begin{equation}
	f_3 = f_4 = f_5 = f_6 = 0, \quad 
	f_8 = -r^2 f_{10}, \quad 
	f_9 = f_{10}, \quad 
	f_{11} = -f_{10}.
\end{equation}
We next consider the equations at the quadratic level. Substituting the above relations into 
\eqref{eq:2.5}, \eqref{eq:2.6}, and \eqref{eq:2.11} gives $
	f_1=f_2=f_7=0.$
Therefore, all determining equations are satisfied, and this case does not admit any nontrivial integrals of motion.

Based on the preceding analysis, we now state the main theorem of this paper.

\begin{theorem}\label{thm:main}
Assume that all constants appearing below are real, and let $\epsilon^2=1$. The only spherically symmetric superintegrable systems with spin admitting first-order pseudo-scalar integrals of motion are the following:
\begin{enumerate}
\item $V_{4}(r)=0, V_{5}(r)=0,$
\begin{align*}
V_0=V_0(r), \quad V_{3}(r)=0, \quad  V_{1}(r)=\frac{\hbar}{r^2}, \quad V_{2}(r)=\frac{\hbar^2}{4r^2}.
\end{align*}
There are seven non-trivial pseudo-scalar integrals of motion given by $(\ref{eq:2.41})$, $(\ref{eq:2.42})$, $(\ref{eq:2.43})$, $(\ref{eq:2.44})$, $(\ref{eq:2.45})$, $(\ref{eq:2.46})$ and $(\ref{eq:2.47})$.
\item $V_{4}(r)=0, V_{5}(r)=0,$
\begin{align*}
V_0=V_0(r), \quad V_2=V_2(r), \quad V_{3}(r)=0, \quad V_{1}(r)=\frac{\hbar}{r^2}.  
\end{align*}
There are three non-trivial pseudo-scalar integrals of motion given by $(\ref{eq:2.49})$, $(\ref{eq:2.50})$ and $(\ref{eq:2.51})$.
\item $V_{4}(r)=0, V_{5}(r)=0,$
\begin{align*}
V_0=V_0(r), \quad V_3=V_3(r), \quad V_{1}(r)=\frac{\hbar}{r^2}, \quad V_2(r)=\frac{\hbar^2}{4r^2}.
\end{align*}
There are three non-trivial pseudo-scalar integrals of motion given by $(\ref{eq:2.45})$, $(\ref{eq:2.46})$ and $(\ref{eq:2.47})$.
\item $V_{4}(r)=0, V_{5}(r)=0,$
\begin{align*}
    V_0=V_0(r), \quad V_2=V_2(r), \quad V_3=V_3(r), \quad V_{1}(r)=\frac{\hbar}{r^2}.
\end{align*}
There is only one non-trivial pseudo-scalar integral of motion given by $(\ref{eq:2.49})$.
\item $V_{4}(r)=0, V_{5}(r)=0,$
\begin{align*}
    V_0=V_0(r), \quad V_2=V_2(r),  \quad V_{1}(r)=\frac{\hbar}{r^2}, \quad V_{3}(r)=\frac{\hbar^2-4r^2V_2}{2r^4}.
\end{align*}
There are three non-trivial pseudo-scalar integrals of motion given by $(\ref{eq:2.49})$, $(\ref{eq:2.58})$ and $(\ref{eq:2.59})$.
\item $V_{4}(r)=0, V_{5}(r)=0,$
\begin{align*}
   &V_1=\frac{\hbar}{r^2}\Bigg(1+\frac{\epsilon}{\sqrt{1+\beta r^2}}\Bigg),\\ 
   &V_2(r)
= \frac{\hbar^2}{4 r^2}
\left[
\frac{2+5\beta r^2+\beta^2 r^4}{(1+\beta r^2)^2}
+ \frac{2\epsilon}{\sqrt{1+\beta r^2}}
\right], \quad V_3=-\frac{\epsilon \hbar^2 (2 (1+\beta r^2)^{3/2} + \epsilon (2+3 \beta r^2))}{4 r^4 (1+\beta r^2)^2}, \\ 
   & V_0(r) = k_5 + \frac{\hbar^2}{\epsilon r^2 \sqrt{1+\beta r^2}} + \frac{\hbar^2 (4+3 \beta r^2 (2+\beta r^2))}{4 (r+\beta r^3)^2}.  
\end{align*}
There are three non-trivial pseudo-scalar integrals of motion given by $(\ref{eq:2.86})$, $(\ref{eq:2.87})$ and $(\ref{eq:2.88})$.
\item $V_{4}(r)=0, V_{5}(r)=0,$
\begin{align*}
    & V_1=\frac{\hbar}{r^2}\Bigg(1+\frac{\epsilon}{\sqrt{1+\beta r^2}}\Bigg), \\ 
    &V_2(r)
=
\frac{\hbar^2}{4r^2}
\left[
\frac{2+5\beta r^2+\beta^2 r^4}{(1+\beta r^2)^2}
+
\frac{2\epsilon}{\sqrt{1+\beta r^2}}
\right], \quad V_3(r)
=
-\frac{\hbar^2}{4r^4}
\left[
\frac{2\epsilon}{\sqrt{1+\beta r^2}}
+
\frac{2+3\beta r^2}{(1+\beta r^2)^2}
\right], \\ 
    &V_0(r)
= k_7
+ \frac{\hbar^2}{\epsilon r^2\sqrt{1+\beta r^2}}
+ \frac{\hbar^2}{4r^2}\,
\frac{4+6\beta r^2+3\beta^2 r^4}{(1+\beta r^2)^2}.
\end{align*}
There are two non-trivial pseudo-scalar integrals of motion given by $(\ref{eq:2.98})$ and $(\ref{eq:2.99})$.
\item $V_{4}(r)=0, V_{5}(r)=0,$
\begin{align*}
    &V_1=\frac{\hbar}{r^2}\Bigg(1+\frac{\epsilon}{\sqrt{1+\beta r^2}}\Bigg), \\ 
    &V_2(r)
=
-\frac{\epsilon \kappa_1 \hbar}{4(1+\beta r^2)^{3/2}}
+\frac{\hbar^2}{4r^2}
\left[
\frac{2+5\beta r^2+\beta^2 r^4}{(1+\beta r^2)^2}
+
\frac{\epsilon(2+3\beta r^2)}{(1+\beta r^2)^{3/2}}
\right],\\ 
    &V_3(r)
=
-\frac{\hbar^2}{4r^4}
\left[
\frac{2\epsilon}{\sqrt{1+\beta r^2}}
+
\frac{2+3\beta r^2}{(1+\beta r^2)^2}
\right], \\ 
    &V_0=k_8 + \frac{\hbar \kappa_1}{1+\beta r^2} - \frac{\kappa_1^2}{2\beta (1+\beta r^2)} + \frac{\hbar \big( 4\hbar + (\beta\hbar + 3\kappa_1)r^2 \big)}{4\epsilon r^2 (1+\beta r^2)^{3/2}} + \frac{\hbar^2 (2+\beta r^2)^2}{4 r^2 (1+\beta r^2)^2}.
\end{align*}
There is only one non-trivial pseudo-scalar integral of motion given by $(\ref{eq:2.103})$.
\item $V_{4}(r)=0, V_{5}(r)=0,$
\begin{align*}
    &V_0=V_0(r), \quad V_1=\frac{\hbar}{r^2}\Bigg(1+\frac{\epsilon}{\sqrt{1+\beta r^2}}\Bigg), \\ 
    &V_2(r)=k_9
+\frac{3\hbar^2}{2\epsilon r^2\sqrt{1+\beta r^2}}
+\frac{\hbar^2}{4r^2}\,
\frac{6+11\beta r^2+4\beta^2 r^4}{(1+\beta r^2)^2}
-
V_0(r),\\ 
    &V_3(r)=-\frac{\hbar^2}{4r^4}\left[
\frac{2\epsilon}{\sqrt{1+\beta r^2}}
+
\frac{2+3\beta r^2}{(1+\beta r^2)^2}
\right].
\end{align*}
There is only one non-trivial pseudo-scalar integral of motion given by $(\ref{eq:2.106})$.
\item $V_{4}(r)=0, V_{5}(r)=0,$
\begin{align*}
   &V_1=\frac{\hbar}{r^2}\Bigg(1+\frac{\epsilon}{\sqrt{1+\beta r^2}}\Bigg), \quad V_2(r)
=
\frac{\hbar^2}{4r^2}
\left(
1+\frac{\epsilon}{\sqrt{1+\beta r^2}}
\right)
+\frac{\hbar}{4k_3\epsilon}\bigl(f_1(r)+r f_1'(r)\bigr), \\
    &V_3(r)
=
\frac{\hbar^2}{4r^4}
\left[
\frac{2\epsilon}{\sqrt{1+\beta r^2}}
+
\frac{2+3\beta r^2}{(1+\beta r^2)^2}
\right]
-
\frac{f_1'(r)}{2k_3\epsilon r(1+\beta r^2)}
-
\frac{\beta f_1(r)}{2k_3\epsilon (1+\beta r^2)},\\
    &V_0(r)
= k_{10}
+ \frac{\hbar^2}{4r^2}
\left[
\frac{7+13\beta r^2+5\beta^2 r^4}{(1+\beta r^2)^2}
+ \frac{7\epsilon}{\sqrt{1+\beta r^2}}
\right]
+ \frac{\hbar f_1(r)}{4k_3}
\left[
\frac{\epsilon(1-\beta r^2)}{1+\beta r^2}
+ 4\sqrt{1+\beta r^2}
\right]
\\ 
&\qquad \qquad+ \frac{r^2 f_1(r)^2}{2k_3^2}
- \frac{\epsilon \hbar r f_1'(r)}{4k_3}.
\end{align*}
There are two non-trivial pseudo-scalar integrals of motion given by $(\ref{eq:2.115})$ and $(\ref{eq:2.116})$.
\item $V_{4}(r)=0, V_{5}(r)=0,$
\begin{align*}
   &V_1=\frac{\hbar}{r^2}\Bigg(1+\frac{\epsilon}{\sqrt{1+\beta r^2}}\Bigg), \quad V_2=\frac{\hbar^2}{4r^2}\Bigg(1+\frac{2}{\epsilon\sqrt{1+\beta r^2}}\Bigg), \\  &
V_3(r)=\frac{\hbar^2}{4r^4}\left[\frac{2+3\beta r^2}{(1+\beta r^2)^2}-\frac{2\epsilon}{\sqrt{1+\beta r^2}}\right], \\ 
   &
V_0(r)=k_{11}+\frac{\hbar^2}{\epsilon r^2\sqrt{1+\beta r^2}}+\frac{\hbar^2}{4r^2}\,
\frac{5+9\beta r^2+3\beta^2 r^4}{(1+\beta r^2)^2}.
\end{align*}
There is only one non-trivial pseudo-scalar integral of motion given by $(\ref{eq:2.130})$.
\item $V_{4}(r)=0, V_{5}(r)=0,$
\begin{align*}
    V_{1}(r)=0, \quad V_{2}(r)=0, \quad V_{3}(r)=0, \quad V_0(r)=\kappa_2.
\end{align*}
There is only one non-trivial pseudo-scalar integral of motion given by $(\ref{eq:2.140})$.
\item $V_{5}(r)=0,$
\begin{align*}
    V_{1}(r)=0, \quad V_{2}(r)=0, \quad V_{3}(r)=0, \quad V_0(r)=\alpha_2, \quad V_{4}(r)=\alpha_1.
\end{align*}
There are three non-trivial pseudo-scalar integrals of motion given by $(\ref{eq:2.140})$, $(\ref{eq:2.158})$ and $(\ref{eq:2.159})$.
\item $V_{5}(r)=0,$
\begin{align*}
    &V_0(r)=d_6
+\frac{r^2 f_2(r)^2}{2d_4^2}
-\frac{(1+2\alpha_1)\hbar}{4d_4}\bigl(3f_2(r)+r f_2'(r)\bigr), \quad  V_{1}(r)=-\frac{2\alpha_1 f_2}{d_4}, \\ 
    &V_{2}(r)=\frac{\hbar\big(1-6\alpha_1\big)f_2+\big(1-2\alpha_1)\hbar r f'_2}{4d_4}, \quad V_{3}(r)=\frac{2\alpha_1 r {f_2}^2-d_4 \hbar f'_2}{2{d_4}^2 r}, \quad V_{4}(r)=\alpha_1. 
\end{align*}
There are two non-trivial pseudo-scalar integrals of motion given by $(\ref{eq:2.170})$ and $(\ref{eq:2.171})$.
\item $V_{5}(r)=0,$
\begin{align*}
    V_0(r) &= d_7 + \frac{2\alpha_1-1}{4 d_1^2}
    \Big[ (2\alpha_1+1)d_1\hbar \big(3f_7 + r f_7'\big)
    + 2(2\alpha_1-1) r^2 f_7^2 \Big], \notag\\
    V_1(r) &= \frac{2\alpha_1(2\alpha_1-1)}{d_1}\,f_7, \quad 
    V_2(r) = \frac{(2\alpha_1-1)\hbar}{4 d_1}
    \Big[(6\alpha_1-1)f_7 + (2\alpha_1-1) r f_7'\Big], \notag\\
    V_3(r) &= \frac{2\alpha_1(2\alpha_1-1)^2}{2 d_1^2}\,f_7^2
    + \frac{(2\alpha_1-1)\hbar}{2 d_1 r}\,f_7', \quad V_4(r) = \alpha_1.
\end{align*}
There is only one non-trivial pseudo-scalar integral of motion given by $(\ref{eq:2.178})$.
\item $V_{5}(r)=0,$
\begin{align*}
    V_0(r) &= d_8 + \frac{1}{8r f_4^2}
    \Big[ 4r^3 f_7^2 - \hbar^2 r (f_4')^2 
    + 2\hbar f_4 \big( 2\hbar f_4' + r(6f_7 + 2r f_7' + \hbar f_4'') \big) \Big], \notag\\
    V_1(r) &= \frac{2r f_7 + \hbar f_4'}{2r f_4}, \qquad
    V_2(r) = \frac{\hbar\big(2r f_7 + \hbar f_4'\big)}{4r f_4}, \qquad
    V_4(r) = \frac{1}{2}, \notag\\
    V_3(r) &= \frac{1}{8 r^3 f_4^2}
    \Big[ 4r^3 f_7^2 - \hbar^2 r (f_4')^2 
    + \hbar f_4 \big(4 r^2 f_7' - 2\hbar f_4' + 2\hbar r f_4'' \big) \Big].
\end{align*}
There is only one non-trivial pseudo-scalar integral of motion given by $(\ref{eq:2.191})$.
\item $V_{5}(r)=0,$
\begin{align*}
   	V_0=V_0(r), \quad V_{1}(r)=0, \quad V_{3}(r)=0, \quad V_{2}(r)=d_9-V_0(r),  \quad V_{4}(r)=\frac{1}{2}.
\end{align*}
There is only one non-trivial pseudo-scalar integral of motion given by $(\ref{eq:2.199})$.
\item  $V_{5}(r)=0,$
\begin{align*}
    &V_0(r)=\frac{\hbar^2{f'_4}^2}{2(d_1+2f_4)^2}+d_{11},\quad V_{1}(r)=\frac{\hbar f'_4}{r(d_1+2f_4)}, \quad V_{4}(r)=-\frac{1}{2}, \\ 
    &V_2(r)=
\frac{\hbar^2}{2r\bigl(d_1+2f_4(r)\bigr)^2}
\left[
\bigl(d_1+2f_4(r)\bigr)\bigl(f_4'(r)+r f_4''(r)\bigr)
-2r\bigl(f_4'(r)\bigr)^2
\right], \\ 
    &V_3(r)=
\frac{\hbar^2}{2r^3\bigl(d_1+2f_4(r)\bigr)^2}
\left[
\bigl(d_1+2f_4(r)\bigr)\bigl(f_4'(r)-r f_4''(r)\bigr)
+r\bigl(f_4'(r)\bigr)^2
\right].
\end{align*}
There are two non-trivial pseudo-scalar integrals of motion given by $(\ref{eq:2.210})$ and $(\ref{eq:2.211})$.
\item $V_{5}(r)=0,$
\begin{align*}
    &V_0=\frac{r^2{f_2}^2}{2{d_{10}}^2}+d_{12}, \quad V_{1}(r)=\frac{f_2}{d_{10}}, \quad V_{2}(r)=\frac{\hbar\big(2f_2+rf'_2\big)}{2d_{10}}, \\  
    &V_{3}(r)=-\frac{r{f_2}^2+d_{10}\hbar f'_2}{2{d_{10}}^2r}, \quad V_{4}(r)=-\frac{1}{2}. 
\end{align*}
There are two non-trivial pseudo-scalar integrals of motion given by $(\ref{eq:2.218})$ and $(\ref{eq:2.219})$.
\end{enumerate}
\end{theorem}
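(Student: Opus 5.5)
The theorem is a summary of the case analysis carried out above, so the proof will consist of organizing that analysis into an exhaustive decision tree and checking that each leaf is recorded correctly. We start from the general symmetrized operator $Y_p=\sum_{j=1}^{11} f_j(r)P_j$ and the condition $[H,Y_p]=0$. The entries of this $4\times4$ matrix differential operator are split by derivative order (third, second, first, zeroth). The third-order equations involve $f_j$ only through products with $V_4$ and $V_5$. This forces the four-way split: Case~1 ($V_4=V_5=0$), Case~2 ($V_4=0$, $V_5\neq0$), Case~3 ($V_4\neq0$, $V_5=0$) and Case~4 ($V_4V_5\neq0$). Because every branching in the text is of the form ``either a factor vanishes or a coefficient vanishes'', the union of the branches covers all possibilities. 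This exhaustiveness is what makes the word ``only'' in the statement valid.

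Within each case we follow a fixed order.

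1. Solve the second-order system (the equations (\ref{eq:2.2})--(\ref{eq:2.25})). These are mostly linear first-order ODEs in the $f_j$ with coefficients depending on $V_1$. The key equation (\ref{eq:2.17}) gives $V_1=\hbar/r^2$ or $f_{10}=-f_{11}$. In the latter branch, (\ref{eq:2.63}) is a Bernoulli/Riccati-type equation for $V_1$; integrating it yields the family $V_1=\frac{\hbar}{r^2}\bigl(1+\epsilon/\sqrt{1+\beta r^2}\bigr)$, with $f_3=0$ as the alternative.
2. Substitute into the first- and zeroth-order equations of the appendix. These are algebraic in the integration constants multiplied by combinations of potentials, for example $(c_2-c_4)V_3=0$ or $c_6(\hbar^2-4r^2V_2)=0$. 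Each such equation produces a further binary split, which we record.
3. At every leaf, read off the surviving potentials and the free constants or functions. Each free constant, or arbitrary function in $Y_p$ not entering $H$, gives one integral.

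Cases~2 and~4 give only trivial solutions, so they contribute nothing. Case~3 produces items 13--19 through the split $f_3V_4'=0$, the condition $(1-4\alpha_1^2)f_4'=0$, and the subsequent branches.

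To finish we have to prune and deduplicate. Leaves with no free constant (e.g.\ Case~1, Subcase~2, S2.I, and b2) are discarded. Leaves whose integrals coincide with earlier ones are merged into a single item listing all their integrals, e.g.\ Subcase~1, S2.I.A reduces to $Y_P^{5,6,7}$. The gauge-induced potentials are likewise excluded, since they do not give genuinely new interactions. For the item-by-item statements we then directly verify $[H,Y_P^k]=0$ for each listed integral with the stated potentials, using the symmetrized form; this serves as an independent check of the integrations.

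The main obstacle is the reliability and exhaustiveness of the branching in the $V_1=\frac{\hbar}{r^2}(1+\epsilon/\sqrt{1+\beta r^2})$ family. There the equations are long, and several divisions are made by quantities such as $k_3$, $k_1-k_2$, $k_1+k_2$, $d_1$ and $d_{10}$. For each division we must check that the complementary branch (the divisor vanishing) is either treated explicitly or collapses to an earlier case. We must also check that degenerate parameter values ($\beta=0$, $\beta\to\infty$, $\alpha_1=\pm\tfrac12$) are either handled separately or are genuinely special cases of listed items. I would control this by symbolic simplification of each reduced equation before splitting. Wherever the text's intermediate formulas show inconsistencies (e.g.\ $\sqrt{1+\beta^2}$ vs.\ $\sqrt{1+\beta r^2}$), I would re-derive the final potentials directly from the commutator rather than from those intermediate formulas.
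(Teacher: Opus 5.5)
Your proposal is correct and follows the paper's own argument: the theorem is the collection of leaves of the same decision tree, namely the third-order split into the four $V_4,V_5$ cases, the alternative $V_1=\hbar/r^2$ versus $f_{10}=-f_{11}$ from (\ref{eq:2.17}), the integrated $V_1$ family versus $f_3=0$, and the subsequent factor-by-factor splits in the first- and zeroth-order equations, with trivial and duplicate leaves pruned. The paper does not carry out your a posteriori check of $[H,Y_P^k]=0$ for each listed integral, and it handles the complementary branches only implicitly; both are useful additions on your side rather than a different route.
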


\section{Symmetry algebra}
In this section we study the algebraic structures generated by the pseudo-scalar integrals of motion for selected superintegrable systems in our classification. 

Throughout this section, we include the Hamiltonian $H$, the rotational generators $J_i$ (and hence $\vec{J}^2$), and the spin exchange operator
\[
K=(\vec{\sigma}_1,\vec{\sigma}_2)
\]
as trivial integrals of motion.

\subsection{Symmetry algebra of Case~1}

We consider the first family of potentials given in Theorem~\ref{thm:main}:
\begin{equation}
    \label{eq:alg.1}
    V_0=V_0(r),\quad V_1=\frac{\hbar}{r^2},\quad V_2=\frac{\hbar^2}{4r^2},\quad V_3=V_4=V_5=0.
\end{equation}
With this choice the Hamiltonian takes the form
\begin{equation}
    \label{eq:alg.2}
    H=-\frac{\hbar^2}{2}\Delta+V_0(r)+\frac{\hbar}{2r^2}(\vec{\sigma}_1+\vec{\sigma}_2,\vec{L})+\frac{\hbar^2}{4r^2}(\vec{\sigma}_1,\vec{\sigma}_2).
\end{equation}
This system admits seven pseudo-scalar integrals of motion $Y_P^1,\dots,Y_P^7$ given by (\ref{eq:2.41})--(\ref{eq:2.47}). Among them, the integrals $Y_P^5$, $Y_P^6$ and $Y_P^7$  contain no derivatives:
\begin{equation}
    \label{eq:alg.3}
    Y_P^5=\frac{1}{r}\,(\vec{x}, \vec{\sigma}_1\times\vec{\sigma}_2),\qquad
    Y_P^6=\frac{1}{r}\,(\vec{\sigma}_1,\vec{x}),\qquad
    Y_P^7=\frac{1}{r}\,(\vec{\sigma}_2-\vec{\sigma}_1,\vec{x}).
\end{equation}

\begin{proposition}\label{prop:algebra}
The operators $K$, $Y_P^5$, $Y_P^6$ and $Y_P^7$ generate a polynomial symmetry algebra whose non-trivial commutation relations are
\begin{align}
\label{eq:alg.5}
& [K,\,Y_P^6]=2i\,Y_P^5, \qquad [K,\,Y_P^5]=4i\,Y_P^7, \qquad [K,\,Y_P^7]=-4i\,Y_P^5,\\[4pt]
\label{eq:alg.6}
& [Y_P^6,\,Y_P^7]=0,\\[4pt]
\label{eq:alg.7}
& [Y_P^6,\,Y_P^5]=2i\!\left(K-1+\frac{(Y_P^5)^2}{2}\right),\qquad
[Y_P^7,\,Y_P^5]=-4i\!\left(K-1+\frac{(Y_P^5)^2}{2}\right).
\end{align}
Moreover, the following identities hold:
\begin{equation}\label{eq:alg.8}
(Y_P^6)^2=1,\qquad (Y_P^7)^2=(Y_P^5)^2,
\end{equation}
and the combination
\begin{equation}\label{eq:alg.9}
Y_P^7+2\,Y_P^6=\frac{1}{r}\,(\vec{\sigma}_1+\vec{\sigma}_2,\vec{x})=\frac{2}{r}(\vec{S},\vec{x})
\end{equation}
commutes with $K$, $Y_P^5$, $Y_P^6$ and $Y_P^7$.
\end{proposition}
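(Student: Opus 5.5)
All operators $K$, $Y_P^5$, $Y_P^6$, $Y_P^7$ contain no derivatives. Each is, for fixed $\vec x$, a $4\times 4$ matrix built from $\vec n=\vec x/r$ and the Pauli vectors. All commutators and products are therefore computed pointwise in $\vec x$ as matrix identities, and the plan is to verify them by finite-dimensional spin algebra. I introduce the shorthand
\[
a=(\vec n,\vec\sigma_1),\qquad b=(\vec n,\vec\sigma_2),\qquad c=(\vec n,\vec\sigma_1\times\vec\sigma_2),
\]
so that $Y_P^6=a$, $Y_P^7=b-a$ and $Y_P^5=c$. Since $\vec n$ is a unit vector, $a^2=b^2=1$ and $[a,b]=0$ (the two Pauli vectors act on different tensor factors). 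This already gives $(Y_P^6)^2=1$ and $[Y_P^6,Y_P^7]=[a,b-a]=0$.

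Next I reduce to a single direction. Every identity to be proved is invariant under a simultaneous rotation of $\vec n$ and both spins, implemented by conjugation with $U\otimes U$, $U\in SU(2)$. It therefore suffices to verify each identity at $\vec n=\vec e_3$. There
\[
a=\sigma_3\otimes I_2,\qquad b=I_2\otimes\sigma_3,\qquad c=\sigma_1\otimes\sigma_2-\sigma_2\otimes\sigma_1,\qquad K=\sum_j\sigma_j\otimes\sigma_j .
\]
In this frame I compute:
\begin{itemize}
\item $c^2=2-(\sigma_1\sigma_2\otimes\sigma_2\sigma_1+\sigma_2\sigma_1\otimes\sigma_1\sigma_2)=2-2ab$, which agrees with $(b-a)^2=2-2ab$ and so proves $(Y_P^7)^2=(Y_P^5)^2$.
\item Using $[\sigma_j,\sigma_k]=2i\varepsilon_{jkl}\sigma_l$, the commutator $[K,a]=\sum_j[\sigma_j,\sigma_3]\otimes\sigma_j$ reduces to $2i\,\varepsilon_{jkl}n_k\sigma_{1l}\sigma_{2j}=2i\,c$. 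By the antisymmetry $1\leftrightarrow 2$ of $c$, $[K,b]=-2i\,c$. These give $[K,Y_P^6]=2iY_P^5$ and $[K,Y_P^7]=-4iY_P^5$.
\item The remaining brackets $[K,c]$, $[a,c]$ and $[b,c]$ I compute as explicit $4\times4$ matrices in the basis $|{\uparrow\uparrow}\rangle,|{\uparrow\downarrow}\rangle,|{\downarrow\uparrow}\rangle,|{\downarrow\downarrow}\rangle$. In this basis $a$, $b$ are diagonal, $c$ acts only in the $\{|{\uparrow\downarrow}\rangle,|{\downarrow\uparrow}\rangle\}$ block, and $K=2P_{\rm swap}-1$.
\item I then express each result in the basis $\{1,K,ab,a,b,c\}$ and use $c^2=2-2ab$ to rewrite $ab$-terms as $1-\tfrac12(Y_P^5)^2$. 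This produces the right-hand sides $K-1+\tfrac12(Y_P^5)^2$ in \eqref{eq:alg.7}. The relation $[Y_P^7,Y_P^5]=-2[Y_P^6,Y_P^5]$ should then follow from $[b,c]=-[a,c]$, which holds because $c$ is odd under particle exchange.
\end{itemize}

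For \eqref{eq:alg.9}, note that $a+b=2(\vec S,\vec n)$, the projection of the total spin on $\vec n$. It commutes with $K=2\vec S^2-3$, since $\vec S^2$ is rotation invariant in spin space. It commutes with $a$ and $b$ trivially. It commutes with $c$ because in the $\vec e_3$ frame $c$ only mixes $|{\uparrow\downarrow}\rangle$ and $|{\downarrow\uparrow}\rangle$, on both of which $a+b=0$; and $c$ annihilates $|{\uparrow\uparrow}\rangle$ and $|{\downarrow\downarrow}\rangle$.

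The main obstacle is the precise constants in $[K,Y_P^5]=4iY_P^7$ and in the quadratic relation \eqref{eq:alg.7}. There the normalization and sign conventions (including the absence of $\hbar$, since these operators are dimensionless) must match exactly. I would settle these by the explicit $4\times4$ matrix computation at $\vec n=\vec e_3$, rather than by index gymnastics. If a constant disagrees with the statement, I would recheck it against the triplet/singlet decomposition, on which $K$ takes the values $1$ and $-3$.
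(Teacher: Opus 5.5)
Your proposal is correct, but it organizes the verification differently from the paper.

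The paper works at a general direction $\hat x$. It appeals to the index identities $(\sigma_a)_j(\sigma_a)_k=\delta_{jk}+i\varepsilon_{jk\ell}(\sigma_a)_\ell$ and $[K,(\sigma_1)_i]=-[K,(\sigma_2)_i]=2i(\vec\sigma_1\times\vec\sigma_2)_i$. The quadratic relations \eqref{eq:alg.7} and the centrality of $Y_P^7+2Y_P^6$ are left there as unshown direct computations. You instead use $U\otimes U$ covariance to fix $\vec n=\vec e_3$ and then exploit the block structure.

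Your route makes the constants you were worried about immediate. Write $\tau_1,\tau_2,\tau_3$ for Pauli matrices on $\{|{\uparrow\downarrow}\rangle,|{\downarrow\uparrow}\rangle\}$. On this block $a=\tau_3$, $b=-\tau_3$, $c=-2\tau_2$ and $K=2\tau_1-1$. On $|{\uparrow\uparrow}\rangle,|{\downarrow\downarrow}\rangle$ one has $c=0$, $b-a=0$ and $K=1$. Hence $[K,c]=-8i\tau_3=4i(b-a)$ and $[a,c]=4i\tau_1=2i\bigl(K-1+\tfrac12 c^2\bigr)$, using $c^2=4$ on the block. Both sides vanish off the block, exactly as stated.

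The paper's coordinate-free index route has its own advantage. It is closer to what is needed for the extended algebra with $X,Y,Z$, whose elements contain derivatives and cannot be treated as pointwise matrices.

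One small repair. Particle exchange by itself gives only $[b,c]=-P[a,c]P$, not $[b,c]=-[a,c]$. To conclude, you also need $[a,c]=2i(K-ab)$ to be exchange symmetric. More simply, use $[a+b,c]=0$, which you prove anyway in your last paragraph; equivalently, note that $b=-a$ on the only block where $c$ acts.
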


\begin{proof}
The relations follow from the Pauli algebra identities
\begin{align*}
(\sigma_a)_j(\sigma_a)_k &= \delta_{jk}+i\varepsilon_{jk\ell}(\sigma_a)_\ell, \qquad a=1,2,\\
[K,(\sigma_1)_i]&=2i(\vec{\sigma}_1\times\vec{\sigma}_2)_i,\qquad
[K,(\sigma_2)_i]=-2i(\vec{\sigma}_1\times\vec{\sigma}_2)_i,
\end{align*}
together with the fact that the operators depend only on $\hat{x}=\vec{x}/r$. The identities in \eqref{eq:alg.8} follow from $(\vec{\sigma}_a,\hat{x})^2=1$, and \eqref{eq:alg.9} is verified by direct computation.
\end{proof}

\subsubsection*{Extension to the full symmetry algebra}

The algebra generated in Proposition~\ref{prop:algebra} is not maximal. The system \eqref{eq:alg.2} admits additional scalar integrals of motion \cite{TuncerYurdusen2025}
\begin{equation}
\label{eq:alg.10}
X=(\vec{\sigma}_2,\vec{L}),\qquad
Y=(\vec{\sigma}_1,\vec{L}),
\end{equation}
whose commutator yields
\begin{equation}
\label{eq:alg.11}
[Y,X]=i\hbar\,Z,\quad Z=(\vec{\sigma}_1\times\vec{\sigma}_2,\vec{L}).
\end{equation}

We therefore consider the enlarged set
\begin{equation}
\label{eq:alg.13}
\mathcal{G}=\{K,\ Y_P^5,\ Y_P^6,\ Y_P^7,\ X,\ Y,\ Z\}.
\end{equation}
The following proposition directly follows from the Pauli algebra together with
\[
[L_i,x_j]=i\hbar\,\varepsilon_{ijk}x_k,
\qquad
(\vec{\sigma}_a,\vec{L})=\varepsilon_{ijk}(\sigma_a)_i x_j p_k.
\]
\begin{proposition}
The operators in $\mathcal{G}$ generate a finite polynomial symmetry algebra. 
In addition to the commutation relations in Proposition~\ref{prop:algebra}, the
non-trivial commutation relations involving $X,Y$ and $Z$ are
\begin{align}
\label{eq:alg.14}
& [K,\,Y]=2i\,Z,\qquad [K,\,X]=-2i\,Z,\\[4pt]
\label{eq:alg.15}
& [K,\,Z]=4i\,(X-Y),\\[4pt]
\label{eq:alg.16}
& [Y,\,X]=i\hbar\,Z,\\[4pt]
\label{eq:alg.17}
& [Y,\,Y_P^6]=2i\,Y_P^3,\qquad
  [X,\,Y_P^6]=-i\hbar\,Y_P^5,\\[4pt]
\label{eq:alg.18}
& [Y,\,Y_P^7]=i\hbar\,Y_P^5-2i\,Y_P^3,\qquad
  [X,\,Y_P^7]=-2i\,Y_P^2+i\hbar\,Y_P^5,\\[4pt]
\label{eq:alg.19}
& [Y,\,Y_P^5]
=
2i\,Y_P^1-2i\hbar\,(Y_P^6+Y_P^7),\\[4pt]
\label{eq:alg.20}
& [X,\,Y_P^5]
=
-2i\,(Y_P^1+Y_P^4)+2i\hbar\,Y_P^6 .
\end{align}
Moreover, the commutators involving $Z$ are polynomial in the generators:
\begin{align}
[Y,Z]
&=
2iK \vec{L}^2
-i\left(YX+XY\right)
-2i\hbar\,X,\\[4pt]
[X,Z]
&=
-2iK \vec{L}^2
+i\left(YX+XY\right)
+2i\hbar\,Y.
\end{align}
where
\begin{equation}
\label{eq:alg.23}
\vec{L}^2
=
\vec{J}^2-\hbar(X+Y)-\frac{\hbar^2}{4}(6+2K).
\end{equation}
The operators $Y_P^1,\dots,Y_P^4$ appearing on the right-hand sides are
polynomial expressions in the generators of $\mathcal G$, as shown in
\eqref{alg.poly1}--\eqref{alg.poly4} below.
\end{proposition}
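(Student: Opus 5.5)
We verify each displayed commutator by direct computation, splitting the generators into two classes: the purely spin--angular ones $K,Y_P^5,Y_P^6,Y_P^7$, which are multiplication operators in $\vec{x}$ depending only on $\hat{x}=\vec{x}/r$ and on the Pauli matrices, and the first-order ones $X=(\vec{\sigma}_2,\vec L)$, $Y=(\vec{\sigma}_1,\vec L)$, $Z$. Brackets among the first class are already settled by Proposition~\ref{prop:algebra}, so only brackets involving $X,Y,Z$ need checking.

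\textbf{Step 1: brackets with $K$.} Since $K$ commutes with $\vec L$ and $\vec x$, only the spin parts contribute. The identities $[K,(\sigma_1)_i]=2i(\vec{\sigma}_1\times\vec{\sigma}_2)_i$ and $[K,(\sigma_2)_i]=-2i(\vec{\sigma}_1\times\vec{\sigma}_2)_i$ from the proof of Proposition~\ref{prop:algebra} give \eqref{eq:alg.14} at once. For \eqref{eq:alg.15}, compute $[K,(\vec{\sigma}_1\times\vec{\sigma}_2)_i]$ with the product rule and the two-particle Pauli identities; it equals $4i((\sigma_2)_i-(\sigma_1)_i)$, and contracting with $L_i$ yields $4i(X-Y)$.

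\textbf{Step 2: $[Y,X]$ and brackets with the $Y_P$.} Expand
\[
[(\sigma_1)_iL_i,(\sigma_2)_jL_j]=(\sigma_1)_i(\sigma_2)_j[L_i,L_j],
\]
which holds because the two spin factors commute. With $[L_i,L_j]=i\hbar\varepsilon_{ijk}L_k$ this gives $i\hbar Z$, confirming \eqref{eq:alg.16}. For \eqref{eq:alg.17}--\eqref{eq:alg.20}, write each commutator as an $L$-part plus a spin part. The $L$-part uses $[L_i,x_j]=i\hbar\varepsilon_{ijk}x_k$ and $[L_i,r]=0$. The spin part uses $[(\sigma_a)_i,(\sigma_a)_j]=2i\varepsilon_{ijk}(\sigma_a)_k$, which produces terms such as $(\vec\sigma_1\times\hat x,\vec L)$. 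Using $\vec x\cdot\vec L=0$ and the vector identity $(\vec a\times\vec x,\vec x\times\vec p)=(\vec a,\vec x)(\vec x,\vec p)-r^2(\vec a,\vec p)$, these terms are rewritten in the form of $Y_P^1,\dots,Y_P^4$. After collecting terms, the operator-ordering constants $i\hbar(\cdot)/r$ should match the lower-order pieces of \eqref{eq:2.41}--\eqref{eq:2.44}.

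\textbf{Step 3: brackets with $Z$.} For $[Y,Z]$, the spin commutator $[(\sigma_1)_i,(\vec{\sigma}_1\times\vec{\sigma}_2)_j]$ produces products $(\sigma_1)_a(\sigma_2)_b L_iL_j$. We reorganize these with $L_iL_j=\tfrac12\{L_i,L_j\}+\tfrac{i\hbar}{2}\varepsilon_{ijk}L_k$, recognizing $\tfrac12\{L_i,L_j\}$ contracted with $(\sigma_1)_i(\sigma_2)_j$ as $\tfrac12(YX+XY)$ up to $\hbar$-corrections, and the trace term as $K\vec L^2$. The antisymmetric remainder gives the $-2i\hbar X$ term. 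The relation for $[X,Z]$ then follows from the $1\leftrightarrow2$ exchange, under which $Z\to-Z$ and $X\leftrightarrow Y$. Formula \eqref{eq:alg.23} is checked by expanding
\[
\vec J^2=\vec L^2+\hbar(\vec{\sigma}_1+\vec{\sigma}_2,\vec L)+\tfrac{\hbar^2}{4}(\vec{\sigma}_1+\vec{\sigma}_2)^2
\]
and using $(\vec\sigma_a)^2=3$.

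\textbf{Step 4: polynomiality.} To show that $Y_P^1,\dots,Y_P^4$ lie in the algebra, we express each as a product such as $Y_P^6X$ or $Y_P^6Y$ (for example, $(\hat x,\vec\sigma_1)(\vec\sigma_2,\vec L)$ is $Y_P^6X$ modulo lower-order terms), plus multiples of $Y_P^5$, fixing the $i\hbar$ corrections by the ordering. We then verify that the right-hand sides of \eqref{eq:alg.17}--\eqref{eq:alg.20} close.

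The main obstacle is bookkeeping of ordering terms of order $\hbar$ in Steps 2--4, in particular the identification of $[Y,Y_P^5]$ and $[X,Y_P^5]$ with the exact combinations of $Y_P^1,Y_P^4,Y_P^6,Y_P^7$. We will handle it by working in index notation with the $\varepsilon$--$\delta$ contraction identity and checking each result on the explicit $4\times4$ representation.
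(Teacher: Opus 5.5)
Your route is the same as the paper's, which simply asserts that everything follows from the Pauli algebra and $[L_i,x_j]=i\hbar\varepsilon_{ijk}x_k$. Steps 1, 2 and 4 work as you describe: for instance $[Y,Y_P^6]=-2i(\vec{\sigma}_1\times\hat x,\vec L)-2\hbar(\vec{\sigma}_1,\hat x)=2iY_P^3$, and the four polynomial expressions for $Y_P^1,\dots,Y_P^4$ hold exactly.

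The gap is in Step 3. In $[Y,Z]=[(\sigma_1)_iL_i,\Omega_jL_j]$, with $\Omega=\vec{\sigma}_1\times\vec{\sigma}_2$, neither the spin factors nor the orbital factors commute. Hence
\[
[Y,Z]=[(\sigma_1)_i,\Omega_j]\,L_iL_j+\Omega_j(\sigma_1)_i\,[L_i,L_j],
\]
but your plan only produces terms from the spin commutator. Using $[(\sigma_1)_i,\Omega_j]=2i\bigl(\delta_{ij}K-(\sigma_1)_j(\sigma_2)_i\bigr)$, the two parts of $L_iL_j$ contribute as follows:
\begin{itemize}
\item the symmetric part gives exactly $2iK\vec L^2-i(XY+YX)$, with no $\hbar$-corrections;
\item the antisymmetric part $\tfrac{i\hbar}{2}\varepsilon_{ijk}L_k$ gives $-\hbar Z$, not $-2i\hbar X$.
\end{itemize}
Executed as written, Step 3 therefore yields $2iK\vec L^2-i(XY+YX)-\hbar Z$, which is not the stated relation.

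The term you dropped is $i\hbar\varepsilon_{ijk}\Omega_j(\sigma_1)_iL_k=-2i\hbar X+\hbar Z$. It cancels the $-\hbar Z$ and supplies the $-2i\hbar X$. The cleanest bookkeeping is the identity $[A_iB_i,C_jD_j]=\tfrac12[A_i,C_j]\{B_i,D_j\}+\tfrac12\{A_i,C_j\}[B_i,D_j]$, valid when the spin factors $A,C$ commute with the orbital factors $B,D$. Since $\{(\sigma_1)_i,\Omega_j\}=2\varepsilon_{jib}(\sigma_2)_b$, the second piece is exactly $-2i\hbar X$ and the first is $2iK\vec L^2-i(XY+YX)$. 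Once this is fixed, your $1\leftrightarrow 2$ exchange argument ($X\leftrightarrow Y$, $Z\to -Z$) correctly gives $[X,Z]$.

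There is also a smaller omission. To support the closure claim you must handle $[Z,Y_P^5]$, $[Z,Y_P^6]$ and $[Z,Y_P^7]$. Your plan never treats them, and neither does the displayed list, yet they do not vanish; for example $[Z,Y_P^6]=2i(Y_P^6+Y_P^7)(Y+\hbar)-\hbar Y_P^5$. They are polynomial in $\mathcal G$ by the Jacobi identity with $Z=\frac{1}{2i}[K,Y]$:
\[
[Z,Y_P^j]=\frac{1}{2i}\bigl([K,[Y,Y_P^j]]-[Y,[K,Y_P^j]]\bigr).
\]
Every inner bracket on the right is either among the verified relations or follows from them by the Leibniz rule, using the polynomial expressions for $Y_P^1,\dots,Y_P^4$. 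You should add this to Step 4.
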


\begin{remark}
The algebra generated by $\mathcal{G}$ is closed. Indeed, commutators produce quadratic expressions such as
\begin{equation}
\label{eq:alg.18}
\frac{1}{r^2}(\vec{\sigma}_1,\vec{x})(\vec{\sigma}_2,\vec{x})
=Y_P^6\,(Y_P^6+Y_P^7),
\end{equation}
and the remaining pseudo-scalar integrals $Y_P^1,\dots,Y_P^4$ can be expressed polynomially in terms of $\mathcal{G}$:
\begin{align}
\label{alg.poly1}
Y_P^1 &= -Y_P^6\,X + \frac{i\hbar}{2}Y_P^5,\\
\label{alg.poly2}
Y_P^2 &= -i\,(Y_P^6+Y_P^7)(X+\hbar),\\
\label{alg.poly3}
Y_P^3 &= i\,Y_P^6(Y+\hbar),\\
\label{alg.poly4}
Y_P^4 &= Y_P^6\,X-(Y_P^6+Y_P^7)\,Y-i\hbar Y_P^5.
\end{align}
Thus no new independent integral is generated under commutation.
\end{remark}

\subsection{Symmetry algebra of Case~2}

We consider the second family of potentials given in Theorem~\ref{thm:main}:
\begin{equation}
\label{eq:alg2.1}
V_0=V_0(r),\quad V_1=\frac{\hbar}{r^2},\quad V_2=V_2(r),\quad V_3=V_4=V_5=0.
\end{equation}
With this choice the Hamiltonian takes the form
\begin{equation}
\label{eq:alg2.2}
H=-\frac{\hbar^2}{2}\Delta+V_0(r)+\frac{\hbar}{2r^2}(\vec{\sigma}_1+\vec{\sigma}_2,\vec{L})+V_2(r)(\vec{\sigma}_1,\vec{\sigma}_2).
\end{equation}

This system admits the following first-order pseudo-scalar integrals of motion:
\begin{align}
\label{eq:alg2.3}
Y_8^P &= \frac{1}{r}(\vec{\sigma}_1+\vec{\sigma}_2,\vec{x}),\\
\label{eq:alg2.4}
Y_9^P &= -\frac{1}{r}\Big[(\vec{\sigma}_1,\vec{x})(\vec{\sigma}_2,\vec{L})+(\vec{\sigma}_2,\vec{x})(\vec{\sigma}_1,\vec{L})\Big],\\
\label{eq:alg2.5}
Y_{10}^P &= \frac{1}{r}(\vec{\sigma}_1+\vec{\sigma}_2,\vec{x})(\vec{x},\vec{p})
- r(\vec{\sigma}_1+\vec{\sigma}_2,\vec{p})
-\frac{i\hbar}{r}(\vec{\sigma}_1+\vec{\sigma}_2,\vec{x}).
\end{align}

We also introduce the scalar operator
\begin{equation}
\label{eq:alg2.7}
A:=(\vec{\sigma}_1+\vec{\sigma}_2,\vec{L})=2(\vec{S},\vec{L}),
\end{equation}
which is an integral of motion for this family of potentials (see \cite{TuncerYurdusen2025}).

\begin{proposition}
The operators $A,Y_P^8,Y_P^{10}$, together with the central elements
$K$ and $\vec{J}^2$, generate a finite quadratic polynomial algebra. More precisely,
\[
[K,A]=[K,Y_P^8]=[K,Y_P^{10}]=[K,\vec{J}^2]=0,
\]
and
\[
[\vec{J}^2,A]=[\vec{J}^2,Y_P^8]=[\vec{J}^2,Y_P^{10}]=0.
\]
The operator $Y_P^9$ is not independent but satisfies
\begin{equation}
\label{eq:alg2.8}
Y_P^9 = -Y_P^8\,A + i\,Y_P^{10} - \hbar\,Y_P^8 .
\end{equation}
The remaining non-trivial commutation relations are
\begin{align}
\label{eq:alg2.10}
[A,Y_P^8]&=-2i\,Y_P^{10},\\[4pt]
\label{eq:alg2.11}
[Y_P^8,Y_P^{10}]
&=
-2i\,A - i\hbar\!\left(6+2K-(Y_P^8)^2\right),\\[4pt]
\label{eq:alg2.12}
[A,Y_P^{10}]
&=
2i\,Y_P^8\,\vec{L}^2 - 2\hbar\,Y_P^{10},
\end{align}
where
\begin{equation}
\label{eq:alg2.13}
\vec{L}^2=\vec{J}^2-\hbar A-\frac{\hbar^2}{4}(6+2K).
\end{equation}
Hence the algebra closes quadratically on $A,Y_P^8,Y_P^{10}$, with $K$ and $\vec{J}^2$ as central elements.
\end{proposition}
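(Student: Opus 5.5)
The plan is a direct operator computation that uses only the Pauli identities, the canonical commutation relations and the definitions of $A$, $Y_P^8$, $Y_P^{10}$. We first establish centrality. $K$ commutes with any function of $\vec x,\vec p$ and with $\vec\sigma_1+\vec\sigma_2$, because $[K,(\sigma_1)_i+(\sigma_2)_i]=0$ by the Pauli identities quoted in the proof of Proposition~\ref{prop:algebra}. Since $A$, $Y_P^8$ and $Y_P^{10}$ involve the spins only through $\vec\sigma_1+\vec\sigma_2=2\vec S$, each commutes with $K$. Moreover $\vec J^2$ commutes with $K$, since $K=2\vec S^2-3$. For $\vec J^2$, note that $A$, $Y_P^8$ and $Y_P^{10}$ are rotation scalars, i.e.\ built from inner products of vector operators. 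Hence $[J_i,\cdot]=0$ for each of them, and so they commute with $\vec J^2$.

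Next we prove the relation \eqref{eq:alg2.8} and the formula \eqref{eq:alg2.13}. For \eqref{eq:alg2.13}, we expand $\vec J^2=\vec L^2+2\hbar(\vec S,\vec L)+\hbar^2\vec S^2$ and use $\vec S^2=\frac14(6+2K)$. For \eqref{eq:alg2.8}, we work with $\hat x=\vec x/r$ and the single-spin identity
\[
(\vec\sigma_a,\hat x)(\vec\sigma_a,\vec L)=i(\vec\sigma_a,\hat x\times\vec L),
\]
which holds because $(\hat x,\vec L)=0$. We combine this with the vector identity
\[
\hat x\times\vec L=\hat x(\vec x,\vec p)-r\vec p+(\text{ordering terms in }i\hbar).
\]
For the cross terms $(\vec\sigma_1,\hat x)(\vec\sigma_2,\vec L)+(\vec\sigma_2,\hat x)(\vec\sigma_1,\vec L)$, we expand $Y_P^8A=(\vec\sigma_1+\vec\sigma_2,\hat x)(\vec\sigma_1+\vec\sigma_2,\vec L)$, so that $-Y_P^9$ equals $Y_P^8A$ minus the two diagonal pieces. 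The diagonal pieces are then converted, via the identity above, into $i(\vec\sigma_1+\vec\sigma_2,\hat x\times\vec L)$. After careful ordering this gives $iY_P^{10}$ plus a multiple of $\hbar Y_P^8$. We will fix the constant $\hbar$ term by comparing the zeroth-order parts.

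The commutators come next. We compute $[A,Y_P^8]$ from $[L_i,x_j]=i\hbar\varepsilon_{ijk}x_k$, where $r$ commutes with $\vec L$, together with the spin commutators $[S_i,S_j]=i\varepsilon_{ijk}S_k$. The orbital part gives $i\hbar$ times a cross product. The spin part gives terms like $\varepsilon_{ijk}S_jL_k\hat x_i$, i.e.\ $(\vec S,\vec L\times\hat x)$-type terms. Rewriting $\vec L\times\hat x$ through $\vec p$ and $(\vec x,\vec p)$ should reproduce $-2iY_P^{10}$. For $[Y_P^8,Y_P^{10}]$, only the momentum parts of $Y_P^{10}$ contribute through $[p_k,x_i/r]=-i\hbar(\delta_{ik}/r-x_ix_k/r^3)$. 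The spin products $(2S_i)(2S_k)$ are then reduced with the two-particle identity
\[
(\sigma_1+\sigma_2)_i(\sigma_1+\sigma_2)_k=2\delta_{ik}+2i\varepsilon_{ikl}(\sigma_1+\sigma_2)_l+(\sigma_1)_i(\sigma_2)_k+(\sigma_2)_i(\sigma_1)_k .
\]
The symmetric cross term contracted with the projector $\delta_{ik}-\hat x_i\hat x_k$ produces $K-(\vec\sigma_1,\hat x)(\vec\sigma_2,\hat x)$-type pieces. These must be rewritten through $(Y_P^8)^2=2+2(\vec\sigma_1,\hat x)(\vec\sigma_2,\hat x)$, and the antisymmetric part yields $A$. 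Matching everything should give $-2iA-i\hbar(6+2K-(Y_P^8)^2)$.

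The main obstacle is the last relation \eqref{eq:alg2.12}, since $[A,Y_P^{10}]$ is second order in $\vec p$. Here we will avoid brute force and use a Jacobi-type route. Writing $\vec L^2=\vec J^2-\hbar A-\frac{\hbar^2}{4}(6+2K)$ with $\vec J^2$ and $K$ central, we will first compute $[\vec L^2,Y_P^8]$. Since $Y_P^8$ is a vector-times-spin scalar, the standard identity $[\vec L^2,[\vec L^2,\hat x_i]]=2\hbar^2\{\vec L^2,\hat x_i\}$ and its relatives apply. We then express $Y_P^{10}=\frac{i}{2}[A,Y_P^8]$ and evaluate $[A,[A,Y_P^8]]$. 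Using $A^2$ expressed via $\vec L^2$ and $A$ (the spin-orbit square identity for total spin), this should collapse to $2iY_P^8\vec L^2-2\hbar Y_P^{10}$ after reordering. The ordering corrections proportional to $\hbar$ are where errors are most likely, so we will check them by acting on a test spinor such as $\hat x$-dependent states.
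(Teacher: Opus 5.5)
Your proposal is correct. For the centrality claims, \eqref{eq:alg2.13}, \eqref{eq:alg2.8}, \eqref{eq:alg2.10} and \eqref{eq:alg2.11}, you use the same direct computation as the paper: canonical commutators plus Pauli algebra, with $(\vec x,\vec L)=0$.

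The real difference is \eqref{eq:alg2.12}. The paper only asserts that it follows ``by the same identities'', i.e.\ by expanding a commutator that is second order in $\vec p$. You instead reduce it to \eqref{eq:alg2.10}, and that route works; it is shorter than you make it. Since $\vec J^2$ and $K$ commute with $Y_P^8$ and $Y_P^{10}$, \eqref{eq:alg2.13} gives $[A,\,\cdot\,]=-\hbar^{-1}[\vec L^2,\,\cdot\,]$ on these operators. So, with $\vec\Sigma=\vec\sigma_1+\vec\sigma_2$,
\[
[A,Y_P^{10}]=\frac{i}{2}\,[A,[A,Y_P^8]]=\frac{i}{2\hbar^2}\,\Sigma_i\,[\vec L^2,[\vec L^2,\hat x_i]]=i\{Y_P^8,\vec L^2\}.
\]
This uses the vector-operator identity $[\vec L^2,[\vec L^2,\vec V]]=2\hbar^2\{\vec L^2,\vec V\}-4\hbar^2\vec L(\vec L,\vec V)$ together with $(\vec L,\hat x)=0$. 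Then $[Y_P^8,\vec L^2]=\hbar[A,Y_P^8]=-2i\hbar Y_P^{10}$ turns $i\{Y_P^8,\vec L^2\}$ into $2iY_P^8\vec L^2-2\hbar Y_P^{10}$.

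So you need neither a separate computation of $[\vec L^2,Y_P^8]$ nor any identity for $A^2$. Drop the latter: unlike the single spin-$\frac12$ case, $A^2=2\vec L^2-\hbar A+\{(\vec\sigma_1,\vec L),(\vec\sigma_2,\vec L)\}$ brings in a new operator.

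Your route buys structure. It shows the right-hand side is the manifestly anti-Hermitian $i\{Y_P^8,\vec L^2\}$, that the $-2\hbar Y_P^{10}$ term is pure reordering, and that \eqref{eq:alg2.12} is a consequence of \eqref{eq:alg2.10}. The paper's brute force is self-contained but opaque; yours relies on a textbook identity.

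Fix two slips in your \eqref{eq:alg2.11} step. First, the two-particle identity has coefficient $i$, not $2i$:
\[
(\sigma_1+\sigma_2)_i(\sigma_1+\sigma_2)_k=2\delta_{ik}+i\varepsilon_{ikl}(\sigma_1+\sigma_2)_l+(\sigma_1)_i(\sigma_2)_k+(\sigma_2)_i(\sigma_1)_k .
\]
With $2i$ you would get $[\Sigma_i,\Sigma_k]=4i\varepsilon_{ikl}\Sigma_l$, and hence $-4iA$.

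Second, the $A$ term does not arise ``through $[p_k,x_i/r]$''. The antisymmetric $\varepsilon_{ikl}$ contracted with the symmetric projector $\delta_{ik}-\hat x_i\hat x_k$ vanishes. Write $Y_P^{10}=\Sigma_kU_k$ with $U_k=\hat x_k(\vec x,\vec p)-rp_k-i\hbar\hat x_k$. Then
\[
[Y_P^8,Y_P^{10}]=\tfrac12\{\Sigma_i,\Sigma_k\}[\hat x_i,U_k]+\tfrac12[\Sigma_i,\Sigma_k]\{\hat x_i,U_k\}.
\]
The first piece, with $[\hat x_i,U_k]=-i\hbar(\delta_{ik}-\hat x_i\hat x_k)$, gives $-i\hbar(6+2K-(Y_P^8)^2)$. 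The second, via $\varepsilon_{lik}\hat x_iU_k=-L_l$, gives $-2iA$.

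Finally, $\hat x\times\vec L=\hat x(\vec x,\vec p)-r\vec p$ holds exactly when $\hat x$ stands on the left. The $-\hbar Y_P^8$ in \eqref{eq:alg2.8} comes from the $-i\hbar(\vec\Sigma,\hat x)$ term in the definition of $Y_P^{10}$, not from ordering.
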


\begin{proof}
Let
\[\vec{\Sigma}=2\vec{S}=\vec{\sigma}_1+\vec{\sigma}_2,\qquad
B=Y_P^8=\frac{1}{r}(\Sigma,\vec{x}),\qquad
C=Y_P^{10}.\]
Then $A=(\vec{\Sigma},\vec L)$, and
\[C=\frac{1}{r}(\vec{\Sigma},\vec{x})(\vec{x},\vec p)
-r(\vec{\Sigma},\vec p)
-\frac{i\hbar}{r}(\vec{\Sigma},\vec{x}).\]
The proof is obtained by direct use of the canonical commutation relations
\[
[x_i,p_j]=i\hbar\delta_{ij},\qquad
[L_i,x_j]=i\hbar\varepsilon_{ijk}x_k,\qquad
[L_i,p_j]=i\hbar\varepsilon_{ijk}p_k,
\]
together with the Pauli algebra. Since $K$ commutes with $\vec{\Sigma}$, $\vec{x}$, $\vec p$, and $\vec L$, we have
\[ [K,A]=[K,B]=[K,C]=0.\]
Moreover, $A,B$, and $C$ are rotational scalars with respect to the total angular momentum
\[ \vec J=\vec L+\frac{\hbar}{2}\vec{\Sigma}, \]
and therefore
\[ [\vec{J}^2,A]=[\vec{J}^2,B]=[\vec{J}^2,C]=0.\]

The relation for $Y_P^9$ follows by expanding
\[ BA=\frac{1}{r}(\vec{\Sigma},\vec{x})(\vec{\Sigma},\vec L) \]
and using $(\vec{x},\vec L)=0$. This gives
\[ BA=-Y_P^9+iC-\hbar B, \]
or equivalently
\[ Y_P^9=-BA+iC-\hbar B. \]

The remaining commutators are obtained by the same identities:
\[ [A,B]=-2iC, \]
\[ [B,C]=-2iA-i\hbar(6+2K-B^2), \]
and
\[ [A,C]=2iB \vec{L}^2-2\hbar C. \]
Finally,
\[
{\vec{J}}^2=\vec{L}^2+\hbar A+\frac{\hbar^2}{4}{\vec{\Sigma}}^2,
\qquad
{\vec{\Sigma}}^2=6+2K,
\]
so that
\[ \vec{L}^2=\vec{J}^2-\hbar A-\frac{\hbar^2}{4}(6+2K). \]
Thus $\vec{L}^2$ does not introduce a new generator, and the algebra closes quadratically on
$A,B,C$, with $K$ and $\vec{J}^2$ central.
\end{proof}

We conclude that the symmetry algebra closes on the finite set
\[
\{H,\ J_i,\ \vec{J}^2,\ K,\ A,\ Y_8^P,\ Y_{10}^P\}.
\]

\medskip

The algebra admits a polynomial identity involving the generators:
\begin{equation}
\label{eq:alg2.casimir}
A^2+\left(Y_{10}^P+i\hbar Y_8^P\right)^2
=\left(6+2K-(Y_8^P)^2\right)
\left(\vec{J}^2-\hbar A-\frac{\hbar^2}{4}(6+2K)\right).
\end{equation}

\subsection{Symmetry algebra of Case~14}

We consider the family of potentials
\begin{equation}
\label{eq:alg14.1}
\begin{aligned}
V_0(r) &= d_6
+\frac{r^2 f_2(r)^2}{2d_4^2}
-\frac{(1+2\alpha_1)\hbar}{4d_4}\bigl(3f_2(r)+r f_2'(r)\bigr), \\
V_1(r) &= -\frac{2\alpha_1 f_2(r)}{d_4},\\
V_2(r) &= \frac{\hbar\big((1-6\alpha_1)f_2(r)+(1-2\alpha_1) r f_2'(r)\big)}{4d_4},\\
V_3(r) &= \frac{2\alpha_1 r f_2(r)^2-d_4 \hbar f_2'(r)}{2d_4^2 r},\\
V_4(r) &= \alpha_1,\qquad V_5(r)=0.
\end{aligned}
\end{equation}

We assume $d_4\neq0$ and $1-2\alpha_1\neq0$. This system admits the pseudo-scalar integral
\begin{equation}
\label{eq:alg14.2}
Y_P^{26}
=
\frac{1}{1-2 \alpha_1}
\left[
(\vec{\sigma}_1-\vec{\sigma}_2,\vec p)
+\frac{f_2(r)}{d_4}\,
(\vec{x},\vec{\sigma}_1\times\vec{\sigma}_2)
\right],
\end{equation}
together with
\begin{equation}
\label{eq:alg14.3}
Y_P^{27}
=
f_2(r)(\vec{\sigma}_2-\vec{\sigma}_1,\vec{x})
+d_4(\vec{\sigma}_1\times\vec{\sigma}_2,\vec p).
\end{equation}
Let
\[
\Delta=\vec{\sigma}_1-\vec{\sigma}_2,
\qquad
\Omega=\vec{\sigma}_1\times\vec{\sigma}_2,
\qquad
a=1-2\alpha_1,
\qquad
g(r)=\frac{f_2(r)}{d_4}.
\]
Then
\[Y_P^{26}=\frac1a\left[(\Delta,\vec p)+g(r)(\vec x,\Omega)\right],\]
and
\[Y_P^{27}=d_4\left[(\Omega,\vec p)-g(r)(\Delta,\vec x)\right].\]

\begin{proposition}
The operators $K,Y_P^{26}$ and $Y_P^{27}$ generate a finite polynomial symmetry algebra. Their non-trivial commutation relations are
\begin{align}
\label{eq:alg14.4}
[K,Y_P^{26}]
&=
\frac{4i}{a\,d_4}\,Y_P^{27},\\[4pt]
\label{eq:alg14.5}
[K,Y_P^{27}]
&=
-4i d_4 a\,Y_P^{26},\\[4pt]
\label{eq:alg14.6}
[Y_P^{26},Y_P^{27}]
&=
i a\,d_4(K+1)(Y_P^{26})^2.
\end{align}
Moreover, the following quadratic identity holds:
\begin{equation}
\label{eq:alg14.7}
(Y_P^{27})^2=d_4^2a^2(Y_P^{26})^2.
\end{equation}
\end{proposition}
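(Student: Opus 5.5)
The plan is to verify the three commutators and the quadratic identity by direct computation, working in the notation $\Delta=\vec\sigma_1-\vec\sigma_2$, $\Omega=\vec\sigma_1\times\vec\sigma_2$. I will rely only on the Pauli algebra and $[x_i,p_j]=i\hbar\delta_{ij}$. The first step is to record how $K=(\vec\sigma_1,\vec\sigma_2)$ acts on these spin vectors. From $[K,(\sigma_1)_i]=2i\Omega_i$ and $[K,(\sigma_2)_i]=-2i\Omega_i$ (already used in Proposition~\ref{prop:algebra}) we get
\[
[K,\Delta_i]=4i\,\Omega_i .
\]
A short Pauli computation, using $(\sigma_a)_j(\sigma_a)_k=\delta_{jk}+i\varepsilon_{jk\ell}(\sigma_a)_\ell$, should then give
\[
[K,\Omega_i]=-4i\,\Delta_i .
\]
I would double-check this one explicitly: expand $K\Omega_i-\Omega_iK$ componentwise, and note that the terms symmetric in the two particles cancel while the antisymmetric ones leave $\Delta_i$. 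Since $K$ commutes with every function of $\vec x$ and $\vec p$, it follows that
\[
[K,(\Delta,\vec p)+g(\vec x,\Omega)]=4i\bigl[(\Omega,\vec p)-g(\Delta,\vec x)\bigr],
\]
which is exactly \eqref{eq:alg14.4} after restoring the prefactors $1/a$ and $d_4$. In the same way,
\[
[K,(\Omega,\vec p)-g(\Delta,\vec x)]=-4i\bigl[(\Delta,\vec p)+g(\vec x,\Omega)\bigr],
\]
which gives \eqref{eq:alg14.5}.

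For the quadratic identity \eqref{eq:alg14.7}, I will compute the squares directly. Set $U=(\Delta,\vec p)+g(\vec x,\Omega)$ and $W=(\Omega,\vec p)-g(\Delta,\vec x)$. Expanding $U^2$ and $W^2$ requires the products $\Delta_i\Delta_j$, $\Omega_i\Omega_j$, $\Delta_i\Omega_j$ and $\Omega_i\Delta_j$ in the four-dimensional spin space. The cleanest way to organise this is to decompose the spin space into the singlet and the triplet. On the singlet, $\Delta$ and $\Omega$ map into the $m$-components of the triplet; on the triplet, $\Delta$ maps into the singlet. I expect the following relations to hold:
\begin{itemize}
\item $\Delta_i\Delta_j=\Omega_i\Omega_j$ (sym. part) $=(1-K)\delta_{ij}$ plus an antisymmetric piece;
\item $\Omega_i\Delta_j=-i\,\Delta_i\Delta_j\cdot$(sign), up to corrections.
\end{itemize}
I would simply verify these on a basis using \texttt{Mathematica} or by hand in the singlet/triplet basis. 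With them in hand, the momentum-squared, mixed, and $g^2$ terms of $U^2$ and $W^2$ match term by term. The derivative terms $[p_i,g]$ must also match; they do because the antisymmetric parts enter with the same sign. This gives $W^2=U^2$, which is \eqref{eq:alg14.7} once the prefactors $a^2$ and $d_4^2$ are restored.

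For \eqref{eq:alg14.6}, I will avoid a direct expansion. Instead, I note that $UW-WU$ can be rewritten using the previous two relations. Since $W=\tfrac{1}{4i}[K,U]$, we have
\[
[U,W]=\tfrac{1}{4i}\bigl(2UKU-KU^2-U^2K\bigr).
\]
Then, using $U^2$ computed above (I expect $U^2$ to commute with $K$, being built from $\Delta_i\Delta_j$), this reduces to an expression of the form $c(K+1)U^2$ times a scalar. To identify $c$, I would evaluate both sides on the singlet and on the triplet separately. There $K=-3$ and $K=1$ respectively, and $U$ intertwines the two sectors, so $UKU$ is computable.

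The main obstacle is this last step, together with the spin bilinear relations. There are signs, factors of $i$, and $\hbar$-dependent terms from $[\vec p,g]$, and all of these must cancel to leave precisely $ia\,d_4(K+1)(Y_P^{26})^2$ with no extra lower-order terms. I would therefore first check the whole statement on the constant case $g=0$, and then track the $g$-dependent terms separately.
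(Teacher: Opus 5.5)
Your derivation of \eqref{eq:alg14.4}--\eqref{eq:alg14.5} from $[K,\Delta_i]=4i\Omega_i$ and $[K,\Omega_i]=-4i\Delta_i$ is exactly the paper's. Your sector argument for \eqref{eq:alg14.6} is also correct. $\Delta$ and $\Omega$ are odd under particle exchange, so $U=(\Delta,\vec p)+g(\vec x,\Omega)$ maps singlet to triplet and back. Hence $U^2$ commutes with $K$. The reason is this oddness, not that $U^2$ is ``built from $\Delta_i\Delta_j$'': it also contains $\Delta_i\Omega_j$ and $\Omega_i\Omega_j$. It follows that $UKU=-(K+2)U^2$ and $\frac{1}{4i}(2UKU-KU^2-U^2K)=i(K+1)U^2$. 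This is the paper's key fact $\{K+1,U\}=0$ together with $(K+1)^2=4$, phrased through the singlet/triplet decomposition.

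The quadratic identity is where your plan goes wrong.

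\begin{itemize}
\item The relation you expect, that the symmetric part of $\Delta_i\Delta_j$ is $(1-K)\delta_{ij}$, is false. For example, $\Delta_3^2=2-2(\sigma_1)_3(\sigma_2)_3$ equals $4$ on the $m=0$ triplet state, where $1-K=0$.
\item Your second relation, ``$\Omega_i\Delta_j=-i\Delta_i\Delta_j\cdot$(sign), up to corrections'', is not precise enough to use.
\end{itemize}

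So the claimed term-by-term matching is unsupported as written. The same goes for the remark that derivative terms match ``because the antisymmetric parts enter with the same sign''.

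The identities that actually hold, for all $i,j$, are $\Omega_i\Omega_j=\Delta_i\Delta_j$ and $\Delta_i\Omega_j=-\Omega_i\Delta_j$. With them, the four groups of terms in $U^2$ and $W^2$ coincide with the same operator ordering, so no $[p_i,g]$ terms need to be tracked.

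More simply, the intertwining property you already use gives $[K,U]=(K+1)U-U(K+1)=2(K+1)U$. Hence $W=\frac{1}{4i}[K,U]=-\frac{i}{2}(K+1)U$, which is the paper's \eqref{eq:alg14.9}. Then $W^2=-\frac14(K+1)U(K+1)U=\frac14(K+1)^2U^2=U^2$. That single identity yields both \eqref{eq:alg14.6} and \eqref{eq:alg14.7} in one line. It also shows that no $\hbar$-dependent or lower-order terms can arise, since $K$ commutes with $\vec x$, $\vec p$ and $g(r)$. The $g=0$ check and the symbolic verification you propose are therefore unnecessary.
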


\begin{proof}
The first two commutation relations follow from the Pauli algebra identities
\[
[K,\Delta_i]=4i\Omega_i,\qquad [K,\Omega_i]=-4i\Delta_i.
\]
Indeed,
\[
\begin{aligned}
[K,Y_P^{26}]
&=
\frac1a
\left[
([K,\Delta],\vec p)+g(r)(\vec x,[K,\Omega])
\right]\\
&=
\frac1a
\left[
4i(\Omega,\vec p)-4ig(r)(\Delta,\vec x)
\right]\\
&=
\frac{4i}{a}
\big((\Omega,\vec p)-g(r)(\Delta,\vec x)\big)
=
\frac{4i}{a\,d_4}Y_P^{27}.
\end{aligned}
\]
Similarly,
\[
\begin{aligned}
[K,Y_P^{27}]
&=
d_4
\left[
([K,\Omega],\vec p)-g(r)([K,\Delta],\vec x)
\right]\\
&=
d_4
\left[
-4i(\Delta,\vec p)-4ig(r)(\vec x,\Omega)
\right]\\
&=
-4id_4
\big((\Delta,\vec p)+g(r)(\vec x,\Omega)\big)
=
-4id_4aY_P^{26}.
\end{aligned}
\]

We now derive the remaining commutator. The spin-exchange operator satisfies
\[
K^2+2K-3=0,
\]
or equivalently
\[(K+1)^2=4.\]
Moreover, the Pauli algebra gives the following anticommutators
\[\{K+1,\Delta_i\}=0,\qquad \{K+1,\Omega_i\}=0.\]
Consequently,
\[\{K+1,Y_P^{26}\}=0.\]
Using $[K,Y_P^{26}]=(4i/(a\,d_4))Y_P^{27}$, we obtain
\[[K,Y_P^{26}]=[K+1,Y_P^{26}]=2(K+1)Y_P^{26}.\]
Thus
\[2(K+1)Y_P^{26}=\frac{4i}{a\,d_4}Y_P^{27},
\]
and hence
\begin{equation}
\label{eq:alg14.9}
Y_P^{27}=-\frac{i}{2}a\,d_4(K+1)Y_P^{26}.
\end{equation}
Therefore
\[
\begin{aligned}
[Y_P^{26},Y_P^{27}]
&=
\left[
Y_P^{26},
-\frac{i}{2}a\,d_4(K+1)Y_P^{26}
\right]\\
&=
-\frac{i}{2}a\,d_4
\left[
Y_P^{26}(K+1)Y_P^{26}
-
(K+1)(Y_P^{26})^2
\right].
\end{aligned}
\]
Since $Y_P^{26}$ anticommutes with $K+1$, we have
\[Y_P^{26}(K+1)Y_P^{26}=-(K+1)(Y_P^{26})^2.
\]
It follows that
\[[Y_P^{26},Y_P^{27}]=i a\,d_4(K+1)(Y_P^{26})^2.
\]

Finally, from \eqref{eq:alg14.9} and $(K+1)^2=4$, together with the anticommutation relation
\[\{K+1,Y_P^{26}\}=0,
\]
we get
\[
\begin{aligned}
(Y_P^{27})^2
&=
\left(-\frac{i}{2}a\,d_4\right)^2
(K+1)Y_P^{26}(K+1)Y_P^{26}\\
&=
-\frac{a^2d_4^2}{4}
(K+1)Y_P^{26}(K+1)Y_P^{26}.
\end{aligned}
\]
Since
\[Y_P^{26}(K+1)=-(K+1)Y_P^{26},\]
we obtain
\[
(K+1)Y_P^{26}(K+1)Y_P^{26}
=
-(K+1)^2(Y_P^{26})^2
=
-4(Y_P^{26})^2.
\]
Therefore
\[(Y_P^{27})^2=d_4^2a^2(Y_P^{26})^2.
\]
\end{proof}

\subsection{Symmetry algebra of Case~17}

We next consider the family
\begin{equation}
\label{eq:alg17.1}
V_0=V_0(r),\qquad 
V_1=0,\qquad 
V_2=d_{9}-V_0(r),\qquad 
V_3=0,\qquad 
V_4=\frac12,\qquad 
V_5=0 .
\end{equation}
For this choice the system admits the first-order pseudo-scalar integral
\begin{equation}
\label{eq:alg17.2}
Y_P^{30}=\frac12(\vec\sigma_1+\vec\sigma_2,\vec p).
\end{equation}
Setting $\vec\Sigma=\vec\sigma_1+\vec\sigma_2$ again yields
\[
Y_P^{30}=(\vec{S},\vec p)=\frac12(\vec\Sigma,\vec p).
\]

\begin{proposition}
The operators $H,K,Y_P^{30}$, together with $\vec{J}^2$ and $J_3$, form a commuting set of integrals of motion:
\begin{equation}
\label{eq:alg17.3}
[H,\vec{J}^2]=[H,J_3]=[\vec{J}^2,J_3]=0,
\end{equation}
and
\begin{equation}
\label{eq:alg17.4}
[H,Y_P^{30}]=[H,K]=[K,Y_P^{30}]
=[\vec{J}^2,Y_P^{30}]=[J_3,Y_P^{30}]
=[\vec{J}^2,K]=[J_3,K]=0.
\end{equation}
Moreover, $Y_P^{30}$ satisfies the quadratic relation
\begin{equation}
\label{eq:alg17.5}
(Y_P^{30})^2=\frac14(H-d_{9})(K+3).
\end{equation}
\end{proposition}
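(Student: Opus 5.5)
The plan is to first rewrite the Hamiltonian \eqref{eq:intro.H} for the potentials \eqref{eq:alg17.1} in terms of $Y_P^{30}$ and the spin-exchange operator $K$. After that, both the commutation relations and the quadratic relation \eqref{eq:alg17.5} should follow with almost no further calculation.

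\textbf{Step 1: rewrite $H$.} With $V_1=V_3=V_5=0$, $V_4=\tfrac12$ and $V_2=d_9-V_0$, the Hamiltonian is
\[
H=-\frac{\hbar^2}{2}\Delta+V_0(r)\,(1-K)+d_9K+\frac12(\vec\sigma_1,\vec p)(\vec\sigma_2,\vec p).
\]
Since $\vec\sigma_1$ and $\vec\sigma_2$ commute with each other and with $\vec p$, and since $(\vec\sigma_a,\vec p)^2=\vec p^{\,2}=-\hbar^2\Delta$, we have
\[
(\vec\Sigma,\vec p)^2=2\vec p^{\,2}+2(\vec\sigma_1,\vec p)(\vec\sigma_2,\vec p).
\]
Therefore $\frac12(\vec\sigma_1,\vec p)(\vec\sigma_2,\vec p)=(Y_P^{30})^2+\frac{\hbar^2}{2}\Delta$. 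The kinetic term cancels exactly, and
\[
H=(Y_P^{30})^2+V_0(r)(1-K)+d_9K .
\]

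\textbf{Step 2: singlet and triplet projectors.} I will use the projectors $P_t=\tfrac14(K+3)$ and $P_s=\tfrac14(1-K)$, which follow from $K^2+2K-3=0$ (this identity was already used in the proof for Case~14). The key facts are:
\begin{itemize}
\item $\vec\Sigma P_s=P_s\vec\Sigma=0$, because $\vec\Sigma$ annihilates the singlet and preserves the triplet;
\item $[K,\vec\Sigma]=0$, which follows from the Pauli identities $[K,(\sigma_1)_i]=-[K,(\sigma_2)_i]$ quoted in the proof of Proposition~\ref{prop:algebra}.
\end{itemize}
From the second fact, $[K,Y_P^{30}]=0$ immediately. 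From the first, $Y_P^{30}P_s=P_sY_P^{30}=0$, so $(Y_P^{30})^2=P_t(Y_P^{30})^2$.

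\textbf{Step 3: the quadratic relation.} Multiply the rewritten $H$ by $P_t$. Since $K P_t=P_t$, we get $(1-K)P_t=0$, so the $V_0$ term drops out and
\[
HP_t=(Y_P^{30})^2+d_9P_t .
\]
This gives $(Y_P^{30})^2=(H-d_9)P_t=\tfrac14(H-d_9)(K+3)$, which is \eqref{eq:alg17.5}.

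\textbf{Step 4: $[H,Y_P^{30}]=0$.} Clearly $Y_P^{30}$ commutes with $(Y_P^{30})^2$, and with $d_9K$ by Step 2. The only term needing care is $4V_0(r)P_s$. Write
\[
Y_P^{30}V_0P_s=[Y_P^{30},V_0]P_s+V_0Y_P^{30}P_s .
\]
The last term vanishes because $Y_P^{30}P_s=0$. The commutator is $[Y_P^{30},V_0]=-\frac{i\hbar V_0'}{2r}(\vec\Sigma,\vec x)$, which again contains $\vec\Sigma$ and so is killed by $P_s$. The same argument on the other side gives $V_0P_sY_P^{30}=0$, so $[V_0(1-K),Y_P^{30}]=0$. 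This step is the only real obstacle: a direct expansion of $[H,Y_P^{30}]$ would be messy, and the projector argument is what avoids it.

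\textbf{Step 5: the remaining relations.} $[H,K]=0$ holds because $H$ is a polynomial in $K$ and spin-independent objects together with $(\vec\Sigma,\vec p)^2$, all of which commute with $K$. $H$, $K$ and $Y_P^{30}$ are built from inner products of vector operators, so they are rotational scalars with respect to $\vec J$; hence they commute with $\vec J^2$ and $J_3$. Finally, $[\vec J^2,J_3]=0$ is standard, which gives \eqref{eq:alg17.3} and \eqref{eq:alg17.4}.
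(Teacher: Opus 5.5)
Your proof is correct and follows the paper's approach. The paper likewise rewrites $H=(Y_P^{30})^2+d_9K+V_0(r)(1-K)$ and then reads off the quadratic relation separately on the triplet sector ($K=1$) and the singlet sector ($K=-3$, where $\vec\Sigma=0$); your projectors $P_t=\frac14(K+3)$ and $P_s=\frac14(1-K)$ turn that sector argument into a single operator identity, and your Step~4 adds an explicit check of $[H,Y_P^{30}]=0$, which the paper does not carry out but takes over from the classification.
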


\begin{proof}
The commutation relations follow from rotational invariance and the Pauli algebra. Since $K$ commutes with $\vec\Sigma$, we have $[K,Y_P^{30}]=0$. Moreover, $Y_P^{30}$ is a total rotational scalar, hence $[\vec{J}^2,Y_P^{30}]=[J_3,Y_P^{30}]=0$.

Using
\[
(\vec\Sigma,\vec p)^2
=
(\vec\sigma_1,\vec p)^2+(\vec\sigma_2,\vec p)^2
+2(\vec\sigma_1,\vec p)(\vec\sigma_2,\vec p),
\]
and
\[
(\vec\sigma_a,\vec p)^2=p^2,\qquad a=1,2,
\]
we obtain
\[
(Y_P^{30})^2
=
\frac12p^2+\frac12(\vec\sigma_1,\vec p)(\vec\sigma_2,\vec p).
\]
For the potentials \eqref{eq:alg17.1}, the Hamiltonian can be written as
\begin{equation}
\label{eq:alg17.6}
H
=
(Y_P^{30})^2+d_{9}K+V_0(r)(1-K).
\end{equation}
On the triplet sector $K=1$, this gives
\[
(Y_P^{30})^2=H-d_{9}.
\]
On the singlet sector $K=-3$, one has $\vec\Sigma=0$, hence $Y_P^{30}=0$. These two cases combine into the operator identity \eqref{eq:alg17.5}.
\end{proof}

The symmetry algebra in this case is essentially abelian once the rotational algebra is reduced to the commuting pair $\vec{J}^2,J_3$. However, it contains a non-trivial quadratic relation involving the Hamiltonian. Thus $Y_P^{30}$ acts as a square root of the shifted Hamiltonian on the triplet sector.

\subsection{Symmetry algebra of Case~19}

We consider the family of potentials
\begin{equation}
\label{eq:alg19.1}
\begin{aligned}
V_0(r) &= d_{12} + \frac{r^2 f_2(r)^2}{2 d_{10}^2}, \quad
V_1(r) = \frac{f_2(r)}{d_{10}}, \quad
V_2(r) = \frac{\hbar \big( 2 f_2(r) + r f_2'(r) \big)}{2 d_{10}}, \\
V_3(r) &= -\frac{r f_2(r)^2 + d_{10} \hbar f_2'(r)}{2 d_{10}^2 r}, \quad
V_4(r) = -\frac{1}{2}, \qquad V_5(r)=0.
\end{aligned}
\end{equation}

This family is obtained from Case~14 by setting
\[
\alpha_1=-\frac12,\qquad d_4=d_{10},\qquad d_6=d_{12}.
\]
Accordingly, the pseudo-scalar integrals of motion become
\begin{equation}
\label{eq:alg19.2}
Y_P^{33}
=
\frac{1}{2}
\left[
(\vec{\sigma}_1-\vec{\sigma}_2,\vec{p})
+
\frac{f_2(r)}{d_{10}}
(\vec{x},\vec{\sigma}_1\times\vec{\sigma}_2)
\right],
\end{equation}
and
\begin{equation}
\label{eq:alg19.3}
Y_P^{34}
=
f_2(r)(\vec{\sigma}_2-\vec{\sigma}_1,\vec{x})
+
d_{10}(\vec{\sigma}_1\times\vec{\sigma}_2,\vec p).
\end{equation}

Setting $\Delta=\vec{\sigma}_1-\vec{\sigma}_2,$ $
\Omega=\vec{\sigma}_1\times\vec{\sigma}_2,$ $
g(r)={f_2(r)}/{d_{10}}$, we find that
\begin{align*}
    Y_P^{33}&=
\frac12\big((\Delta,\vec p)+g(r)(\vec{x},\Omega)\big), \\
Y_P^{34}&=
d_{10}\big((\Omega,\vec p)-g(r)(\Delta,\vec x)\big).
\end{align*}

\begin{proposition}
The operators $K,Y_P^{33}$ and $Y_P^{34}$ generate a finite polynomial symmetry algebra. Their non-trivial commutation relations are
\begin{align}
\label{eq:alg19.4}
[K,Y_P^{33}]
&=
\frac{2i}{d_{10}}\,Y_P^{34},\\[4pt]
\label{eq:alg19.5}
[K,Y_P^{34}]
&=
-8i d_{10}\,Y_P^{33},\\[4pt]
\label{eq:alg19.6}
[Y_P^{33},Y_P^{34}]
&=
2i d_{10}(K+1)(H-d_{12}).
\end{align}
Moreover, the following quadratic identities hold:
\begin{align}
\label{eq:alg19.7}
(Y_P^{33})^2 &= H-d_{12},\\
\label{eq:alg19.8}
(Y_P^{34})^2 &= 4d_{10}^2(H-d_{12}).
\end{align}
\end{proposition}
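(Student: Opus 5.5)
The plan is to specialize the proof of the Case~14 proposition, which carries over verbatim, and to supply the one new ingredient: a quadratic identity that relates $(Y_P^{33})^2$ to the Hamiltonian.

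\textbf{Step 1: linear commutators and $Y_P^{34}$ in terms of $Y_P^{33}$.} With $\alpha_1=-\tfrac12$ we have $a=1-2\alpha_1=2$, $d_4=d_{10}$, $Y_P^{33}=Y_P^{26}$ and $Y_P^{34}=Y_P^{27}$. The identities $[K,\Delta_i]=4i\Omega_i$ and $[K,\Omega_i]=-4i\Delta_i$, used term by term exactly as in the Case~14 proof, give
\[
[K,Y_P^{33}]=\frac{4i}{a d_{10}}Y_P^{34}=\frac{2i}{d_{10}}Y_P^{34},
\qquad
[K,Y_P^{34}]=-4iad_{10}Y_P^{33}=-8id_{10}Y_P^{33}.
\]
These are \eqref{eq:alg19.4} and \eqref{eq:alg19.5}. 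Next, $\{K+1,\Delta_i\}=\{K+1,\Omega_i\}=0$ implies $\{K+1,Y_P^{33}\}=0$. Combining this with the first commutator, as in \eqref{eq:alg14.9}, yields
\[
Y_P^{34}=-i d_{10}(K+1)Y_P^{33}.
\]
From this, $(K+1)^2=4$ and the anticommutation relation give
\[
[Y_P^{33},Y_P^{34}]=2id_{10}(K+1)(Y_P^{33})^2,
\qquad
(Y_P^{34})^2=4d_{10}^2(Y_P^{33})^2 .
\]
Consequently \eqref{eq:alg19.6} and \eqref{eq:alg19.8} both follow once \eqref{eq:alg19.7} is established.

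\textbf{Step 2: the identity $(Y_P^{33})^2=H-d_{12}$.} This is the main work. I will expand
\[
4(Y_P^{33})^2=(\Delta,\vec p)^2+\{(\Delta,\vec p),\,g(\vec x,\Omega)\}+g^2(\vec x,\Omega)^2
\]
and treat the three pieces separately.

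The kinetic piece uses $(\vec\sigma_a,\vec p)^2=p^2$. It gives $(\Delta,\vec p)^2=2p^2-2(\vec\sigma_1,\vec p)(\vec\sigma_2,\vec p)$, so that
\[
\tfrac14(\Delta,\vec p)^2=-\tfrac{\hbar^2}{2}\Delta-\tfrac12(\vec\sigma_1,\vec p)(\vec\sigma_2,\vec p).
\]
This reproduces exactly the kinetic term of $H$ together with the $V_4=-\tfrac12$ term.

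For the potential-type piece I will use the two-spin product rule
\[
\Omega_i\Omega_j=2\delta_{ij}-(\sigma_1)_i(\sigma_1)_j-(\sigma_2)_i(\sigma_2)_j+\dots
\]
Reducing with $(\vec\sigma_a,\hat x)^2=1$ should give
\[
(\vec x,\Omega)^2=2r^2-2(\vec x,\vec\sigma_1)(\vec x,\vec\sigma_2).
\]
The $r^2g^2/2$ part of this matches $V_0-d_{12}$, and the $-\tfrac12 g^2$ part matches the $-f_2^2/(2d_{10}^2)$ contribution to $V_3$.

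\textbf{Step 3: the cross term (main obstacle).} The anticommutator $\{(\Delta,\vec p),g(\vec x,\Omega)\}$ has to produce three things. First, its first-order part must assemble into the spin-orbit term $\tfrac12 V_1(\vec\sigma_1+\vec\sigma_2,\vec L)$ with $V_1=g$. For this I will write $\Delta_i\Omega_j+\Omega_j\Delta_i$ via the Pauli algebra; it should reduce to an $\varepsilon_{ijk}$ structure in $(\sigma_1+\sigma_2)_k$, which turns $x_jp_i$ into $L$. Second, the residual derivative terms $-i\hbar\,\partial_i(gx_j)$ must produce $\tfrac{\hbar}{2d_{10}}(2f_2+rf_2')K$, i.e.\ the $V_2$ term. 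Third, they must produce the $-\hbar f_2'/(2d_{10}r)$ part of $V_3$.

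I will carry this out in index notation. The point to be careful about is that the symmetric spin product $\{\Delta_i,\Omega_j\}$ produces no $\delta_{ij}$ remnant that survives, i.e.\ that any such pieces cancel; the constant $d_{12}$ then appears only as the additive shift in $V_0$. Finally, I will compare all coefficients with the Case~19 potentials \eqref{eq:alg19.1}. As a consistency check, the singlet/triplet decomposition used in the Case~17 proof should agree with the result on each spin sector.
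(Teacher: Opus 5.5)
Your proposal is correct and follows the paper's argument. Relations \eqref{eq:alg19.4} and \eqref{eq:alg19.5}, and the reduction of \eqref{eq:alg19.6} and \eqref{eq:alg19.8} to \eqref{eq:alg19.7}, are the Case~14 proof with $a=2$, $d_4=d_{10}$. Relation \eqref{eq:alg19.7} is then verified by expanding $(Y_P^{33})^2$ and matching it term by term with $H$. Your bookkeeping for that expansion is right:
the $V_4=-\tfrac12$ term comes from $\tfrac14(\Delta,\vec p)^2$; the identity $\{\Delta_i,\Omega_j\}=-2\varepsilon_{ijk}(\sigma_1+\sigma_2)_k$ gives $\tfrac12 V_1(\vec\sigma_1+\vec\sigma_2,\vec L)$; and the derivative term $-i\hbar\,\Delta_i\Omega_j\,\partial_i(g x_j)$ gives $V_2$ and the $f_2'$ part of $V_3$. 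This accounting is in fact cleaner than the paper's displayed expansion, which drops the $-\tfrac12(\vec\sigma_1,\vec p)(\vec\sigma_2,\vec p)$ term.

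One slip to fix: the product rule you quote is wrong as written. The part of $\Omega_i\Omega_j$ symmetric in $i,j$ is $2\delta_{ij}-(\sigma_1)_i(\sigma_2)_j-(\sigma_1)_j(\sigma_2)_i$; your $(\sigma_a)_i(\sigma_a)_j$ terms would contract with $x_ix_j$ to cancel the $2r^2$ entirely. The formula $(\vec x,\Omega)^2=2r^2-2(\vec x,\vec\sigma_1)(\vec x,\vec\sigma_2)$ that you then state is nonetheless correct.
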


\begin{proof}
The relations \eqref{eq:alg19.4} and \eqref{eq:alg19.5} are the specialization of the Case~14 relations under $\alpha_1=-\frac12,\, d_4=d_{10},\, d_6=d_{12}$. Now we verify \eqref{eq:alg19.7}. A direct computation using the Pauli algebra and the canonical commutation
relations gives
\[
(Y_P^{33})^2
=
-\frac{\hbar^2}{2}\Delta
+\frac{\hbar}{2r^2}(\vec{\sigma}_1+\vec{\sigma}_2,\vec L)
+\frac{\hbar^2}{4r^2}(\vec{\sigma}_1,\vec{\sigma}_2)
+\frac{r^2 f_2(r)^2}{2d_{10}^2}
\]
\[
\qquad
+\frac{f_2(r)}{d_{10}}\,\frac12
(\vec{\sigma}_1+\vec{\sigma}_2,\vec L)
+\frac{\hbar\bigl(2f_2(r)+r f_2'(r)\bigr)}{2d_{10}}
(\vec{\sigma}_1,\vec{\sigma}_2)
-\frac{r f_2(r)^2+d_{10}\hbar f_2'(r)}{2d_{10}^2r}
(\vec{\sigma}_1,\vec x)(\vec{\sigma}_2,\vec x).
\]
Comparing this expression with the Hamiltonian for the potentials
\eqref{eq:alg19.1}, we obtain \eqref{eq:alg19.7}. Equation \eqref{eq:alg19.6} follows immediately from \eqref{eq:alg14.6} and \eqref{eq:alg19.7}.

Finally, using the quadratic relation
$
(Y_P^{27})^2=d_4^2a^2(Y_P^{26})^2
$
in Case 14 and \eqref{eq:alg19.7} yields \eqref{eq:alg19.8}.
\end{proof}

\section{Conclusion}

In this paper, we have continued the systematic investigation of superintegrability in the interaction of two non-relativistic spin-$\frac12$ particles in three-dimensional Euclidean space $E^3$. The initial classification of superintegrable systems involving spin interactions was carried out for the case of two particles, only one of which has spin \cite{Winternitz.c, wy3, DWY, YTW}. In a recent work \cite{TuncerYurdusen2025}, this framework was generalized to the case of two particles both of which have spin $1/2$, and the classification problem was solved for first-order scalar integrals of motion. The present paper extends this program to first-order pseudo-scalar integrals of motion.

Starting from the most general rotationally invariant Hamiltonian of Okubo--Marshak type \cite{OM}, we constructed the most general Hermitian first-order pseudo-scalar operator and imposed the commutation condition $[H,Y_P]=0.$
This led to an overdetermined system of determining equations for the radial potentials in the Hamiltonian and for the coefficient functions appearing in the pseudo-scalar operator. The determining equations were solved systematically by separating the main branches according to the vanishing or non-vanishing of the momentum-dependent potentials $V_4$ and $V_5$, and then analyzing the resulting subcases.

As in the scalar classification, gauge-induced branches play a special role. Whenever the determining equations reproduce integrals arising from a gauge transformation of a scalar Hamiltonian, the corresponding branch is recognized as gauge-equivalent to a spin-independent system and is not pursued further in the classification. This prevents redundant cases from appearing in the final list and leaves only genuinely new spin-dependent systems.

The main classification results are summarized in Theorem~\ref{thm:main}. We obtained $19$ superintegrable systems admitting first-order pseudo-scalar integrals of motion. Several of these systems admit more than one pseudo-scalar integral, and in some cases the integrals depend on arbitrary radial functions. We also investigated the symmetry algebras generated by the pseudo-scalar integrals for selected cases. These algebras are finite polynomial algebras involving the pseudo-scalar integrals, the spin-exchange operator $K=(\vec{\sigma}_1,\vec{\sigma}_2)$, and the rotational invariants. They provide additional information about the structure of the corresponding superintegrable systems. 

The present work completes the classification of first-order pseudo-scalar integrals for the two-spin Hamiltonian considered here. Several natural problems remain open. One direction is to use the obtained symmetry algebras to study representation theory and exact solvability for particular potentials. Another is to extend the classification to higher-order pseudo-scalar integrals. It would also be natural to continue the program by considering first- and higher-order vector, axial-vector, tensor and pseudo-tensor integrals of motion. These problems are expected to reveal further algebraic structures and new families of quantum superintegrable systems with spin.

\section*{Acknowledgments}
This work was financially supported by TÜBİTAK under the 1001 Program (Project No. 123F161). This work forms part of the PhD thesis of the first author, carried out at Hacettepe University Graduate School of Science and Engineering.

\appendix

\section{Determining equations}
\label{appendix-determining-equations}

In this appendix, we list the reduced determining equations used in Section~\ref{section2}. The third-order equations were already given in the main text, since they determine the main branching of the classification. Here we present the remaining lower-order determining equations. The list is reduced, meaning that equations that are redundant or can be derived from the displayed ones have been omitted.

We first give the determining equations coming from the second-order derivative terms. 
\begin{align}
\label{eq:2.2}
&r\,[\hbar (f_5+f_6) + (f_3+f_4)V_1 + \hbar (f_3+f_4)V_5]
+ \hbar V_4 (f_3' + f_4')=0, 
\\[3pt]
\label{eq:2.3}
&r\,[2(f_1+f_2)V_5 - (f_{10}+f_{11})(V_1-3\hbar V_5)]
- \hbar(1+V_4)(f_{10}'+f_{11}')=0,
\\[3pt]
\label{eq:2.4}
&r (f_5+f_6)(V_1+\hbar V_5)+ \hbar(1+V_4)(f_5'+f_6')=0, 
\\[3pt]
\label{eq:2.5} 
&11\hbar f_{11} V_5-\hbar r f_{11} V_4+ r^3[f_{10} V_1-2 f_1 V_5  ]
+ f_{10}[3\hbar r V_4 - 2\hbar r^3 V_5]+ \hbar r^2(1+V_4) f_{10}' \nonumber\\
&\qquad+ \hbar r^2 V_4 f_{11}'+ \hbar f_8'
- 2\hbar V_4 f_8'+ \hbar r^2 V_5 f_8'- 2\hbar r^2 V_4 f_9'+ \hbar r^4 V_5 f_9'+ \hbar f_8 V_4' \nonumber\\
&\qquad+ \hbar r^2 f_8 V_5'+ r f_9[\hbar( 2 - 8V_4 + 11 r^2 V_5 + r V_4' + r^3 V_5')-r^2 V_1 ]=0,
\\[3pt]
\label{eq:2.6}
&r\,[r^2 f_7 V_5 -2\hbar(f_3-f_4)V_5 + f_6(r^2 V_1 + \hbar(-2+V_4+r^2 V_5)) ]+ \hbar f_3 V_4'+ r^4 V_5 f_5' \nonumber\\
&\qquad+ \hbar( r^2 V_5 f_3'-2V_4 f_3' - f_4'- r^2 V_4 f_5' 
+ r^2 V_4 f_6' )+ \hbar r^2 f_3 V_5' \nonumber\\
&\qquad+ \hbar r f_5[r(12 r V_5 + V_4' + r^2 V_5')-7V_4 ]=0, 
\\[3pt]
\label{eq:2.7}
&r(2\hbar f_3 V_5 -6\hbar f_5 V_4 - 2 f_7 V_4 - 2 r^2 f_7 V_5)
+ \hbar( V_4 f_3' + r^2 V_5 f_4' - r^2 V_4 f_5' + r^4 V_5 f_6') \nonumber\\
&\qquad- \hbar( f_3 V_4' + r^2 f_5 V_4')+ r f_4( V_1 + 3\hbar V_5 + \hbar r V_5')+ \hbar r f_6( 1 + 8 r^2 V_5 + r^3 V_5')=0, 
\\[3pt]
\label{eq:2.8}
&r\big[f_8 V_1 - 2 f_1 V_4 + \hbar f_8 V_5 + f_{11}(\hbar - r^2 V_1 - \hbar V_4 + 6\hbar r^2 V_5)\big]+ r^3 V_5(2 f_2 + \hbar f_{10}) \nonumber\\
&\qquad- \hbar V_4(f_8' + 2 r^2 f_{10}')+ \hbar r^2 V_5 f_8'+ \hbar r^2 V_4 f_9'+ \hbar r^4 V_5 f_9'+ \hbar f_8 V_4'
+ \hbar r^2 f_8 V_5' \nonumber\\
&\qquad+ \hbar r f_9(1 + 6 V_4 + 8 r^2 V_5 + r V_4' + r^3 V_5')= 0,
\\[3pt]
\label{eq:2.9}
&r f_{10}(V_1 - 2\hbar V_5)+\hbar(f_{10}' + V_4 f_{10}' + V_4 f_{11}' + V_5 f_8')
- \hbar(f_9' - r^2 V_5 f_9' - f_8 V_5')\nonumber\\
&\qquad+ 11\hbar r f_{11} V_5- 2 r f_1 V_5+ r f_9(\hbar r V_5'-V_1 + 15\hbar V_5) =0,
\\[3pt]
\label{eq:2.10}
&2 r f_7 V_5 + r f_6(V_1 - \hbar V_5)+ \hbar(f_3 V_5' + f_6' + V_4 f_6' + V_5 f_3') + \hbar V_4 f_5'+ \hbar r^2 V_5 f_5'\nonumber\\
&\qquad+ \hbar r f_5(14 V_5 + r V_5')= 0,
\\[3pt]
\label{eq:2.11}
&\hbar(V_5 f_8' - f_9' + r^2 V_5 f_9' + f_8 V_5')- r f_{11}(V_1 - 14\hbar V_5)- \hbar f_{11}'+r f_{10} V_5 + 2 r f_2 V_5\nonumber\\
&\qquad 
+ r f_9( 15\hbar V_5-V_1  + \hbar r V_5')= 0,
\\[3pt]
\label{eq:2.12}
&\hbar r f_{10} V_4 - 11\hbar r^3 f_{10} V_5 + 2 r^3 f_2 V_5 - 4\hbar r f_8 V_5- f_{11}(r^3 V_1 + 3\hbar r V_4 - 2\hbar r^3 V_5) \nonumber\\
&\qquad- \hbar r^2(1+V_4)(f_{10}' + f_{11}')+ \hbar(f_8' - 2 V_4 f_8' + r^2 V_5 f_8')+ \hbar r^2(V_4' f_8 + f_8 V_5') \nonumber\\
&\qquad- 2\hbar r^2 V_4 f_9'+ \hbar r^4 V_5 f_9' + r f_9(-r^2 V_1 + \hbar(2 - 8V_4 + 11 r^2 V_5 + r V_4' + r^3 V_5'))= 0,
\\[3pt]
\label{eq:2.13}
&2\hbar r f_4 V_5-2\hbar r f_3 V_5  + 2 r^3 f_7 V_5
- r f_5[r^2 V_1 + \hbar(V_4 -2  + r^2 V_5)]- \hbar r^2 f_4 V_5'+ \hbar f_3' \nonumber\\
&\qquad+ \hbar(2 V_4 f_4' - r^2 V_5 f_4')- \hbar r^2 V_4 f_5' 
+ \hbar r^2 V_4 f_6'- \hbar r^4 V_5 f_6'- \hbar f_4 V_4' \nonumber\\
&\qquad- \hbar r f_6[r(12 r V_5 + V_4' + r^2 V_5')-7 V_4 ]= 0,
\\[3pt]
\label{eq:2.14}
& 2 r f_7 V_4-6\hbar r f_6 V_4 + 2\hbar r f_4 V_5 + 2 r^3 f_7 V_5
+ \hbar r^2 V_5 f_3'+ \hbar V_4 f_4'+ \hbar r^4 V_5 f_5'- \hbar r^2 V_4 f_6' \nonumber\\
&\qquad- \hbar f_4 V_4'- \hbar r^2 f_6 V_4'+r f_3(V_1 + 3\hbar V_5 + \hbar r V_5') + \hbar r f_5(1 + 8 r^2 V_5 + r^3 V_5')= 0,
\\[3pt]
\label{eq:2.15}
&r f_8 V_1 + 9\hbar r f_{11} V_4 + 2 r f_2 V_4- r^3(2 f_1 V_5 + \hbar f_{11} V_5)+ \hbar r f_8 V_5+ 2\hbar r^2 V_4 f_{11}'+ \hbar r^4 V_5 f_9' \nonumber\\
&\qquad+ f_{10}[r^3 V_1 + \hbar r(V_4 -1 - 6 r^2 V_5)]+ \hbar(r^2 V_5 f_8'-V_4 f_8' )+ \hbar r^2 V_4 f_9'+ \hbar f_8 V_4' \nonumber\\
&\qquad+ \hbar r^2 f_8 V_5'+ \hbar r f_9(1 + 6 V_4 + 8 r^2 V_5 + r V_4' + r^3 V_5')=0,
\\[3pt]
\label{eq:2.16}
&r f_5(V_1 - \hbar V_5)-2 r f_7 V_5 + \hbar V_5 f_4'
+ \hbar(1+V_4) f_5'+ \hbar V_4 f_6'+ \hbar r^2 V_5 f_6'+ \hbar f_4 V_5' \nonumber\\
&\qquad+ \hbar r f_6(14 V_5 + r V_5')=0,
\\[3pt]
\label{eq:2.17}
&(f_{10}+f_{11})(\hbar - r^2 V_1 - 4\hbar V_4 + 3\hbar r^2 V_5)+2 r^2(f_1+f_2)V_5-2(f_1+f_2)V_4   \nonumber\\
&\qquad- \hbar r V_4 (f_{10}' + f_{11}') =0,
\\[3pt]
\label{eq:2.18}
&\hbar r (f_5 - f_6)+ r (f_3 - f_4)(V_1 - \hbar V_5)+ 4 r f_7 V_4+ 4 r^3 f_7 V_5 + \hbar V_4 (f_4' - f_3')=0,
\\[3pt]
\label{eq:2.19}
&4 r f_7 V_5+ r(f_6 - f_5)(V_1 - \hbar V_5)+ \hbar(1 - V_4)(f_6' - f_5')=0,
\\[3pt]
\label{eq:2.20}
&2 r (f_1 - f_2)V_4-2\hbar r f_9- 2 r f_8 V_1+ 2 r^3 (f_1 - f_2)V_5
+ 2\hbar r f_8 V_5+ 2\hbar V_4 f_8' \nonumber\\
&\qquad
+ r(f_{10} - f_{11})(\hbar - r^2 V_1 + 2\hbar V_4 + \hbar r^2 V_5)
+ \hbar r^2 V_4 (f_{10}' - f_{11}')=0,
\\[3pt]
\label{eq:2.21}
&2 r f_9(V_1 - \hbar V_5)+ 2 r (f_1 - f_2)V_5+ r(f_{11} - f_{10})(V_1 - \hbar V_5)+ 2\hbar(1 - V_4) f_9' \nonumber\\
&\qquad+ \hbar(1 - V_4)(f_{11}' - f_{10}')=0,
\\[3pt]
\label{eq:2.22}
&
12\hbar r^3 f_5 V_5-13\hbar r f_5 V_4 - 2 r f_7 V_4 
+ \hbar( r^2 V_5 f_3'-V_4 f_3' )+ \hbar(r^2 V_5 f_4'-f_4' ) \nonumber\\
&\qquad+ \hbar r^2(r^2 V_5 f_5'-2 V_4 f_5' )+ \hbar r^2(V_4 f_6' + r^2 V_5 f_6')+ \hbar r^2 f_3 V_5'+ \hbar r^4 f_5 V_5' \nonumber\\
&\qquad
+ r f_4(V_1 + 5\hbar V_5 + \hbar r V_5')
+ r f_6[r^2 V_1 + \hbar(V_4 - 1 + 9 r^2 V_5 + r^3 V_5')]=0,
\\[3pt]
\label{eq:2.23}
&r f_8 V_1-r^3 f_{10} V_1- 2 r f_1 V_4 - 12\hbar r f_{10} V_4
+ 2 r^3 (f_1 + f_2)V_5 + 3\hbar r^3 f_{10} V_5 + 5\hbar r f_8 V_5 \nonumber\\
&\qquad
+ f_{11}( \hbar r - r^3 V_1 - 5\hbar r^3 V_5 )
+ f_9[ r^3 V_1 + \hbar r(14 V_4 -1 - 3 r^2 V_5) ] \nonumber\\
&\qquad
- \hbar r^2(1 + 3 V_4) f_{10}'- \hbar r^2 V_4 f_{11}'+ \hbar( V_4-1) f_8'
+ 3\hbar r^2 V_4 f_9'=0,
\\[3pt]
\label{eq:2.24}
&2 r f_7 V_4-13\hbar r f_6 V_4 + 12\hbar r^3 f_6 V_5 
+ \hbar(r^2 V_5 f_3'-f_3' )+ \hbar(r^2 V_5 f_4'-V_4 f_4')+ \hbar r^4 f_6 V_5'\nonumber\\
&\qquad+ \hbar r^2(V_4 f_5' + r^2 V_5 f_5')+ \hbar r^2(r^2 V_5 f_6'-2 V_4 f_6')+ r f_3(V_1 + 5\hbar V_5 + \hbar r V_5') \nonumber\\
&\qquad
+ \hbar r^2 f_4 V_5'+ r f_5[r^2 V_1 + \hbar(V_4 -1 + 9 r^2 V_5 + r^3 V_5')]=0,
\\[3pt]
\label{eq:2.25}
&
r^3 f_{11} V_1 + r f_8 V_1
+ 12\hbar r f_{11} V_4 + 2 r f_2 V_4
- 2 r^3 (f_1 + f_2)V_5 - 3\hbar r^3 f_{11} V_5 + 5\hbar r f_8 V_5 \nonumber\\
&\qquad
+ f_9(r^3 V_1 + \hbar r(14 V_4 -1 - 3 r^2 V_5))
+ f_{10}[r^3 V_1 + \hbar r(5 r^2 V_5-1)]+ \hbar r^2 V_4 f_{10}' \nonumber\\
&\qquad+ \hbar r^2(1 + 3 V_4) f_{11}'+ \hbar(V_4-1) f_8'+ 3\hbar r^2 V_4f_9'=0.
\end{align}

Now the determining equations coming from the first- and zeroth-order derivative terms are listed below.
\begin{align}
\label{eq:feq_2}
& 4 r^3 (f_3+f_4) V_3+ 24 \hbar^2 r^3 (f_5+f_6) V_5 
+ (\hbar^2 V_4 + 4 \hbar^2 r^2 V_5)(f_3' + f_4')\notag \\
&\qquad+ (7 \hbar^2 r^2 V_4 + 4 \hbar^2 r^4 V_5)(f_5' + f_6') + \hbar r^2 (f_3+f_4) V_1'+ \hbar r^4 (f_5+f_6) V_1' \notag \\
&\qquad+ \hbar^2 r^2 (f_3+f_4) V_5'+ \hbar^2 r^4 (f_5+f_6) V_5' - \hbar^2 r V_4 (f_3''+f_4'')+ \hbar^2 r^3 V_4 (f_5''+f_6'')=0, \\[3pt]
\label{eq:feq_4}
& 8 \hbar r^3 (f_5+f_6) V_1+ 16 r^3 (f_3+f_4) V_3
+ 8 r^5 (f_5+f_6) V_3+ 8 \hbar^2 r^3 (f_5+f_6) V_5 \notag \\
&\qquad + (-2 \hbar^2 + 2 \hbar r^2 V_1 - 4 \hbar^2 V_4 + 2 \hbar^2 r^2 V_5)(f_3'+f_4')+ 4 r^2 (f_3+f_4) V_0' \notag \\
&\qquad + (20 \hbar^2 r^2 + 2 \hbar r^4 V_1 + 40 \hbar^2 r^2 V_4 + 2 \hbar^2 r^4 V_5)(f_5'+f_6')+ 4 r^4 (f_5+f_6) V_0' \notag \\
&\qquad + 4 r^2 (f_3+f_4) V_2'+ 4 r^4 (f_5+f_6) V_2'+ 4 r^4 (f_3+f_4) V_3'
+ 4 r^6 (f_5+f_6) V_3' \notag \\
&\qquad + (2 \hbar^2 r + 4 \hbar^2 r V_4)(f_3''+f_4'')+ (10 \hbar^2 r^3 + 20 \hbar^2 r^3 V_4)(f_5''+f_6'') \notag \\
&\qquad + (\hbar^2 r^2 + 2 \hbar^2 r^2 V_4)(f_3^{(3)}+f_4^{(3)})+ (\hbar^2 r^4 + 2 \hbar^2 r^4 V_4)(f_5^{(3)}+f_6^{(3)})=0, \\[3pt]
\label{eq:feq_8}
& 
-6 \hbar^2 r f_6 + 2 \hbar r f_7
+ \hbar r (f_3+f_4) V_1
+ 6 \hbar r^3 f_6 V_1 - 2 r^3 f_7 V_1 
- 4 r (f_3 - f_4) V_2
\notag \\
&\qquad+ (- \hbar^2 r f_3 + 3 \hbar^2 r f_4
+ 24 \hbar^2 r^3 f_5 + 6 \hbar^2 r^3 f_6
+ 6 \hbar r^3 f_7) V_5- 4 \hbar r^2 V_4 f_7'+ 4 r^3 f_4 V_3  \notag \\
&\qquad+ (-4 \hbar^2 V_4 + 4 \hbar^2 r^2 V_5) f_3'
+ (-3 \hbar^2 + \hbar r^2 V_1 + \hbar^2 r^2 V_5) f_4'+ \hbar r^2 (f_3+f_4) V_1' \notag \\
&\qquad+ (-12 \hbar^2 r^2 V_4 + 4 \hbar^2 r^4 V_5) f_5'
+ (- \hbar^2 r^2 + \hbar r^4 V_1 + 8 \hbar^2 r^2 V_4 + \hbar^2 r^4 V_5) f_6' \notag \\
&\qquad+ \hbar^2 r^2 (f_3+f_4) V_5'
+ \hbar^2 r^4 (f_5+f_6) V_5' + (-30 \hbar^2 r f_5 + 6 \hbar^2 r f_6 - 12 \hbar r f_7) V_4\notag \\
&\qquad- \hbar^2 r V_4 f_3''
- \hbar^2 r (1+V_4) f_4''
- \hbar^2 r^3 V_4 f_5''
+ \hbar^2 r^3 V_4 f_6''+ \hbar r^4 (f_5+f_6) V_1'=0, \\[3pt]
\label{eq:feq_10}
&
r^3\Big[(-\hbar f_5 +9\hbar f_6 -2 f_7)V_1+(4 f_5 -4 f_6)V_2
+4 f_4 V_3 +(41\hbar^2 f_5 -9\hbar^2 f_6 +14\hbar f_7)V_5\Big] \notag\\
&\qquad+\Big[(15\hbar^2 r^2 V_4 +6\hbar^2 r^4 V_5)f_5'
+(11\hbar^2 r^2 +\hbar r^4 V_1 +7\hbar^2 r^2 V_4 -\hbar^2 r^4 V_5)f_6'\Big] \notag\\
&\qquad+\Big[(-3\hbar^2 V_4 +6\hbar^2 r^2 V_5)f_3'
+(-\hbar^2 +\hbar r^2 V_1 +\hbar^2 V_4 -\hbar^2 r^2 V_5)f_4'\Big] \notag\\
&\qquad+\hbar^2 r\Big[(3V_4)f_3'' + (1 - V_4)f_4'' + r^2(3V_4 f_5'' + (2+V_4)f_6'')\Big] \notag\\
&\qquad-2\hbar r^2 f_7'
+\hbar r^2 (f_3 + f_4 + r^2 f_5 + r^2 f_6)V_1'+\hbar^2 r^2 (3f_3 - f_4 + 3r^2 f_5 - r^2 f_6)V_5'=0, \\[3pt]
\label{eq:feq_11}
&r\Big[(2\hbar f_1 -6\hbar f_2 -16\hbar^2 f_9 +4\hbar^2 (f_{10}-f_{11}))V_1 
+(-8 f_1 +8 f_2 +32\hbar f_9 -8\hbar (f_{10}-f_{11}))V_2 \notag\\
&\qquad +(-10\hbar^2 f_1 +6\hbar^2 f_2 +32\hbar^3 f_9 -8\hbar^3 (f_{10}-f_{11}))V_5
 \notag\\
&\qquad +(8\hbar f_8 +8\hbar r^2 f_9 -4\hbar r^2 (f_{10}+f_{11}))V_3\Big]-8\hbar^2 f_2'+4\hbar^3 (1-2V_4)(f_{10}'-f_{11}')
 \notag\\
&\qquad +\Big(\tfrac{2\hbar^3}{r^2}(1-2V_4) -4\hbar^2 V_1+8\hbar V_2 +8\hbar^3 V_5\Big)f_8'-16\hbar^2 V_4 f_1'  \notag\\
&\qquad+\Big(-20\hbar^3 +40\hbar^3 V_4 -4\hbar^2 r^2 V_1 +8\hbar r^2 V_2 +8\hbar^3 r^2 V_5\Big)f_9'\notag\\
&\qquad -4\hbar (f_8 + r^2 f_9)V_0'+12\hbar (f_8 + r^2 f_9)V_2'+4\hbar r^2 (f_8 + r^2 f_9)V_3' \notag\\
&\qquad+\hbar^3 (1-2V_4)\Big[r(f_{10}'' - f_{11}'')-\tfrac{2}{r} f_8''
-10 r f_9''+ (f_8^{(3)} + r^2 f_9^{(3)})\Big]=0, \\[3pt]
\label{eq:feq_12}
&
r\Big[
(-4\hbar^2 f_5 +12\hbar^2 f_6 -8\hbar f_7)V_1
+(16\hbar f_5 -16\hbar f_6 +16 f_7)V_2 \notag\\
&\qquad+(12\hbar f_3 +4\hbar f_4 +8\hbar r^2 f_5)V_3
+(20\hbar^3 f_5 -12\hbar^3 f_6 +16\hbar^2 f_7)V_5
\Big] \notag\\
&\qquad+(-\hbar^2 V_1 +4\hbar V_2 +5\hbar^3 V_5 -\tfrac{4\hbar^3}{r^2}V_4)f_3'
+(3\hbar^2 V_1 -4\hbar V_2 -3\hbar^3 V_5 -\tfrac{2\hbar^3}{r^2})f_4' \notag\\
&\qquad+(-\hbar^2 r^2 V_1 +4\hbar r^2 V_2 +40\hbar^3 V_4 +5\hbar^3 r^2 V_5)f_5'
+(20\hbar^3 +3\hbar^2 r^2 V_1 -4\hbar r^2 V_2-3\hbar^3 r^2 V_5)f_6'  \notag\\
&\qquad+(-8\hbar^2 +16\hbar^2 V_4)f_7'
+4\hbar (f_4 + r^2 f_6) V_0'
+4\hbar (2f_3 - f_4 + 2r^2 f_5 - r^2 f_6) V_2' \notag\\
&\qquad+\Big(\tfrac{2\hbar^3}{r}(1+2V_4)\Big) f_3''
+\Big(\tfrac{2\hbar^3}{r}\Big) f_4''
+(10\hbar^3 r(1+2V_4)) f_5''
+(-2\hbar^2 r +4\hbar^2 r V_4) f_7'' \notag\\
&\qquad+4\hbar (r^2 f_3 + r^4 f_5) V_3'+\hbar^3 (2V_4 f_3^{(3)} + f_4^{(3)} + 2r^2 V_4 f_5^{(3)} + r^2 f_6^{(3)})=0, \\[3pt]
\label{eq:feq_29}
&
\hbar r (f_1 - f_2)+ r(-2\hbar f_8 V_1 -2\hbar r^2 f_9 V_1 +8 f_8 V_2 +8 r^2 f_9 V_2)+ \hbar r^2 (f_1' - f_2') \notag\\
&\qquad+ r\Big[(-4\hbar f_1 +4\hbar f_2)V_4
+(-8\hbar^2 f_{10} +8\hbar^2 f_{11} +12\hbar^2 f_9)V_4+ 2\hbar r^2 V_4 (f_2' - f_1') \notag\\
&\qquad+(-8\hbar^2 r^2 f_{10} +8\hbar^2 r^2 f_{11}
+6\hbar^2 f_8 +30\hbar^2 r^2 f_9)V_5 \Big]+ 2\hbar^2 r^2 V_4 (f_{11}' - f_{10}') \notag\\
&\qquad-2\hbar^2 V_4 f_8'+2\hbar^2 r^2 V_4 f_9'+4\hbar^2 r^2 V_5 f_8'+4\hbar^2 r^4 V_5 f_9'
+4\hbar^2 r^2 (f_8 + r^2 f_9) V_5'=0, \\[3pt]
\label{eq:feq_51}
&
2 r^3 (f_1+f_2)V_1+ 4 r^5 (f_{10}+f_{11})V_3 + 2 r^3 f_2 V_5
-24 \hbar^2 r^3 (f_{10}+f_{11})V_5+ 2 \hbar r^2 (f_1+f_2)' \notag\\
&\qquad+ 2 r^3 f_1 V_5 + 2 \hbar r^2 V_4 (f_1+f_2)'-12 \hbar^2 r^2 V_4 (f_{10}'+f_{11}')-2 \hbar^2 r^3 V_4 (f_{10}''+f_{11}'')=0, \\[3pt]
\label{eq:feq_53}
&
r\Big[(f_1+f_2) + 4(f_1+f_2)V_4 + 10\hbar (f_{10}+f_{11})V_4 \Big]-8\hbar r^3 (f_{10}+f_{11})V_5 \notag\\
&\qquad+ r^2 (f_1+f_2)' + 2 r^2 V_4 (f_1+f_2)' + 2\hbar r^2 V_4 (f_{10}'+f_{11}')=0, \\[3pt]
\label{eq:feq_65}
&
-2\hbar r f_1 + \hbar^2 r (f_{10}-f_{11}) + 6\hbar^2 r f_9- \hbar^2 r^3 (1+2V_4)(f_{11}''+f_9'')
+ \hbar^2 r (1-2V_4) f_8'' \notag\\
&\qquad+ r^3\Big[2 f_1 V_1 -4\hbar f_{11} V_1 -6\hbar f_9 V_1
-4 f_{10} V_2 +4 f_{11} V_2 -8 f_8 V_2+4 r^2 f_{11} V_3 \notag\\
&\qquad -4 f_8 V_3
+2\hbar f_1 V_4 +4\hbar^2 f_{10} V_4 -24\hbar^2 f_{11} V_4 -10\hbar f_2 V_4-36\hbar^2 f_9 V_4
+2\hbar f_1 V_5  \notag\\
&\qquad-6\hbar^2 f_{10} V_5 +10\hbar^2 f_{11} V_5
+8\hbar f_2 V_5 -4\hbar^2 f_8 V_5 +18\hbar^2 f_9 V_5 \Big] \notag\\
&\qquad+ r^2\Big[(2\hbar V_4) f_1' -2\hbar V_4 f_2' +3\hbar^2 f_8' -\hbar V_1 f_8'
-4\hbar^2 V_4 f_8' +3\hbar^2 V_5 f_8' \notag\\
&\qquad+\hbar^2 f_9' -\hbar V_1 r^2 f_9'
-20\hbar^2 V_4 f_9' +3\hbar^2 r^2 V_5 f_9' \Big]=0, \\[3pt]
\label{eq:feq_73}
&
r\Big[
(-\hbar^2 f_{10} + \hbar^2 f_{11} + 2\hbar f_2 -6\hbar^2 f_9)V_1
+(-2\hbar f_{10} -2\hbar f_{11} -2 f_2 +6\hbar f_9)V_1 r^2 \notag\\
&\qquad+(4\hbar f_8 -8 f_8 -4 r^2 f_{10} -4 f_8 r^2 +6\hbar f_9 r^2)V_2 
+(-4 r^2 f_{10} +4 r^2 f_{11} -8 f_8 -4 r^2 f_8)V_3 \notag\\
&\qquad+(4\hbar f_1 r^2 +4\hbar^2 f_{10} r^2 +16\hbar^2 f_{11} r^2 +2\hbar f_2 r^2 -8\hbar^2 f_8 +6\hbar^2 f_9 r^2)V_5
\Big] \notag\\
&\qquad+(6\hbar f_1 +12\hbar^2 f_{10} +8\hbar^2 f_{11} +2\hbar f_2 +12\hbar^2 f_9)V_4+2\hbar f_8 V_1' +2\hbar r^2 f_9 V_1' \notag\\
&\qquad+ r^2\Big[
2\hbar V_4 f_1' +2\hbar V_4 f_2'
+(-6\hbar^2 +14\hbar^2 V_4)f_{10}' +2\hbar^2 V_4 f_{11}' \notag\\
&\qquad+(-3\hbar^2 +\hbar V_1 +6\hbar^2 V_4 +\hbar^2 V_5)f_8'
+(-\hbar^2 +\hbar V_1 r^2 +2\hbar^2 V_4 +\hbar^2 r^2 V_5)f_9' \notag\\
&\qquad-2\hbar^2 f_8 V_5' -2\hbar^2 r^2 f_9 V_5'\Big] - \hbar^2 r (f_{10}''+f_8'')
+2\hbar^2 r V_4 (f_{10}''+f_8'')=0, \\[3pt]
\label{eq:feq_93}
&
r\Big[(2\hbar f_7 + \hbar (f_3 - f_4) + \hbar r^2 (f_5 - f_6))V_1 
-(4 (f_3 - f_4) +4 r^2 (f_5 - f_6))V_2 \notag\\
&\qquad-(6\hbar^2 (f_5 - f_6) +8\hbar f_7)V_4 -(3\hbar^2 (f_3 - f_4) +15\hbar^2 r^2 (f_5 - f_6))V_5
\Big] \notag\\
&\qquad+\hbar^2 (V_4 -2r^2 V_5)(f_3' - f_4')
+\hbar^2 r^2 (V_4 -2r^2 V_5)(f_5' - f_6') \notag\\
&\qquad+2\hbar r^2 f_7'(1-2V_4)
-2\hbar^2 r^2 \big[(f_3 - f_4)+r^2 (f_5 - f_6)\big] V_5'=0, \\[3pt]
\label{eq:feq_94}
&r\,\Big[(4\hbar^2 (f_5 - f_6) +4\hbar f_7)V_1-16\hbar V_2 (f_5 - f_6) +4\hbar V_3 (f_3 - f_4)-(12\hbar^3 (f_5 - f_6) +4\hbar^2 f_7)V_5\Big] \notag\\
&\qquad+(\hbar^2 V_1 -4\hbar V_2 -3\hbar^3 V_5)(f_3' - f_4') +r^2(\hbar^2 V_1 -4\hbar V_2 -3\hbar^3 V_5)(f_5' - f_6') \notag\\
&\qquad+8\hbar^2 (1-2V_4)f_7' -4\hbar V_2' \big[(f_3 - f_4) + r^2 (f_5 -f_6)\big]  +2\hbar^2 r f_7''(1-2V_4)=0, \\[3pt]
\label{eq:feq_95}
&
r\Big[(-4\hbar^2 f_5 +4\hbar^2 f_6 -4\hbar f_7)V_1
+16 f_7 V_2+(8\hbar f_3 -8\hbar f_4 +4\hbar r^2 f_5 -4\hbar r^2 f_6)V_3 \notag\\
&\qquad+(4\hbar^3 f_5 -4\hbar^3 f_6 +12\hbar^2 f_7)V_5
\Big]+ \hbar^3 \Big[
\frac{f_3' - f_4'}{r^2}
-2\frac{V_4(f_3' - f_4')}{r^2}
+(V_5- V_1)(f_3' - f_4') \Big] \notag\\
&\qquad+ 2\hbar (f_4 - f_3 + r^2 f_6 - r^2 f_5)(V_0' - V_2')+ 2\hbar r^2 (f_3 - f_4 + r^2 f_5 - r^2 f_6)V_3' \notag\\
&\qquad+ \hbar^3 \Big[
-10 f_5' +10 f_6'
-20 V_4(f_5' - f_6')
+(V_1 - V_5)r^2 (f_5' - f_6') \Big] \notag\\
&\qquad+ \hbar^3 \Big[
(2V_4-1)(f_3^{(3)} - f_4^{(3)})
+r^2(2V_4-1)(f_5^{(3)} - f_6^{(3)}) \Big] \notag\\
&\qquad+ \hbar^3 \Big[(2V_4-1)(f_3'' - f_4'')+5r(2V_4-1)(f_5'' - f_6'') \Big]=0, \\[3pt]
\label{eq:feq_98}
&
r^3\Big[
(-2 f_1 -2\hbar f_{10} +6\hbar f_{11} +10\hbar f_9)V_1
+(4 f_{10} -4 f_{11} -8 f_9)V_2 \notag\\
&\qquad+(-4 f_{11} +4 f_8)V_3
+(2\hbar f_1 +24\hbar^2 f_{10} -12\hbar^2 f_{11} -12\hbar f_2 -50\hbar^2 f_9)V_5
\Big] \notag\\
&\qquad+ r^2\Big[
-2\hbar (f_1 + V_4 f_1')
-2\hbar V_4 f_2'
+(-\hbar^2 +4\hbar^2 V_4)f_{10}'
+(7\hbar^2 +8\hbar^2 V_4)f_{11}' \notag\\
&\qquad+(-\hbar^2 +\hbar V_1 +4\hbar^2 V_4 -7\hbar^2 V_5)f_8'
+(11\hbar^2 +\hbar V_1 -8\hbar^2 V_4 -7\hbar^2 r^2 V_5)f_9'
\Big] \notag\\
&\qquad+ \hbar^2 r^2 (f_8'' +2 f_9'')
+ \hbar^2 r^3 (f_{11}'' +2 f_9'')
+ 2\hbar^2 r^3 V_4 (f_{11}'' - f_9'')
+ \hbar^2 r (1-4V_4)f_8'' \notag\\
&\qquad-4\hbar^2 r^2 (f_8 + r^2 f_9)V_5=0.
\end{align}

\end{document}